\pgfplotsset{compat = newest}
\pgfplotsset{my style/.append style={axis x line=middle, axis y line=
middle, xlabel={$x$}, ylabel={$y$}, axis equal }}
\tikzset{cross/.style={cross out, draw=black, fill=none, minimum size=2*(#1-\pgflinewidth), inner sep=0pt, outer sep=0pt}, cross/.default={2pt}}
 \date{}
\title{\large \bf ON THE MOMENTS OF MOMENTS OF RANDOM MATRICES AND EHRHART POLYNOMIALS}
\author{\small THEODOROS ASSIOTIS, EDWARD ERIKSSON AND WENQI NI}
\newcommand\shorttitle{On the moments of moments of random matrices and Ehrhart polynomials}
\newcommand\authors{Theodoros Assiotis, Edward Eriksson and Wenqi Ni}
\ifodd\value{page}
\authors
\shorttitle
\newtheorem{thm}{Theorem}[section]
\newtheorem{lem}[thm]{Lemma}
\newtheorem{defn}[thm]{Definition}
\newtheorem{rmk}[thm]{Remark}
\newtheorem{prop}[thm]{Proposition}
\begin{document}

\maketitle

\begin{abstract}
There has been significant interest in studying the asymptotics of certain generalised moments, called the moments of moments, of characteristic polynomials of random Haar-distributed unitary and symplectic matrices, as the matrix size $N$ goes to infinity. These quantities depend on two parameters $k$ and $q$ and when both of them are positive integers it has been shown that these moments are in fact polynomials in the matrix size $N$. In this paper we classify the integer roots of these polynomials and moreover prove that the polynomials themselves satisfy a certain symmetry property. This confirms some predictions from the thesis of Bailey \cite{EmmaThesis}. The proof uses the Ehrhart-Macdonald reciprocity for rational convex polytopes and certain bijections between lattice points in some polytopes.
\end{abstract}

\tableofcontents

\section{Introduction}

\subsection{History and main results}
We let
\
$$P_{\mathbb{G}(N)}(\theta;\mathbf g) = \det\left(\mathbf I-\mathbf ge^{-i\theta}\right), \ \ \theta \in [0,2\pi],$$ denote the characteristic polynomial, along the unit circle, of a matrix $\mathbf g \in \mathbb{G}\left(N\right)$ where $\mathbb{G}(N)$ is either the group of $N \times N$ unitary matrices $\mathbb{U}(N)$ or the group of $2N\times 2N$ unitary symplectic matrices $\mathbb{SP}(2N)$. We equip these groups with the normalized Haar probability measure and denote by $\mathbb{E}_{\mathbb{G}(N)}$ the expectation with respect to the relevant measure. We also write $\mathsf{\Theta}$ for a uniform random variable on $[0,2\pi]$ and denote by $\mathbf{E}_{\mathsf{\Theta}}$ the corresponding expectation. We then define the following:

\begin{align*}
 \textnormal{MoM}_{\mathbb{G}(N)}(k;q) = \mathbb{E}_{\mathbf{g}\in \mathbb{G}(N)}\left[\left(\mathbf{E}_{\mathsf{\Theta}}\left[|P_{\mathbb{G}(N)}(\mathsf{\Theta};\mathbf{g})|^{2q}\right]\right)^k\right]= \mathbb{E}_{\mathbf{g}\in \mathbb{G}(N)}\left[\left(\frac{1}{2\pi}\int_0^{2\pi}e^{2q \log |P_{\mathbb{G}(N)}(\theta;\mathbf{g})|}d\theta\right)^k\right].   
\end{align*}

These quantities are known as the moments of the moments of the characteristic polynomial $P_{\mathbb{G}(N)}$, or also as the moments of the partition function of the random field $\log |P_{\mathbb{G}(N)}(\cdot;\mathbf{g})|$,
see \cite{FyodorovKeating, FyodorovHiaryKeating,BaileyKeating,ABK,EmmaThesis,CbetaE}.

There has been a lot of interest in the asymptotics of these quantities (and generalisations) as $N$ goes to infinity and by now a great deal is known in the unitary case, see for example \cite{KeatingSnaith, KRRR,  ClaeysKrasovsky, Fahs, BaileyKeating, AssiotisKeating, BGR, CbetaE, KeatingWong} . This study was initiated in papers by Fyodorov-Keating \cite{FyodorovKeating} and Fyodorov-Hiary-Keating \cite{FyodorovHiaryKeating} who made precise conjectures for their asymptotic behaviour and noted that there should be a ``freezing transition" in the asymptotics depending on the value of $kq^2$, see \cite{FyodorovKeating, FyodorovHiaryKeating, KeatingWong}. This problem is also closely related to the theory of certain random measures, called Gaussian Multiplicative Chaos (GMC), see \cite{Berestycki, Webb, NikulaSaksmanWebb, Remy,KeatingWong} and in particular to the moments of the total mass of the GMC on the unit circle, see \cite{FyodorovBouachaud,Remy,KeatingWong}.

The study of the moments of the partition function of $\log|P_{\mathbb{U}(N)}\left(\theta;\mathbf{g}\right)|$ by Fyodorov-Keating \cite{FyodorovKeating} and Fyodorov-Hiary-Keating \cite{FyodorovHiaryKeating} was motivated by considerations of the asymptotic distribution of its maximum on the unit circle, as $N  \to \infty$. These authors made the following remarkable conjecture:
\begin{align*}
 \max_{\theta\in [0,2\pi]}\log|P_{\mathbb{U}(N)}\left(\theta;\mathbf{g}\right)|=\log N-\frac{3}{4}\log\log N+\mathfrak{m}_N(\mathbf{g}),
\end{align*}
where\footnote{Here we slightly abuse notation as $\mathbf{g}$ is actually dependent on $N$; it is Haar-distributed on $\mathbb{U}(N)$.} $\mathfrak{m}_N(\mathbf{g})$ is a sequence of random variables converging in law to a limiting random variable with explicit distribution, see \cite{FyodorovKeating,FyodorovHiaryKeating}. There has been significant progress on this problem, most notably in \cite{ArguinBeliusBourgade,PaquetteZeitouni,ChhaibiMadauleNajnudel}, but the full conjecture is still unresolved.

There are also analogous conjectures of Fyodorov-Keating \cite{FyodorovKeating} and Fyodorov-Hiary-Keating \cite{FyodorovHiaryKeating} for the distribution of the local maxima on the critical line of the Riemann $\zeta$ function , see \cite{NajnudelRiemann,ABBRS,Harper, ABR} for progress on these problems. Moreover, it is possible to define moments of moments of the Riemann $\zeta$ function and these have been studied in \cite{BaileyKeatingRiemann}.

In the symplectic case much less is known currently and as far as we are aware there is no precise conjecture for the asymptotics of the maximum of $\log|P_{\mathbb{SP}(2N)}\left(\theta;\mathbf{g}\right)|$. Nevertheless, the asymptotics of $ \textnormal{MoM}_{\mathbb{SP}(2N)}(k;q)$ for positive integer parameters $k,q \in \mathbb{N}$ were established in \cite{ABK} and connections to GMC in \cite{ForkelKeating}, see also \cite{KeatingWong}.

It is a remarkable fact that, for fixed $k,q\in \mathbb{N}$, the function $N\mapsto \textnormal{MoM}_{\mathbb{G}(N)}(k;q)$ is a polynomial. The following theorem was first proven for the unitary case in \cite{BaileyKeating} and for the symplectic case in \cite{ABK}.

\begin{thm}\label{PolyThm}
Let $k,q \in \mathbb{N}$. Then there exist polynomials $\mathsf{P}_{\left(k;q\right)}^{\mathbb{U}}$ and $\mathsf{P}_{\left(k;q\right)}^{\mathbb{SP}}$, of degrees  $k^2q^2-\left(k-1\right)$ and $kq\left(2kq+1\right)-k$ respectively, such that for all $N\in \mathbb{N}$:
\begin{align*}
  \mathsf{P}_{\left(k;q\right)}^{\mathbb{U}}(N)  &=\textnormal{MoM}_{\mathbb{U}(N)}(k;q),\\
   \mathsf{P}_{\left(k;q\right)}^{\mathbb{SP}}(N) &=\textnormal{MoM}_{\mathbb{SP}(2N)}(k;q).
 \end{align*}
\end{thm}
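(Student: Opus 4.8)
The plan is to establish the polynomiality by first obtaining an explicit combinatorial expression for $\textnormal{MoM}_{\mathbb{G}(N)}(k;q)$ that manifestly counts lattice points. First I would expand the $k$-th power of the average over $\mathsf{\Theta}$ as a $k$-fold integral, turning $\textnormal{MoM}_{\mathbb{G}(N)}(k;q)$ into
\begin{align*}
\frac{1}{(2\pi)^k}\int_{[0,2\pi]^k}\mathbb{E}_{\mathbf g\in\mathbb{G}(N)}\left[\prod_{j=1}^k \left|P_{\mathbb{G}(N)}(\theta_j;\mathbf g)\right|^{2q}\right]d\theta_1\cdots d\theta_k.
\end{align*}
The inner expectation is a ratio/average of products of characteristic polynomials evaluated at $2kq$ points (counting multiplicities $q$ at each of $e^{i\theta_1},\dots,e^{i\theta_k}$ and their conjugates), and for this there are known closed-form evaluations: in the unitary case the Bump--Gamburd / Conrey--Farmer--Keating--Rubinstein--Snaith formulae express it as a sum over partitions fitting in a box, or as a determinant/contour-integral expression with a Cauchy-type kernel; in the symplectic case one uses the analogous dual-Cauchy or Weyl-integration identities. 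The key point is that each such formula has the schematic form of a Schur-like function $s_{\lambda}\big(\underbrace{1,\dots,1}_{N}\big)$ summed against coefficients independent of $N$, and $s_\lambda(1^N)$ is itself a polynomial in $N$ (a product of factors $(N+c)/h$ over the cells of $\lambda$).

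Next I would carry out the $\theta$-integration. Writing $x_j=e^{i\theta_j}$, the integrand becomes a Laurent polynomial in the $x_j$, and $\frac{1}{(2\pi)^k}\int_{[0,2\pi]^k}(\cdot)\,d\theta$ simply extracts the constant term. So $\textnormal{MoM}_{\mathbb{G}(N)}(k;q)$ equals a constant term of an explicit Laurent-polynomial expression whose coefficients are polynomials in $N$; extracting a constant term is a finite linear operation that preserves polynomiality in $N$. Concretely, this identifies the moment of moments with a weighted count of integer points $\mathbf m=(m_1,\dots,m_k)$ (the exponent vectors surviving the constant-term extraction) inside a region cut out by the support conditions of the Schur functions, weighted by the polynomials $s_\lambda(1^N)$ — and this is precisely the lattice-point-counting picture that will later feed into the Ehrhart-polynomial machinery. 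Summing finitely many polynomials in $N$ yields a polynomial in $N$.

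Finally I would pin down the degree. The degree is governed by the term of top order in $N$, which comes from the partition $\lambda$ (equivalently the lattice point $\mathbf m$) maximizing $\deg_N s_\lambda(1^N)=|\lambda|$ subject to the box constraints dictated by the combinatorial formula. In the unitary case the relevant box is roughly $(kq)\times(kq)$ with a linear constraint from the constant-term extraction removing one ``row'', giving $\deg = k^2q^2-(k-1)$; in the symplectic case the self-dual symmetry forces partitions into a staircase-type region of size $kq(2kq+1)$, again with $k$ linear constraints from the $\theta$-integration, giving $\deg = kq(2kq+1)-k$. The main obstacle I expect is bookkeeping the combinatorial formula in a form where (i) the dependence on $N$ is cleanly isolated as $s_\lambda(1^N)$ (or an equivalent product formula), and (ii) the constant-term/support conditions are translated exactly into the polytope constraints, so that the degree count — in particular the precise $-(k-1)$ and $-k$ corrections — comes out right rather than off by the number of active linear relations. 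Verifying that no cancellation occurs among the top-degree contributions (so the degree is attained, not merely bounded) is the delicate endpoint of the argument.
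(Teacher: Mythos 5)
Your overall plan --- rewrite $\textnormal{MoM}_{\mathbb{G}(N)}(k;q)$ as a $k$-fold $\theta$-integral of a moment of products of characteristic polynomials, evaluate the inner expectation by Bump--Gamburd or CFKRS type formulae, and observe that the $\theta$-integration extracts a constant term --- is the right framework, and the contour-integral version is essentially the route of the original proofs in \cite{BaileyKeating,ABK} that the paper cites for this theorem. However, the step ``extracting a constant term is a finite linear operation that preserves polynomiality in $N$'' is where the real gap lies. The Laurent polynomial whose constant term you take does \emph{not} have a fixed, $N$-independent support: its degree in each $x_j=e^{i\theta_j}$ grows like $qN$, and in the Bump--Gamburd / dual-Cauchy picture the contributing partitions live inside an $N$-dependent box. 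The constant term is therefore a sum over an $N$-dependent index set --- precisely the lattice points in the $N$-th dilate of a rational convex polytope --- and for a rational non-integral polytope Ehrhart theory only delivers a \emph{quasi}-polynomial, not a polynomial. Propositions \ref{NonIntegral} and \ref{NonIntegralSymplectic} show that $\mathcal{V}_{(k;q)}^{\mathbb{G}}$ is non-integral for all $k\ge 2$, so the polynomiality here is exactly the nontrivial ``period collapse'' phenomenon, and your argument silently assumes the thing it needs to prove. The original proofs close this gap by applying L'H\^{o}pital's rule many times to the multiple integral; the paper's alternative combinatorial proof in Section \ref{SectionPrelim} (given only in the unitary case) instead reduces to the known polynomiality of the Gelfand--Tsetlin counting functions $\mathsf{f}_{\boldsymbol\lambda,\boldsymbol\mu}$ from \cite{LoeraMcAllister}.

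The degree claim has a second, partly acknowledged, gap. You correctly attribute the $-(k-1)$ and $-k$ corrections to the linear sum constraints imposed by the $\theta$-integration, but ruling out cancellation of the leading term is, in the lattice-point picture, exactly the statement that $\mathcal{V}_{(k;q)}^{\mathbb{G}}$ is full-dimensional with positive volume. As the paper remarks, this is not obvious in the symplectic case where the constraints (\ref{symplectic_low_constraints})--(\ref{symplectic_odd_constraint}) are genuinely intricate, and it is established by separate means in \cite{ABK}. Your box heuristic bounds the degree from above but does not show it is attained.
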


The proof of polynomiality in the theorem above uses a representation of $\textnormal{MoM}_{\mathbb{G}(N)}(k;q)$ as a multiple integral of a certain symmetric rational function, to which a (large) number of applications of L'H\^{o}pital's rule need to be performed, which after doing the integration gives a polynomial $\mathsf{P}^{\mathbb{G}}_{(k;q)}$ in $N$ (basically one looks at the constant term of this rational function), see \cite{BaileyKeating,ABK}. We note that the argument is rather general and applies to many variants essentially verbatim, but on the other hand is non-explicit. In particular, it is not possible to see what the degree of the polynomial $\mathsf{P}^{\mathbb{G}}_{(k;q)}$ is using this argument and both of the papers above \cite{BaileyKeating,ABK} use different methods (and different between the two) in order to obtain this.

In light of Theorem \ref{PolyThm}, the much-studied problem of asymptotics of $\textnormal{MoM}_{\mathbb{G}(N)}(k;q)$, for $k,q \in \mathbb{N}$, is simply the study of the degree and leading order coefficient of $\mathsf{P}^{\mathbb{G}}_{(k;q)}$. This coefficient is explicit in the unitary case for $k=1$ and $q\in \mathbb{N}$, as proven in \cite{KeatingSnaith}. Moreover, also in the unitary case, for $k=2$ and $q\in \mathbb{N}$, although not explicit, this coefficient is known to have a representation in terms of the Painlev\'{e} V equation, see \cite{ClaeysKrasovsky,BGR}. In the symplectic case much less explicit information is known except for a handful of small values of $(k,q)$ for which the polynomials $\mathsf{P}^{\mathbb{SP}}_{(k;q)}$ have been computed, see \cite{ABK,EmmaThesis}. Thus, to obtain an explicit formula for all the coefficients of these polynomials might seem like a hopeless task. Nevertheless, it is possible to prove non-trivial structural properties of these polynomials and this is the purpose of this paper.

Before stating our main results we give some explicit examples. We note that there is only one infinite family of such polynomials which is explicit. Namely, we have the following formula for $\mathsf{P}^{\mathbb{U}}_{(1;q)}$, first proven by Keating and Snaith in \cite{KeatingSnaith}, see also \cite{BumpGamburd}:
\begin{align}\label{k=1formula}
\mathsf{P}^{\mathbb{U}}_{(1;q)}(N)=\prod_{i=1}^q \prod_{j=1}^q\left(1+\frac{N}{i+j-1}\right).
\end{align}
The next simplest explicit cases are:
\begin{align*}
\mathsf{P}^{\mathbb{U}}_{(2;1)}(N)&=\frac{(N+1)(N+2)(N+3)}{6}, \\ \mathsf{P}^{\mathbb{U}}_{(3;1)}(N)&=\frac{(N+1)(N+2)(N+3)(N+4)(N+5)(N^2+6N+21)}{2520}.
\end{align*}
These have been computed by Bailey \cite{BaileyKeating,EmmaThesis} using a connection to Toeplitz determinants\footnote{This method was suggested by C. Hughes. There are other methods for computing these polynomials but none of them appear to be faster.}. A few more examples can be found in \cite{BaileyKeating,EmmaThesis}. The complexity of the computations rises very rapidly as $k,q$ increase and the most complicated case that has been computed explictly is $(k,q)=(2,3)$, which occupies several lines, see \cite{EmmaThesis}. In the symplectic case things are even more complicated. The following examples have been computed by Bailey \cite{EmmaThesis,ABK} using a connection to Toeplitz+Hankel determinants:
\begin{align*}
\mathsf{P}^{\mathbb{SP}}_{(1;1)}(N)&=\frac{1}{2}(N+1)(N+2),\\ 
\mathsf{P}^{\mathbb{SP}}_{(1;2)}(N)&=\frac{(N+1)(N+2)(N+3)(N+4)(2N+5)(23N^4+230N^3+905N^2+1650N+1512)}{181440},\\
\mathsf{P}^{\mathbb{SP}}_{(2;1)}(N)&=\frac{(N+1)(N+2)(N+3)(N+4)(3N^4+30N^3+127N^2+260N+420)}{10080}.
\end{align*}
The most complicated case that has been computed explicitly is $(k,q)=(2,2)$, see \cite{EmmaThesis}. 

There is a striking feature of these polynomials which is immediately evident: the appearance of the factors $(N+m)$, for $m\in \mathbb{N}$. Based on a few more examples from \cite{EmmaThesis}, Bailey made a precise prediction for the complete list of integer roots of these polynomials. Moreover, after plotting all the complex roots of these polynomials it is remarked in \cite{EmmaThesis} that they should also satisfy some vertical symmetry (in addition to the trivial symmetry about the real axis $\Im(s)=0$). Our two main results confirm these predictions.

\begin{thm}\label{MainThmU} Let $k,q \in \mathbb{N}$. Then $\mathsf{P}_{(k;q)}^{\mathbb{U}}$ has roots at $-1,-2,\dots,-\left(2kq-1\right)$ and it has no more integer roots. Moreover, it satisfies the following symmetry:
\begin{align}\label{unitary_symmetry}\mathsf{P}_{\left(k;q\right)}^{\mathbb{U}}\left(-kq+s\right)=\left(-1\right)^{k^2q^2-\left(k-1\right)}\mathsf{P}_{\left(k;q\right)}^{\mathbb{U}}\left(-kq-s\right), \ \ s \in \mathbb{C}.\end{align}

\end{thm}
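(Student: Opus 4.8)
The plan is to express $\textnormal{MoM}_{\mathbb{U}(N)}(k;q)$ as a lattice-point count in a dilated rational polytope and then invoke Ehrhart--Macdonald reciprocity. First I would recall the standard combinatorial representation: starting from the multiple-contour-integral formula for $\textnormal{MoM}_{\mathbb{U}(N)}(k;q)$, one extracts the constant term of a symmetric rational function and rewrites it, after clearing denominators, as a sum over nonnegative integer solutions of a system of linear equalities and inequalities whose right-hand sides are affine in $N$. Concretely one obtains $\textnormal{MoM}_{\mathbb{U}(N)}(k;q) = L_{\mathcal{P}}(N)$ for $N$ a positive integer, where $\mathcal{P}$ is a fixed rational convex polytope of dimension $d = k^2q^2 - (k-1)$ (matching the degree in Theorem~\ref{PolyThm}), and $L_{\mathcal{P}}(t) = \#\left(t\mathcal{P} \cap \mathbb{Z}^{\dim}\right)$ is its Ehrhart quasi-polynomial; the polynomiality in Theorem~\ref{PolyThm} forces $L_{\mathcal{P}}$ to be an honest polynomial, i.e.\ $\mathsf{P}^{\mathbb{U}}_{(k;q)} = L_{\mathcal{P}}$ as polynomials.

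With this in hand the symmetry \eqref{unitary_symmetry} follows from Ehrhart--Macdonald reciprocity, which states $L_{\mathcal{P}}(-t) = (-1)^{d}\, L_{\mathcal{P}^{\circ}}(t)$ where $L_{\mathcal{P}^{\circ}}$ counts interior lattice points. So the task reduces to identifying $L_{\mathcal{P}^{\circ}}(t)$ with the shifted polynomial $\mathsf{P}^{\mathbb{U}}_{(k;q)}(-2kq + t)$ (equivalently, after the substitution $z = -kq + s$, showing $L_{\mathcal{P}^{\circ}}(N) = L_{\mathcal{P}}(N - 2kq)$ for integer $N$). This I would prove by an explicit bijection between the interior lattice points of $N\mathcal{P}$ and all lattice points of $(N-2kq)\mathcal{P}$: the facet inequalities of $\mathcal{P}$ come in the two types ``coordinate $\geq 0$'' and a family of ``$\leq$''-type constraints indexed by the $2kq-1$ linear forms whose vanishing gives the integer roots $-1,\dots,-(2kq-1)$; passing to the strict interior tightens each facet by one, and the shift $N \mapsto N - 2kq$ absorbs exactly this tightening once one checks the bookkeeping of how many facets of each type contribute. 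The appearance of $2kq$ (rather than the number of facets) is the arithmetic content here and is where the structure of $\mathcal{P}$ — essentially that it is a product-like or "order-polytope-like" object built from $q \times q$ blocks interacting across $k$ copies — must be used.

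The integer roots statement then comes out of the same picture essentially for free: $z = -m$ is a root of $\mathsf{P}^{\mathbb{U}}_{(k;q)} = L_{\mathcal{P}}$ precisely when $(-m)\mathcal{P}$, interpreted via reciprocity as $(m-1)\mathcal{P}^{\circ}$ scaled appropriately, or directly via $L_{\mathcal{P}}(-m) = (-1)^d L_{\mathcal{P}^{\circ}}(m)$, has no interior lattice points; combined with the shift identity this says $m - 2kq \leq 0$, i.e.\ $m \in \{1,\dots,2kq-1\}$ gives a root, while for $m \geq 2kq$ the dilate $(m-2kq)\mathcal{P}$ contains the origin (or at least one lattice point) so the value is nonzero. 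One also needs that $\mathsf{P}^{\mathbb{U}}_{(1;q)}$ and the other known examples are consistent, which they visibly are: for $k=1$ formula \eqref{k=1formula} has roots exactly at $-1,\dots,-(2q-1)$ with the correct multiplicities, and the symmetry $i+j-1 \leftrightarrow 2q+1-(i+j-1)$ of the denominators is precisely \eqref{unitary_symmetry}.

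The main obstacle will be the very first step: producing a clean, correct description of the polytope $\mathcal{P}$ from the L'Hôpital/constant-term representation, and in particular verifying that its dimension is exactly $k^2q^2 - (k-1)$ and that it is genuinely a lattice-point count (no fractional weights surviving). A secondary difficulty is the combinatorial bijection establishing $L_{\mathcal{P}^{\circ}}(N) = L_{\mathcal{P}}(N - 2kq)$ — this is not a formal consequence of reciprocity but requires understanding the facet structure of $\mathcal{P}$ well enough to see why the interior-shrinking is uniformly by $2kq$ in the dilation parameter; I expect this to hinge on a symmetry of $\mathcal{P}$ itself (an involution sending $\mathcal{P}$ to a translate of itself) rather than on counting facets.
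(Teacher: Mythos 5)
Your overall strategy agrees with the paper's: represent $\textnormal{MoM}_{\mathbb{U}(N)}(k;q)$ as the lattice-point count $\mathcal{L}(\mathcal{V}_{(k;q)}^{\mathbb{U}},N)$ of a rational polytope, apply Ehrhart--Macdonald reciprocity, and construct an explicit coordinate-wise affine bijection showing $\overline{\mathcal{L}}(\mathcal{V}_{(k;q)}^{\mathbb{U}},N)=\mathcal{L}(\mathcal{V}_{(k;q)}^{\mathbb{U}},N-2kq)$ for $N>2kq$, which together with reciprocity gives the symmetry \eqref{unitary_symmetry}. The paper takes the polytope directly from the Gelfand--Tsetlin pattern description of $\textnormal{MoM}_{\mathbb{U}(N)}(k;q)$ already available in the literature, rather than from the constant-term representation you propose to start from, but this is a cosmetic difference: the target polytope is the same, and the bijection the paper exhibits is precisely the affine shift on each pattern entry that your facet-tightening heuristic points toward.

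There is, however, a genuine gap in your argument for the integer roots. You assert that ``combined with the shift identity this says $m-2kq\leq 0$ \dots gives a root,'' but the shift identity $\overline{\mathcal{L}}(N)=\mathcal{L}(N-2kq)$ is established by the bijection only for $N>2kq$, where both sides are honest counts. For $m\in\{1,\dots,2kq-1\}$ the right-hand side $\mathcal{L}(m-2kq)$ is not a count at all but a reciprocity value $\mathsf{P}_{(k;q)}^{\mathbb{U}}(m-2kq)$, which is exactly the kind of quantity you are trying to determine; feeding it back through reciprocity just reproduces the symmetry relation among the unknown values $\mathsf{P}_{(k;q)}^{\mathbb{U}}(-1),\dots,\mathsf{P}_{(k;q)}^{\mathbb{U}}(-(2kq-1))$ without forcing any of them to vanish. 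One must show directly that $\overline{\mathcal{L}}(\mathcal{V}_{(k;q)}^{\mathbb{U}},N)=0$ for $N\in\{1,\dots,2kq-1\}$, and this is the separate combinatorial input: a strict pattern contributing to $\overline{\mathcal{L}}$ has, along the V-shaped chain $v_{kq}^{(kq)}<\dots<v_1^{(1)}<\dots<v_1^{(kq)}$, exactly $2kq-1$ pairwise distinct integers all confined to $\{1,\dots,N-1\}$, which is impossible by pigeonhole when $N\leq 2kq-1$. (Equivalently, one can run the inverse of the bijection on a hypothetical strict pattern with small $N$ and derive that its image would need entries in an empty range; this is implicit in the well-definedness of the inverse map.) Your sketch as written does not supply this step. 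Also, as a minor point, your inequality should be $m-2kq<0$, not $\leq 0$: the case $m=2kq$ has $\mathcal{L}(0)=1\neq 0$, so $-2kq$ is not a root.
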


\begin{thm}\label{MainThmSp}
Let $k,q \in \mathbb{N}$. Then  $\mathsf{P}_{\left(k;q\right)}^{\mathbb{SP}}$ has roots at $-1,-2,\dots,-2kq$ and it has no more integer roots. Moreover, it satisfies the following symmetry:
\begin{align}\label{symplectic_symmetry}\mathsf{P}_{(k;q)}^{\mathbb{SP}}\left(-kq-\frac{1}{2}+s\right)=\left(-1\right)^{kq\left(2kq+1\right)-k}\mathsf{P}_{\left(k;q\right)}^{\mathbb{SP}}\left(-kq-\frac{1}{2}-s\right), \ \ s \in \mathbb{C}.\end{align}
\end{thm}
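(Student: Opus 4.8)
The key is to realize both $\textnormal{MoM}_{\mathbb{G}(N)}(k;q)$ as (essentially) Ehrhart quasi-polynomials of suitable rational convex polytopes, and then invoke Ehrhart--Macdonald reciprocity. Concretely, starting from the multiple-integral/combinatorial representation of $\textnormal{MoM}_{\mathbb{G}(N)}(k;q)$ used to prove Theorem \ref{PolyThm}, one should rewrite the relevant constant-term extraction as a count of lattice points in a dilate $N\cdot\mathcal{P}$ (or a disjoint union of such dilates) of a fixed rational polytope $\mathcal{P}$ living in the space of the integration variables; the degree statements in Theorem \ref{PolyThm} then match $\dim\mathcal{P}=k^2q^2-(k-1)$ (resp.\ $kq(2kq+1)-k$). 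Since the polynomials $\mathsf{P}^{\mathbb{G}}_{(k;q)}$ are \emph{honest} polynomials in $N$ (not merely quasi-polynomials), the Ehrhart quasi-polynomial has trivial period on the relevant arithmetic progression, and $\mathsf{P}^{\mathbb{G}}_{(k;q)}(N)=L_{\mathcal{P}}(N)$ for $N\in\mathbb{N}$.

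\medskip
\noindent Granting such a realization, the two halves of each theorem follow from two standard facts. First, Ehrhart--Macdonald reciprocity gives $L_{\mathcal{P}}(-N)=(-1)^{\dim\mathcal{P}}L_{\mathcal{P}^{\circ}}(N)$, where $L_{\mathcal{P}^{\circ}}$ counts interior lattice points. Thus the symmetry relations \eqref{unitary_symmetry} and \eqref{symplectic_symmetry} reduce to producing an explicit bijection between lattice points of $\mathcal{P}^{\circ}$ at level $N$ and lattice points of $\mathcal{P}$ at a shifted level --- i.e.\ a bijection witnessing $L_{\mathcal{P}^{\circ}}(N) = L_{\mathcal{P}}(N-2kq)$ in the unitary case (and $L_{\mathcal{P}^{\circ}}(N)=L_{\mathcal{P}}(N-2kq-1)$, up to the half-integer bookkeeping, in the symplectic case). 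Such a bijection is typically an affine involution on the integration variables (a reflection composed with a translation), which is exactly the kind of ``bijection between lattice points in some polytopes'' the abstract advertises; it will also, by the same token, identify the reflection axis, which is why $-kq$ (resp.\ $-kq-\tfrac12$) appears. Second, the integer-root statement: the vanishing of $\mathsf{P}^{\mathbb{G}}_{(k;q)}$ at $-1,\dots,-(2kq-1)$ (resp.\ $-2kq$) comes from reciprocity together with the observation that $\mathcal{P}^{\circ}$ has \emph{no} interior lattice points at small levels $N<$ (some threshold), forcing $L_{\mathcal{P}^{\circ}}(N)=0$ there; combined with the symmetry this also pins down the roots at $-1,\dots$ on the ``near'' side. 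To show there are \emph{no other} integer roots one argues positivity: for $N$ outside the stated window, either $L_{\mathcal{P}}(N)>0$ (genuine lattice points exist) or $L_{\mathcal{P}^{\circ}}(N)>0$, so $\mathsf{P}^{\mathbb{G}}_{(k;q)}$ cannot vanish there. One must also rule out that the listed linear factors occur with multiplicity $>1$ beyond what is claimed, but the symmetry already forces the root multiset to be symmetric, and the boundary-lattice-point count pins the multiplicities down to exactly one each.

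\medskip
\noindent The main obstacle, and the part requiring real work, is the first step: exhibiting $\mathsf{P}^{\mathbb{G}}_{(k;q)}$ as a genuine Ehrhart polynomial of an explicit rational polytope, because the representation proved via L'H\^opital's rule in \cite{BaileyKeating,ABK} is non-explicit and does not obviously present itself as a lattice-point count. One likely route is to go back to the branching-rule / Gelfand--Tsetlin or Weyl-integration description of $\mathbb{E}_{\mathbf{g}\in\mathbb{G}(N)}\big[\prod_j |P_{\mathbb{G}(N)}(\theta_j;\mathbf g)|^{2q}\big]$, expand the $2kq$-fold characteristic-polynomial moment as a sum over tuples of partitions fitting in boxes, and interpret the resulting sum --- after integrating out the $\theta_j$ via the constant-term map --- as counting integer points $(\lambda,\mu,\dots)$ subject to interlacing and box constraints, all of which scale linearly with $N$. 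For $\mathbb{SP}(2N)$ the same strategy applies with the symplectic branching rule (interlacing with an extra reflection), which is presumably the source of the half-integer shift $-kq-\tfrac12$ and of the slightly different degree $kq(2kq+1)-k$. Once the polytope $\mathcal{P}$ is in hand, identifying $\mathcal{P}^{\circ}$ and constructing the affine reflection is comparatively routine, though checking that the reflection maps $\mathcal{P}^{\circ}\cap\mathbb{Z}^{d}$ bijectively onto a dilate of $\mathcal{P}\cap\mathbb{Z}^{d}$ (rather than merely $\mathcal{P}$ onto itself) requires care with which defining inequalities are strict.
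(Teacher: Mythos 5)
Your proposal matches the paper's strategy precisely: realize $\mathsf{P}^{\mathbb{SP}}_{(k;q)}(N)$ as the lattice-point count $\mathcal{L}(\mathcal{V}^{\mathbb{SP}}_{(k;q)},N)$ of a rational polytope cut out by symplectic Gelfand--Tsetlin interlacing together with sum constraints, invoke Ehrhart--Macdonald reciprocity to reach negative $N$, obtain the roots $-1,\dots,-2kq$ by showing $\overline{\mathcal{L}}(\mathcal{V}^{\mathbb{SP}}_{(k;q)},N)=0$ for $1\le N\le 2kq$ via a pigeonhole argument along the strictly increasing chain $v_1^{(1)}<\cdots<v_1^{(2kq)}$, rule out further integer roots by positivity at $N\ge 0$ plus the symmetry, and prove the symmetry by an explicit bijection $\mathcal{SP}_{(k;q;N)}\to\mathcal{SP}^{\neq}_{(k;q;N+2kq+1)}$. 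Two small corrections: the paper's bijection is a pure coordinate-wise translation $v_i^{(j)}\mapsto v_i^{(j)}+2kq+2-2i-|kq-j|$ (between two \emph{different} dilation levels), not a ``reflection composed with a translation'' or an involution; and the theorem makes no claim about root multiplicities, so your remark about pinning them all to one is both unnecessary and false --- the integer roots can, and in examples do, occur with multiplicity greater than one.
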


\begin{rmk}
In the case $k=1$ and $q\in \mathbb{N}$, Theorem \ref{MainThmU} can be be readily checked using the explicit formula (\ref{k=1formula}) for $\mathsf{P}_{(1;q)}^{\mathbb{U}}$.
\end{rmk}

\begin{rmk}
By looking at the formula (\ref{k=1formula}) for $\mathsf{P}_{(1;q)}^{\mathbb{U}}$ we can observe that the integer roots of these polynomials can come with multiplicities. It would be interesting to determine these and explore whether there is a combinatorial interpretation for them. Moreover, by looking at $\mathsf{P}_{(1;2)}^{\mathbb{SP}}$ it can be seen that these polynomials can have rational (non-integer) roots. It would be interesting if these can be determined as well.
\end{rmk}

\subsection{Strategy of proof, connections and extensions}

The starting point in our analysis is a representation of $\textnormal{MoM}_{\mathbb{G}(N)}(k;q)$ in terms of the number of lattice points in some kind of Gelfand-Tsetlin type \cite{AssiotisKeating,ABK} polytope $\mathcal{V}^{\mathbb{G}}_{(k;q)}$ dilated by a factor of $N$. We then use the Ehrhart-Macdonald reciprocity \cite{EhrhartPoly1,EhrhartPoly2,Macdonald,StanleyBook,CombinatorialReciprocity} for rational convex polytopes to access the values of the polynomials $\mathsf{P}_{(k;q)}^{\mathbb{G}}$ on the negative integers. In order to prove the symmetry we find some natural bijections between the lattice points in a dilation of the interior of $\mathcal{V}^{\mathbb{G}}_{(k;q)}$ and the lattice points in a less dilated version of the polytope $\mathcal{V}^{\mathbb{G}}_{(k;q)}$ itself. This is also closely related to the notion of reflexive, and more generally Gorenstein, polytopes, see Remark \ref{RemarkReflexive} for more details.

We also note that, as proven in Section \ref{SectionPrelim}, the polytopes $\mathcal{V}_{(k;q)}^{\mathbb{G}}$ are in general, except for a few cases, not integral polytopes; they are simply rational. In particular, the polynomiality statement in Theorem \ref{PolyThm} is not an immediate consequence of Ehrhart theory \cite{StanleyBook,CombinatorialReciprocity}. It is well-known that the lattice point counting functions for rational convex polytopes are in general quasi-polynomials and not necessarily polynomials, see \cite{StanleyBook,CombinatorialReciprocity}. Nevertheless, there are many examples (and we consider some infinite families in this paper) of non-integral rational polytopes with polynomial lattice point counting functions. This phenomenon is known as period collapse, see for example \cite{McAllisterWoods,Woods}, and it is far from being completely understood. Despite it not being a direct consequence of Ehrhart theory, it is still possible to give a purely combinatorial proof of the polynomiality result in the unitary case, and we do this in Section \ref{SectionPrelim}.

In some sense, the prototypical result of the kind we prove here is a well-known theorem of Stanley and Ehrhart on the counting function of magic squares, see \cite{StanleyMagic,EhrhartMagic,StanleyBook}. Magic squares are simply the lattice points in dilations of the famous Birkhoff polytopes\footnote{It is well-known that these are integral polytopes as their vertices are the permutation matrices, see \cite{StanleyBook}. } of doubly-stochastic matrices, see \cite{StanleyBook, BeckPixton}.  Magic squares, and generalisations, also have applications in random matrix theory, in particular to the moments of secular coefficients of random matrix characteristic polynomials, see \cite{DiaconisGamburd,ConreyGamburd,NPS}. They also come up in the study of particular matrix integrals which arise from considerations of number theoretic problems over function fields, see \cite{KRRR}.

We close this introduction with some remarks.

\begin{rmk}[On the connection to reflexive \& dual-integral polytopes]\label{RemarkReflexive}
The notion of a reflexive polytope was introduced by Batyrev in the context of mirror symmetry in algebraic geometry, see \cite{Batyrev}. This has a few equivalent definitions, the original being in terms of the dual polytope, see \cite{Batyrev}. There is also a more general notion of a Gorenstein polytope of some index $r$ ($r=1$ corresponds to being reflexive), see \cite{CombinatorialReciprocity}. For a lattice polytope which contains the origin in its interior, the kind of symmetry property (\ref{unitary_symmetry}), (\ref{symplectic_symmetry}), we prove in this paper is a necessary and sufficient condition for the polytope to be Gorenstein, see \cite{CombinatorialReciprocity}. However, our polytopes are in general not integral, see Propositions \ref{NonIntegral} and \ref{NonIntegralSymplectic} and thus are not Gorenstein polytopes. On the other hand, from looking at the simplest possible non-integral case (which is $\mathcal{V}_{(2;1)}^{\mathbb{U}}$, see Figure \ref{fig:pyramid}), one might guess that $2kq\mathcal{V}_{(k;q)}^{\mathbb{U}}$ and $(2kq+1)\mathcal{V}_{(k;q)}^{\mathbb{SP}}$ are integral polytopes, in which case\footnote{After an appropriate translation to include the origin in their interior.} our symmetry property would imply that they are reflexive, see \cite{HibiDuality}. However, this is also not true in general and this follows from Proposition \ref{NonIntegralU+} and the form of the non-integral vertex in Proposition \ref{NonIntegralSymplectic}. The correct notion describing these polytopes is that of a dual-integral polytope as given in \cite{Reinert}.
\end{rmk}

\begin{rmk} [The orthogonal case] \label{RemarkOrthogonal}
It is possible to consider moments of moments over the special orthogonal group $\mathbb{SO}(2N)$ of $2N\times2N$ orthogonal matrices with determinant $+1$. There is an analogue of Theorem \ref{PolyThm}, see \cite{ABK}, as well. The corresponding polynomial $\mathsf{P}_{(k;q)}^{\mathbb{SO}}$ is of degree $kq(2kq-1)-k$, except in the exceptional case $(k,q)=(1,1)$ when it is of degree $1$. Again, this has a representation in terms of counting lattice points, not in a single rational convex polytope, but in a rational polyhedral complex. Now, it is known that Ehrhart-Macdonald reciprocity extends to some general polyhedral complexes under certain conditions, see for example \cite{HibiPolyhedral,HibiBook}. We believe that a generalisation\footnote{These results are stated for lattice polyhedral complexes there.} of these results should be applicable to the orthogonal case, but proving this carefully would involve some subtle technical issues and we do not pursue it in this paper. Assuming this claim, an adaptation of the arguments presented in the symplectic case should give the following analogue of our main results, for $(k,q)\neq (1,1)$: the complete list of integer roots of $\mathsf{P}_{(k;q)}^{\mathbb{SO}}$ is given by $\{-1,-2,\dots, -(2kq-2) \}$ and moreover $\mathsf{P}_{(k;q)}^{\mathbb{SO}}$ satisfies the following symmetry:
\begin{align*}
\mathsf{P}_{(k;q)}^{\mathbb{SO}}\left(-kq+\frac{1}{2}+s\right)=\left(-1\right)^{kq\left(2kq-1\right)-k}\mathsf{P}_{\left(k;q\right)}^{\mathbb{SO}}\left(-kq+\frac{1}{2}-s\right), \ \ s \in \mathbb{C}.
\end{align*}
\end{rmk}

\begin{rmk}
It would be interesting to understand the structure of the general roots of these polynomials. Some general properties and conjectures for roots of Ehrhart polynomials of lattice (in particular they do not directly apply to our setting) polytopes have appeared in \cite{RootsEhrhart1,RootsEhrhart2}.
\end{rmk}

\paragraph{Organisation of the paper}
In Section \ref{SectionPrelim} we collect some previous results which connect the problem at hand to the study of some rational convex polytopes. Additionally, we recall some classical results from Ehrhart theory. We moreover establish some basic properties of these polytopes and give an alternative combinatorial proof of the polynomiality theorem in the unitary case.
In Section 3 we prove Theorem \ref{MainThmU} . We follow an analogous strategy in Section 4 and prove Theorem \ref{MainThmSp}.

\paragraph{Acknowledgements}
We are grateful to the School of Mathematics of the University of Edinburgh for financial support through vacation scholarships.
We thank P. Zjaikin for taking part in some initial meetings for this project and are grateful to EPSRC for financial support through a vacation internship. We are grateful to E. Bailey for discussions and useful comments on a first draft of this paper. Finally, we are grateful to the referee for a careful reading of the paper and useful comments and suggestions which have improved the presentation. The tikz code for Figure \ref{fig:pyramid} was produced using $\mathsf{SageMath}$.

\section{Preliminaries}\label{SectionPrelim}

\subsection{General preliminaries}
For parameters $k,q\in \mathbb{N}$, it transpires that the moments of moments can be studied for $N \in \mathbb{N}$ by counting certain combinatorial objects. In the following subsections we will define those objects (and also prove some results of independent interest about them) and lay the foundation for extending to $N \in \mathbb{Z}$. We begin with some notation and definitions.

\begin{defn}
Let $\mathcal{T} \subseteq \mathbb{R}$ and $b \in \mathbb{R}$. Then, we define $\mathcal{T}_{\geq b} = \{t \in \mathcal{T}: t \geq b\}$ and define $\mathcal{T}_{>b}$ analogously.
\end{defn}

\begin{defn}
A signature $\boldsymbol\lambda$ of length $M$ is a tuple of $M$ integers where we have $\lambda_1 \geq \cdots \geq \lambda_M$. Let $\mathcal{S}_M^+$ be the set of all signatures of length $M$ with non-negative elements. For $\boldsymbol \lambda \in \mathcal{S}_M^+$ we define:
$$\left|\boldsymbol\lambda\right| = \lambda_1 + \cdots + \lambda_M.$$
We will use $\#$ for the number of elements in a set.
\end{defn}

We will also make frequent use of the following continuous analogue to the above definition.

\begin{defn} Let $M\in \mathbb{N}$.
Define the non-negative Weyl chamber of length $M$ by
$$\mathcal{W}_M^+ = \left\{(w_1,\ldots,w_M) \in \mathbb{R}^{M}: w_1 \geq \cdots \geq w_M \geq 0 \right\}.$$
As for signatures, we use the notation $|\cdot|$ for the sum of coordinates.
\end{defn}

Furthermore we need the classical notion of interlacing:

\begin{defn}
\label{interlacing}
A signature $\boldsymbol\mu \in \mathcal{S}_{M}^+$ is said to interlace with $\boldsymbol\lambda \in \mathcal{S}_{M-1}^+$ if the following inequalities are satisfied: $$\lambda_1 \geq \mu_1 \geq \lambda_2 \ge \cdots \geq \mu_{M-1} \geq \lambda_M.$$
A signature $\boldsymbol\mu \in \mathcal{S}_{M}^+$ is said to interlace with $\boldsymbol\lambda \in \mathcal{S}_{M}^+$ if the following inequalities are satisfied:

$$\lambda_1 \geq \mu_1 \geq \lambda_2 \ge \cdots \geq \lambda_{M} \geq \mu_{M}.$$
In either of these cases we write $\boldsymbol\mu \prec \boldsymbol\lambda$. If strict versions of the above inequalities are satisfied we say that the interlacing is strict. We define interlacing for elements of $\mathcal{W}_M^+$ in an entirely analogous fashion.
\end{defn}

We now recall some standard definitions and facts about rational convex polytopes and the associated Ehrhart theory. For a more detailed treatment see \cite{Ziegler,StanleyBook,CombinatorialReciprocity}.

\begin{defn}
A convex polytope $\mathcal{V} \subset \mathbb{R}^m$ is the set of solutions to a finite linear system of inequalities, provided that the solutions are bounded. Namely, $\mathcal{V} = \left\{\boldsymbol x \in \mathbb{R}^m: \boldsymbol \alpha \boldsymbol x \leq \boldsymbol a \right\},$ where $\boldsymbol \alpha$ is a matrix of real coefficients, $\boldsymbol a$ is a vector with real entries and a vector inequality holds if and only if inequality holds in each element. It is always the case that $\mathcal{V}$ is homeomorphic to a $d$-ball $\mathbb{B}^d$ and so we define the dimension of $\mathcal{V}$ by $\textnormal{dim}\left(\mathcal{V}\right) = d$.
\end{defn}

\begin{defn}
A face $\mathcal{F}$ of a polytope $\mathcal{V}$ is any subset of the form $\mathcal{F} = \{\boldsymbol v \in \mathcal{V}: \boldsymbol{\beta}^t\boldsymbol{v} = b \}$ where $b \in \mathbb{R}$, $\boldsymbol \beta \in \mathbb{R}^m$, and where for all $ \boldsymbol v \in \mathcal{V}$ we have $\boldsymbol{\beta}^t\boldsymbol{v} \leq b $. If a face of $\mathcal{V}$ is $0$-dimensional we call it a vertex of $\mathcal{V}$. If all coordinates of all vertices of $\mathcal{V}$ are rational we say that $\mathcal{V}$ is a rational polytope. We define integral or lattice polytopes analogously.
\end{defn}

\begin{defn}
\label{quasi_polynomial}
A function $g: \mathbb{Z} \rightarrow \mathbb{C}$ or $g: \mathbb{Z}_{\geq 0} \rightarrow \mathbb{C}$ is a quasi-polynomial if there exist polynomials $g_1,\ldots,g_{D\left(g\right)}$, with $D(g) \in \mathbb{N}$ depending only on $g$,  such that: $$g(x) = g_i(x), \textnormal{ whenever } x \equiv i \textnormal{ mod } D(g).$$
\end{defn}

We also need the following little observation.

\begin{lem}
\label{vanishing_quasi}
Let $g: \mathbb{Z} \rightarrow \mathbb{C}$ be a quasi-polynomial such that for all $x \in \mathbb{N}$ we have
$g(x) = 0$. 
Then, $g$ is identically zero on $\mathbb{Z}$.
\end{lem}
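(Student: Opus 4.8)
The plan is to exploit the fact that a quasi-polynomial is, on each residue class modulo its period, given by a single genuine polynomial, together with the classical fact that a polynomial with infinitely many zeros is the zero polynomial.

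First I would unpack the definition: let $D = D(g)$ be the period of $g$ and let $g_1,\dots,g_D$ be the associated constituent polynomials, so that $g(x) = g_i(x)$ whenever $x \equiv i \bmod D$. Fix a residue class $i \in \{1,\dots,D\}$. The set $\{x \in \mathbb{N} : x \equiv i \bmod D\}$ is infinite, and for every such $x$ we have $g_i(x) = g(x) = 0$ by hypothesis. Thus the polynomial $g_i$ has infinitely many roots, hence $g_i$ is identically the zero polynomial.

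Since this holds for each $i \in \{1,\dots,D\}$, we conclude that for every $x \in \mathbb{Z}$, writing $i$ for the residue of $x$ modulo $D$, we have $g(x) = g_i(x) = 0$. Therefore $g$ vanishes identically on $\mathbb{Z}$, as claimed.

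There is no real obstacle here; the only point requiring the slightest care is the observation that each arithmetic progression $\{x \in \mathbb{N} : x \equiv i \bmod D\}$ is infinite (which is immediate), so that the finitely many constituent polynomials are each forced to vanish everywhere rather than merely at finitely many points.
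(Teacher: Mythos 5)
Your proof is correct and takes essentially the same approach as the paper: you observe that each constituent polynomial $g_i$ vanishes on the infinite set $\{x \in \mathbb{N} : x \equiv i \bmod D\}$, hence is identically zero, so $g$ vanishes on all of $\mathbb{Z}$. Your write-up is simply a more explicit unpacking of the paper's one-line argument.
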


\begin{proof}
Since $g$ is a quasi-polynomial then by Definition \ref{quasi_polynomial}, each $g_i$ vanishes at infinitely many points and is thus identically zero.
\end{proof}

Throughout this paper we denote by  $\textnormal{int}\left(\mathcal{A}\right)$ the interior of a set $\mathcal{A}$ in Euclidean space. Moreover, we denote by $N\mathcal{A}$ the dilate of $\mathcal{A}$ by a factor of $N$, namely $N\mathcal{A} = \{Nv: v \in \mathcal{A}\}$. We then have the following classical result of Ehrhart, see \cite{EhrhartPoly1,EhrhartPoly2,StanleyBook}. 

\begin{thm}
\label{Ehrhart_polynomials}
Let $\mathcal{V} \subset \mathbb{R}^m$ be a rational convex polytope. Define the following quantities for any $N\in \mathbb{Z}_{\geq 0}$:
\begin{align*}
\mathcal{L}\left(\mathcal{V},N\right) 
&= \#\left(N\mathcal{V} \cap \mathbb{Z}^m\right),\\
\overline{\mathcal{L}}\left(\mathcal{V},N\right) 
&= \#\left(N \textnormal{int}\left(\mathcal{V}\right) \cap \mathbb{Z}^m\right).
\end{align*}
Then, $\mathcal{L}\left(\mathcal{V},\cdot\right)$ and $\overline{\mathcal{L}}\left(\mathcal{V},\cdot\right)$ are quasi-polynomials. Moreover, if $\mathcal{V}$ is an integral convex polytope they are polynomials.
\end{thm}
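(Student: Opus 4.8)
This is Ehrhart's classical theorem, and the standard argument I would follow reduces the problem to a rational simplex by triangulation, treats the simplex by homogenizing it to a simplicial cone and reading off a rational generating function, and then deduces the interior statement $\overline{\mathcal{L}}$ and the integral case as corollaries. As a first step I would take a pulling (placing) triangulation of $\mathcal{V}$; this subdivides $\mathcal{V}$ into finitely many $d$-simplices $\sigma_1,\dots,\sigma_s$ whose vertices are among the vertices of $\mathcal{V}$, so that each $\sigma_i$ and each intersection $\bigcap_{i\in I}\sigma_i$ (a face of the resulting simplicial complex) is again a rational polytope, and an integral one when $\mathcal{V}$ is. Because $N(\cdot)$ is a bijection of $\mathbb{R}^m$ for $N\ge1$ (and both sides equal $1$ at $N=0$), inclusion--exclusion gives
\[\mathcal{L}(\mathcal{V},N)=\sum_{\emptyset\neq I\subseteq\{1,\dots,s\}}(-1)^{|I|+1}\,\mathcal{L}\!\left(\textstyle\bigcap_{i\in I}\sigma_i,\;N\right),\qquad N\in\mathbb{Z}_{\ge0}.\]
Since quasi-polynomials of bounded period, and polynomials, are closed under $\mathbb{Z}$-linear combinations, it then suffices to prove the theorem for a rational simplex $\sigma=\operatorname{conv}\{v_0,\dots,v_d\}$ with affinely independent $v_i\in\mathbb{Q}^m$.

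Next I would set $w_i=(v_i,1)\in\mathbb{Q}^{m+1}$ and $C=\operatorname{cone}\{w_0,\dots,w_d\}$, a simplicial cone; for $N\ge1$ a lattice point $(x,N)$ lies in $C$ precisely when $x\in N\sigma$, and $C$ meets $\{x_{m+1}=0\}$ only at the origin, so $\mathcal{L}(\sigma,N)$ equals the number of lattice points of $C$ at height $N$ for every $N\ge0$. Choosing $\ell\in\mathbb{N}$ with $\ell v_i\in\mathbb{Z}^m$ for all $i$ (so $\ell=1$ when $\sigma$ is integral) and passing to the integral generators $u_i=\ell w_i$, the usual $\lambda_i=\lfloor\lambda_i\rfloor+\mu_i$ argument together with linear independence of the $u_i$ shows that every $p\in C\cap\mathbb{Z}^{m+1}$ is uniquely $p=p_0+\sum_i n_i u_i$ with $n_i\in\mathbb{Z}_{\ge0}$ and $p_0$ in the finite set $\Pi\cap\mathbb{Z}^{m+1}$, $\Pi=\{\sum_i\mu_i u_i:0\le\mu_i<1\}$. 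As each $u_i$ has last coordinate $\ell$, summing $t^{\,p_{m+1}}$ over $C\cap\mathbb{Z}^{m+1}$ then yields
\[\sum_{N\ge0}\mathcal{L}(\sigma,N)\,t^N=\frac{h(t)}{(1-t^\ell)^{d+1}},\qquad h(t)=\sum_{p_0\in\Pi\cap\mathbb{Z}^{m+1}}t^{\,(p_0)_{m+1}},\]
with $h$ a polynomial; and since every $p_0\in\Pi$ has height $\sum_i\mu_i\ell<(d+1)\ell$, one gets $\deg h\le(d+1)\ell-1$.

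To extract the quasi-polynomial I would expand $1/(1-t^\ell)^{d+1}=\sum_{j\ge0}\binom{j+d}{d}t^{\ell j}$, so that the coefficient of $t^M$ is the quasi-polynomial $c(M)$ equal to $\binom{M/\ell+d}{d}$ when $\ell\mid M$ and $0$ otherwise; crucially, on the class $\ell\mid M$ this equals $(M/\ell+d)\cdots(M/\ell+1)/d!$, which vanishes at $M=-\ell,-2\ell,\dots,-d\ell$. From $\mathcal{L}(\sigma,N)=\sum_r h_r\,c(N-r)$ and $\deg h\le(d+1)\ell-1$, every index $N-r$ occurring with $r>N$ lies strictly between $-(d+1)\ell$ and $0$, where $c$ vanishes, so $N\mapsto\sum_r h_r\,c(N-r)$ is a \emph{single} quasi-polynomial of period dividing $\ell$ and degree $\le d$ agreeing with $\mathcal{L}(\sigma,N)$ for all $N\ge0$; when $\sigma$ is integral $\ell=1$ and this is an honest polynomial. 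Feeding this back through the displayed inclusion--exclusion proves the statements about $\mathcal{L}(\mathcal{V},\cdot)$. I expect this to be the only genuinely delicate point: a rational generating function of the above shape by itself forces agreement with a quasi-polynomial only \emph{eventually}, and one must use the precise degree bound on $h$ and the vanishing of $\binom{M/\ell+d}{d}$ at $M=-\ell,\dots,-d\ell$ to rule out discrepancies at the small values of $N$.

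Finally, for the interior, if $\mathcal{V}$ is full-dimensional then $N\operatorname{int}(\mathcal{V})\cap\mathbb{Z}^m=(N\mathcal{V}\cap\mathbb{Z}^m)\setminus(N\partial\mathcal{V}\cap\mathbb{Z}^m)$, and $\partial\mathcal{V}$ is the union of the facets of $\mathcal{V}$; inclusion--exclusion writes $\#(N\partial\mathcal{V}\cap\mathbb{Z}^m)$ as a $\mathbb{Z}$-linear combination of $\mathcal{L}(\mathcal{G},N)$ over faces $\mathcal{G}$ of $\mathcal{V}$ of dimension $\le d-1$, each a rational (resp. integral) polytope to which the part already proven applies, so $\overline{\mathcal{L}}(\mathcal{V},N)=\mathcal{L}(\mathcal{V},N)-\#(N\partial\mathcal{V}\cap\mathbb{Z}^m)$ is a quasi-polynomial, and a polynomial when $\mathcal{V}$ is integral. (For lower-dimensional $\mathcal{V}$, reading $\operatorname{int}$ as relative interior, the same argument runs with the relative boundary; with the ambient interior $\overline{\mathcal{L}}\equiv0$.) Apart from the extraction step, the remaining ingredients are routine: that a pulling triangulation introduces no new vertices and that intersections of its maximal simplices are faces of the complex, plus the half-open parallelepiped decomposition of a simplicial cone.
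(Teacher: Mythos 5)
The paper does not supply a proof of Theorem~\ref{Ehrhart_polynomials}; it is quoted as a classical result of Ehrhart with references to \cite{EhrhartPoly1,EhrhartPoly2,StanleyBook}, so there is no ``paper's approach'' to compare against. Your argument is a correct rendition of the standard proof found in those references: triangulate the rational polytope without introducing new vertices, reduce by inclusion--exclusion to a rational simplex, homogenize to a simplicial cone $C=\operatorname{cone}\{(v_i,1)\}$, use the half-open fundamental parallelepiped of the integral generators $u_i=\ell(v_i,1)$ to obtain $\sum_N\mathcal{L}(\sigma,N)t^N=h(t)/(1-t^\ell)^{d+1}$ with $\deg h\le(d+1)\ell-1$, extract a quasi-polynomial of period dividing $\ell$, and treat $\overline{\mathcal{L}}$ via the boundary, itself a union of lower-dimensional rational faces. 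You also correctly identify and close the one subtlety that casual treatments gloss over: the numerator degree bound combined with the vanishing of $\binom{M/\ell+d}{d}$ at $M=-\ell,\dots,-d\ell$ is exactly what upgrades ``eventual'' agreement of the extracted quasi-polynomial with the counting function to agreement at every $N\ge0$, and it is also what makes $\ell=1$ yield a genuine polynomial in the integral case.
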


\begin{rmk}
As alluded to in the introduction, it is not necessary for $\mathcal{V}$ to be an integral convex polytope for the associated function $\mathcal{L}$ to be a bona fide polynomial. In fact, in Proposition \ref{NonIntegral} and Proposition \ref{NonIntegralSymplectic} below we prove that the infinite families of polytopes we consider here are examples of this phenomenon.
\end{rmk}

$\mathcal{L}\left(\mathcal{V},\cdot\right)$ and $\overline{\mathcal{L}}\left(\mathcal{V},\cdot\right)$ can be uniquely extended as quasi-polynomials for $N \in \mathbb{Z}$ by Lemma \ref{vanishing_quasi}. We abuse notation and use the same symbols for these extensions. The following theorem, see \cite{EhrhartPoly1,EhrhartPoly2,Macdonald,StanleyBook}, the so-called Ehrhart-Macdonald reciprocity (this was conjectured and partially proven by Ehrhart and in full generality proven by Macdonald \cite{Macdonald}), will play a key role in allowing us to extend the moments of moments to $N \in \mathbb{Z}$. 
\begin{thm}
\label{ehrhart_macdonald}
Let $N \in \mathbb{N}$ and let $\mathcal{V}$ be a rational convex polytope with $\textnormal{dim}\left(\mathcal{V}\right) = d$. Then, $$\mathcal{L}\left(\mathcal{V},-N\right) = \left(-1\right)^d \overline{\mathcal{L}}\left(\mathcal{V},N\right).$$ \end{thm}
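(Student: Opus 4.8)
The plan is to deduce the reciprocity from Stanley reciprocity for the integer-point generating function of a pointed rational cone, applied to the cone over $\mathcal V$ --- essentially Macdonald's strategy \cite{Macdonald}, streamlined via generating functions \cite{StanleyBook,CombinatorialReciprocity}. Throughout I read $\textnormal{int}(\mathcal V)$ as the relative interior, which agrees with the Euclidean interior exactly when $\mathcal V$ is full-dimensional. Set $\mathcal K=\textnormal{cone}(\mathcal V):=\overline{\left\{(t\boldsymbol x,t):\boldsymbol x\in\mathcal V,\ t\ge 0\right\}}\subset\mathbb R^{m+1}$. Since $\mathcal V$ has rational vertices, $\mathcal K$ is a pointed rational cone with $\dim\mathcal K=d+1$; its lattice points at height $N\ge 0$ are precisely $\left(N\mathcal V\cap\mathbb Z^m\right)\times\{N\}$, and its relative-interior lattice points at height $N\ge 1$ are $\left(N\,\textnormal{int}(\mathcal V)\cap\mathbb Z^m\right)\times\{N\}$. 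Passing to the linear span of $\mathcal K$ together with the induced sublattice, I may assume $\mathcal K\subset\mathbb R^{d+1}$ is full-dimensional with a distinguished ``height'' coordinate. Introduce the rational functions $\sigma_{\mathcal K}(\boldsymbol z)=\sum_{\boldsymbol n\in\mathcal K\cap\mathbb Z^{d+1}}\boldsymbol z^{\boldsymbol n}$ and $\sigma_{\textnormal{int}\,\mathcal K}(\boldsymbol z)=\sum_{\boldsymbol n\in\textnormal{int}(\mathcal K)\cap\mathbb Z^{d+1}}\boldsymbol z^{\boldsymbol n}$. Setting every variable except the height variable equal to $1$ recovers the Ehrhart series
\begin{align*}
\mathrm{Ehr}_{\mathcal V}(t)&:=\sum_{N\ge 0}\mathcal L(\mathcal V,N)\,t^N=\sigma_{\mathcal K}(\boldsymbol 1,t),\\
\overline{\mathrm{Ehr}}_{\mathcal V}(t)&:=\sum_{N\ge 1}\overline{\mathcal L}(\mathcal V,N)\,t^N=\sigma_{\textnormal{int}\,\mathcal K}(\boldsymbol 1,t).
\end{align*}

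The core is Stanley reciprocity for pointed rational cones: as rational functions, $\sigma_{\mathcal K}(1/\boldsymbol z)=(-1)^{d+1}\sigma_{\textnormal{int}\,\mathcal K}(\boldsymbol z)$. I would prove this by triangulating $\mathcal V$, hence $\mathcal K$, into simplicial cones using only rays through rational points, and then applying a half-open decomposition to express $\mathcal K$ and $\textnormal{int}(\mathcal K)$ as \emph{disjoint} unions of half-open simplicial cones. For one such half-open cone with primitive rational generators $\boldsymbol v_1,\dots,\boldsymbol v_{d+1}$ one has $\sigma=\bigl(\sum_{\boldsymbol p\in\Pi}\boldsymbol z^{\boldsymbol p}\bigr)\big/\prod_{i=1}^{d+1}\bigl(1-\boldsymbol z^{\boldsymbol v_i}\bigr)$ with $\Pi$ the appropriately half-open fundamental parallelepiped, and the identity reduces to an elementary computation: $\boldsymbol z\mapsto 1/\boldsymbol z$ sends each factor $1-\boldsymbol z^{\boldsymbol v_i}$ to $-\boldsymbol z^{-\boldsymbol v_i}\bigl(1-\boldsymbol z^{\boldsymbol v_i}\bigr)$, contributing the sign $(-1)^{d+1}$, while the numerator becomes the monomial sum over the complementary half-open parallelepiped, which is exactly the one attached to the matching piece of $\textnormal{int}(\mathcal K)$. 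Summing over the pieces yields Stanley reciprocity; as a byproduct this exhibits $\mathrm{Ehr}_{\mathcal V}$ as a rational function with denominator $(1-t^{p})^{d+1}$ for a suitable period $p$, re-establishing the quasi-polynomiality in Theorem~\ref{Ehrhart_polynomials}.

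To finish, specialize and extract coefficients. Passing from the cone identity to $\mathrm{Ehr}_{\mathcal V}(1/t)=(-1)^{d+1}\overline{\mathrm{Ehr}}_{\mathcal V}(t)$ requires evaluating $\boldsymbol z\to\boldsymbol 1$; since the individual simplicial pieces have poles along $\boldsymbol z=\boldsymbol 1$ while their sum does not, I would do this by restricting to a generic one-parameter subgroup $\boldsymbol z=\boldsymbol z(\tau)$ and letting $\tau\to 1$. Combined with the standard fact that for any quasi-polynomial $f$ one has $\bigl(\sum_{N\ge 0}f(N)t^N\bigr)\big|_{t\mapsto 1/t}=-\sum_{N\ge 1}f(-N)t^N$ as rational functions, this gives $-\sum_{N\ge 1}\mathcal L(\mathcal V,-N)\,t^N=(-1)^{d+1}\sum_{N\ge 1}\overline{\mathcal L}(\mathcal V,N)\,t^N$; comparing coefficients of $t^N$ yields $\mathcal L(\mathcal V,-N)=(-1)^d\overline{\mathcal L}(\mathcal V,N)$, which is the assertion (and when $\mathcal V$ is full-dimensional $\textnormal{int}$ here is the Euclidean interior).

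The main obstacle is the $\boldsymbol z\to\boldsymbol 1$ specialization: controlling the cancellation of the poles of the simplicial pieces, or, better, phrasing the whole argument so that one only ever manipulates the globally well-defined generating functions. A clean way to sidestep generating functions entirely is to decompose $\mathcal V$ itself into half-open rational simplices $\mathcal V=\bigsqcup_j\Delta_j^{\circ}$ so that $\textnormal{int}(\mathcal V)$ is the disjoint union of the oppositely half-open simplices, prove the reciprocity $\mathcal L_{\Delta^{\circ}}(-N)=(-1)^{\dim\Delta}\,\overline{\mathcal L}_{\Delta^{\circ}}(N)$ for a single half-open rational simplex directly from the explicit formula for its lattice-point count, and sum over $j$; the only delicate point is then the bookkeeping of which facets are declared open.
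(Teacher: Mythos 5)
The paper does not prove this theorem; it is quoted as a classical result with references to Ehrhart, Macdonald, Stanley, and \cite{CombinatorialReciprocity}, so there is no in-paper argument to compare against. Your outline is a correct rendition of the modern generating-function proof found, for instance, in \cite{CombinatorialReciprocity}: cone over the polytope, prove Stanley reciprocity $\sigma_{\mathcal K}(1/\boldsymbol z)=(-1)^{d+1}\sigma_{\textnormal{int}\,\mathcal K}(\boldsymbol z)$ via a half-open triangulation (a one-line check per half-open simplicial cone), specialize the auxiliary variables to $\boldsymbol 1$, and extract coefficients using the rational-function identity $\bigl(\sum_{N\ge 0}f(N)t^N\bigr)\bigl|_{t\mapsto 1/t}=-\sum_{N\ge 1}f(-N)t^N$ for quasi-polynomials $f$. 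Two remarks on completeness. First, you correctly flag that ``$\textnormal{int}$'' must mean \emph{relative} interior for the theorem as stated to hold when $\dim\mathcal V<m$; the paper's Definition of $\overline{\mathcal L}$ uses the ambient Euclidean interior, which is harmless here because the polytopes $\mathcal V^{\mathbb U}_{(k;q)}$ and $\mathcal V^{\mathbb{SP}}_{(k;q)}$ are full-dimensional, but your parenthetical makes the correct general statement. Second, the claim that reversing the half-openness of each cone in a half-open triangulation of $\mathcal K$ produces a half-open decomposition of $\textnormal{int}(\mathcal K)$ is the real content of the argument, and you invoke it rather than prove it; it is a standard but genuinely non-trivial lemma (visibility from a generic interior point), and a fully written proof would need to establish it. Your alternative route of decomposing $\mathcal V$ itself into half-open rational simplices is also standard and avoids the $\boldsymbol z\to\boldsymbol 1$ specialization entirely, trading it for the same half-open bookkeeping. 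Overall the plan is sound and correctly fills in what the paper leaves as a citation.
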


\subsection{Unitary case preliminaries}
We begin by defining the combinatorial objects that allow us to evaluate $\textnormal{MoM}_{\mathbb{U}(N)}(k;q)$ for $k,q,N \in \mathbb{N}$. 
\begin{defn}\label{unitary_def}
A unitary Gelfand-Tsetlin pattern with $M$ rows is a sequence of $M$ signatures $\left(\boldsymbol\lambda^{\left(j\right)}\right)_{j=1}^{M}$ where\footnote{The standard definition does not involve the non-negativity constraint. In this paper we will only consider non-negative signatures.} $\boldsymbol\lambda^{(j)} \in \mathcal{S}_j^+$  and $$\boldsymbol\lambda^{\left(1\right)} \prec \boldsymbol\lambda^{\left(2\right)} \prec \boldsymbol\lambda^{\left(3\right)} \prec \cdots \prec \boldsymbol\lambda^{\left(M\right)}.$$
If instead $\boldsymbol\lambda^{(j)} \in \mathcal{W}_j^+$ we say that $\left(\boldsymbol\lambda^{\left(j\right)}\right)_{j=1}^{M}$ is a unitary continuous Gelfand-Tsetlin pattern. In either case, if the interlacing is strict we say that it is a strict pattern. 
\end{defn}

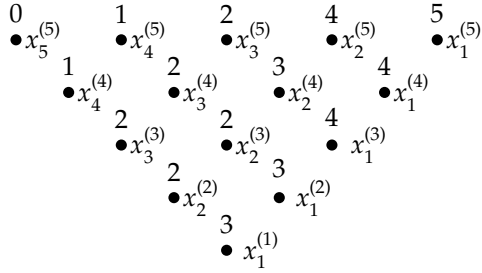
\begin{figure}[H]
\centering
\begin{tikzpicture}[scale = 0.7]
%bullets
\fill (0,0) circle (3pt) node[right=0.1cm] {$x_1^{(1)}$} node[above=0.1cm] {3};
\fill (1,1) circle (3pt) node[right=0.1cm] {$x_1^{(2)}$} node[above=0.15cm] {3};
\fill (1,3) circle (3pt) node[right] {$x_2^{(4)}$} node[above=0.1cm] {3};
\fill (2,2) circle (3pt) node[right=0.13cm] {$x_1^{(3)}$} node[above=0.13cm] {4};
\fill (2,4) circle (3pt) node[right] {$x_2^{(5)}$} node[above=0.1cm] {4};
\fill (3,3) circle (3pt) node[right] {$x_1^{(4)}$} node[above=0.1cm] {4};
\fill (-4,4) circle (3pt) node[right] {$x_5^{(5)}$} node[above=0.1cm] {0};
\fill (-2,4) circle (3pt) node[right] {$x_4^{(5)}$} node[above=0.1cm] {1};
\fill (-3,3) circle (3pt) node[right] {$x_4^{(4)}$} node[above=0.1cm] {1};
\fill (-1,1) circle (3pt) node[right] {$x_2^{(2)}$} node[above=0.1cm] {2};
\fill (-2,2) circle (3pt) node[right] {$x_3^{(3)}$} node[above=0.1cm] {2};
\fill (0,2) circle (3pt) node[right] {$x_2^{(3)}$} node[above=0.1cm] {2};
\fill (0,4) circle (3pt) node[right] {$x_3^{(5)}$} node[above=0.1cm] {2}; 
\fill (-1,3) circle (3pt) node[right] {$x_3^{(4)}$} node[above=0.1cm] {2};
\fill (4,4) circle (3pt) node[right] {$x_1^{(5)}$} node[above=0.1cm] {5};
\end{tikzpicture}
\captionsetup{justification=raggedright,singlelinecheck=off}
\caption{An example of a unitary Gelfand-Tsetlin pattern with $M=5$ rows.}
\label{fig:unitaryGT_pattern}
\end{figure}

We have the following important definition:

\begin{defn}\label{UnitaryPatterns}
Let $k,q,N \in \mathbb{N}$. We define $\mathcal{UP}_{\left(k;q;N\right)} \subset \mathbb{Z}^{k^2q^2}$, consisting of elements $\left(p_i^{(j)}\right)$, as follows:
\begin{enumerate}
\item For all $(i,j) \in \left\{(m,n) \in \mathbb{N}^2: n \leq 2kq-1, m \leq kq-\left|kq-n\right| \right\}$ we have $0 \leq  p_{i}^{\left(j\right)} \leq N.$
\item ${\left(\boldsymbol p^{\left(j\right)}\right)}_{j=1}^{kq}$ and ${\left(\boldsymbol p^{\left(2kq-j\right)}\right)}_{j=1}^{kq}$ form unitary Gelfand-Tsetlin patterns.
\item For $j \in \{1,\ldots,k-1\}$:
$\left|\boldsymbol p^{\left(2jq\right)}\right| = N\frac{kq-\left|kq-2jq\right|}{2}.$
\end{enumerate}
We will refer to the $k-1$ constraints in Condition 3 as the sum constraints. Denote the analogous set of integer arrays with strict interlacing and entries strictly between $0$ and $N$ by $\mathcal{UP}_{\left(k;q;N\right)}^{\neq}$. \end{defn} The following result was proven in Section 2 of  \cite{AssiotisKeating}.

\begin{prop}
\label{unitary_patterns}
Let $k,q \in\mathbb{N}$. Then for all $N \in \mathbb{N}$ we have: $$\textnormal{MoM}_{\mathbb{U}(N)}\left(k;q\right) = \#\mathcal{UP}_{\left(k;q;N\right)}.$$
\end{prop} We will now move from counting patterns to counting lattice points in the dilates of an appropriate polytope. Observe that, since the definition of $\mathcal{UP}_{\left(k;q;N\right)}$ has $k-1$ sum constraints we can determine $k-1$ entries from the others. We call these constraints the sum constraints. We define the following index set to focus on the $k^2q^2-\left(k-1\right)$ free variables.
We let $$\mathcal{S}_{\left(k;q\right)}^{\mathbb{U}} = \left\{\left(i,j\right) \in \mathbb{N}^2:j \leq 2kq-1, i \leq kq-\left|kq-j\right|,\left(i,j\right) \neq\left(1,2ql\right) \textnormal{ for }l =1,\ldots,k-1\right\},$$
and note that $\mathcal{S}_{\left(k;q\right)}^{\mathbb{U}}$ has $k^2q^2-\left(k-1\right)$ elements. We then define the following:

\begin{defn}\label{unitary_polytope}
Let $k,q\in \mathbb{N}$. Define $\mathcal{V}^{\mathbb{U}}_{\left(k;q\right)} \subset \mathbb{R}^{k^2q^2-\left(k-1\right)}$, consisting of elements $\left(v_i^{(j)}\right)$, by: \begin{enumerate}
    \item For all $\left(i,j\right) \in \mathcal{S}_{\left(k;q\right)}^{\mathbb{U}}$ we have
$0 \leq v_i^{\left(j\right)} \leq 1.$
\item For $l = 1,\ldots,k-1$ we define additional elements by: \begin{align}\label{unitary_fixed} v_1^{(2ql)} = \frac{kq-\left|kq-2ql\right|}{2}- v_2^{(2ql)}-\cdots-v_{kq-\left|kq-2ql\right|}^{(2ql)}, \end{align} and require that $0 \leq v_1^{\left(2ql\right)} \leq 1.$
\item $\left(\boldsymbol v^{\left(j\right)}\right)_{j=1}^{kq}$ and $\left(\boldsymbol v^{\left(2kq-j\right)}\right)_{j=1}^{kq}$ form continuous unitary Gelfand-Tsetlin patterns.
\end{enumerate}
\end{defn} 

%pyramid 
\begin{figure}[H]
\centering
\resizebox{0.4\linewidth}{!}{
\begin{tikzpicture}%
[x={(-0.925084cm, -0.203511cm)},
y={(0.379763cm, -0.495689cm)},
z={(-0.000024cm, 0.844320cm)},
scale=5.000000,
back/.style={loosely dotted, thin},
edge/.style={color=blue!95!black, thick},
facet/.style={fill=blue!95!black,fill opacity=0.250000},
vertex/.style={inner sep=1pt,circle,draw=green!25!black,fill=green!75!black,thick}]
%
%
%% This TikZ-picture was produce with Sagemath version 9.4
%% with the command: ._tikz_3d_in_3d and parameters:
%% view = [-0.0946000000000000, -0.479600000000000, -0.872400000000000]
%% angle = 160.470000000000
%% scale = 1
%% edge_color = blue!95!black
%% facet_color = blue!95!black
%% opacity = 0.250000000000000
%% vertex_color = green
%% axis = True

%% Drawing the axes
\draw[color=black,thick,->] (0,0,0) -- (1,0,0) node[anchor=north east]{$x$};
\draw[color=black,thick,->] (0,0,0) -- (0,1,0) node[anchor=north west]{$y$};
\draw[color=black,thick,->] (0,0,0) -- (0,0,0.75) node[anchor=south]{$z$};
%% Coordinate of the vertices:
%%
\coordinate (0.00000, 0.00000, 0.00000) at (0.00000, 0.00000, 0.00000);
\coordinate (0.00000, 1.00000, 0.00000) at (0.00000, 1.00000, 0.00000);
\coordinate (1.00000, 0.00000, 0.00000) at (1.00000, 0.00000, 0.00000);
\coordinate (1.00000, 1.00000, 0.00000) at (1.00000, 1.00000, 0.00000);
\coordinate (0.50000, 0.50000, 0.50000) at (0.50000, 0.50000, 0.50000);
%%
%%
%% Drawing edges in the back
%%
\draw[edge,back] (0.00000, 0.00000, 0.00000) -- (1.00000, 0.00000, 0.00000);
%%
%%
%% Drawing vertices in the back
%%
%%
%%
%% Drawing the facets
%%
\fill[facet] (0.50000, 0.50000, 0.50000) -- (0.00000, 0.00000, 0.00000) -- (0.00000, 1.00000, 0.00000) -- cycle {};
\fill[facet] (0.50000, 0.50000, 0.50000) -- (1.00000, 0.00000, 0.00000) -- (1.00000, 1.00000, 0.00000) -- cycle {};
\fill[facet] (0.50000, 0.50000, 0.50000) -- (0.00000, 1.00000, 0.00000) -- (1.00000, 1.00000, 0.00000) -- cycle {};
%%
%%
%% Drawing edges in the front
%%
\draw[edge] (0.00000, 0.00000, 0.00000) -- (0.00000, 1.00000, 0.00000);
\draw[edge] (0.00000, 0.00000, 0.00000) -- (0.50000, 0.50000, 0.50000);
\draw[edge] (0.00000, 1.00000, 0.00000) -- (1.00000, 1.00000, 0.00000);
\draw[edge] (0.00000, 1.00000, 0.00000) -- (0.50000, 0.50000, 0.50000);
\draw[edge] (1.00000, 0.00000, 0.00000) -- (1.00000, 1.00000, 0.00000);
\draw[edge] (1.00000, 0.00000, 0.00000) -- (0.50000, 0.50000, 0.50000);
\draw[edge] (1.00000, 1.00000, 0.00000) -- (0.50000, 0.50000, 0.50000);
%%
%%
%% Drawing the vertices in the front
%%
\node[vertex,scale=2pt] at (0.00000, 0.00000, 0.00000) {};
\node[vertex,scale=2pt] at (0.00000, 1.00000, 0.00000) {};
\node[vertex,scale=2pt] at (1.00000, 0.00000, 0.00000) {};
\node[vertex,scale=2pt] at (1.00000, 1.00000, 0.00000) {};
\node[vertex,scale=2pt] at (0.50000, 0.50000, 0.50000) {};
%%add notation
\node[above right] (a) at (0.00000, 0.00000, 0.00000) {(0,0,0)};
\node[above right] (b) at (0.00000, 1.00000, 0.00000) {(0,1,0)};
\node[above left] (c) at (1.00000, 0.00000, 0.00000) {(1,0,0)};
\node[below left] (d) at (1.00000, 1.00000, 0.00000) {(1,1,0)};
\node[above=0.3cm] (e) at (0.50000, 0.50000, 0.50000) {($\frac{1}{2}$,$\frac{1}{2}$,$\frac{1}{2})$};
\end{tikzpicture}}
\captionsetup{justification=raggedright,singlelinecheck=off}
\caption{A visualisation of $\mathcal{V}^{\mathbb{U}}_{\left(2;1\right)}$. It is simply a pyramid with vertices at $(0,0,0)$, $(0,1,0)$, $(1,1,0)$, $(1,0,0)$, $\left(\frac{1}{2},\frac{1}{2},\frac{1}{2}\right)$. We note the existence of a non-integral vertex.}
\label{fig:pyramid}
\end{figure}
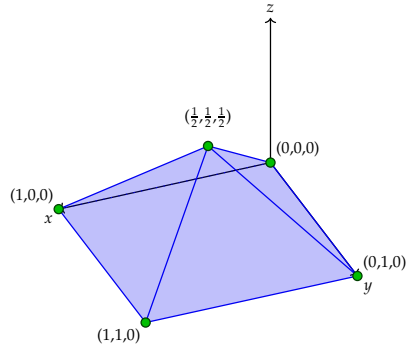

\begin{figure}[H]
\centering
\begin{tikzpicture}[scale = 0.8]
\draw[color=black!80, thick] (-2,0)--(0,-4)--(2,0)--(0,4)--(-2,0);
\draw[color=black!40,ultra thick] (-1,2)--(1,2);
\draw[color=black!40,ultra thick] (-2,0)--(2,0);
\draw[color=black!40,ultra thick] (-1,-2)--(1,-2);
%node
\fill[red] (0,3.8) circle (1.5pt); 
\fill[red] (-0.1,3.6) circle (1.5pt);
\fill[red] (0.1,3.6) circle (1.5pt);
\fill[red] (-0.2,3.4) circle (1.5pt);
\fill[red] (0,3.4) circle (1.5pt);
\fill[red] (.2,3.4) circle (1.5pt);

\fill[red] (0,-3.8) circle (1.5pt); 
\fill[red] (-0.1,-3.6) circle (1.5pt);
\fill[red] (0.1,-3.6) circle (1.5pt);
\fill[red] (-0.2,-3.4) circle (1.5pt);
\fill[red] (0,-3.4) circle (1.5pt);
\fill[red] (.2,-3.4) circle (1.5pt);

\fill[red] (-0.9,2) circle (1.5pt);
\fill[red] (-0.6,2) circle (1.5pt);
\fill[red] (0.6,2) circle (1.5pt);
\draw (0.9,2) node[line width=0.2mm, cross=2.5pt,blue!85] {} node[above right, blue!85] {$v_1^{(6q)}$};
\node[red,thick] (a) at (0,2.1) {\large \ldots};

\fill[red] (-1.9,0) circle (1.5pt);
\fill[red] (-1.7,0) circle (1.5pt);
\draw (1.9,0) node[line width=0.2mm, cross=2.5pt,blue!85] {} node[above right, blue!85] {$v_1^{(4q)}$};
\fill[red] (1.7,0) circle (1.5pt);
\node[red,thick] (b) at (0,0.1) {\large\ldots};
\node[red,thick] (c) at (-1,0.1) {\large\ldots};
\node[red,thick] (d) at (1,0.1) {\large\ldots};

\fill[red] (-0.9,-2) circle (1.5pt);
\fill[red] (-0.6,-2) circle (1.5pt);
\fill[red] (0.6,-2) circle (1.5pt);
\draw (0.9,-2) node[line width=0.2mm, cross=2.5pt,blue!85] {} node[below right, blue!85] {$v_1^{(2q)}$};
\node[red] (c) at (0,-1.9) {\large\ldots};

%notation
\node (a) at (3.3,2) {$6q$-th row};
\node (b) at (3.3,0) {$4q$-th row};
\node (c) at (3.3,-2) {$2q$-th row};

\draw[black] (a)--(1,2);
\draw[black] (b)--(2,0);
\draw[black] (c)--(1,-2);

\end{tikzpicture}
\captionsetup{justification=raggedright,singlelinecheck=off}
\caption{Illustration of the fixed and free variables from Definition \ref{unitary_polytope} with $k=4$ and a general $q$. The free variables are depicted as red circles and the fixed variables as blue crosses. Note that, there are exactly $k-1=3$ fixed variables which is simply the number of sum constraints.}
    \label{fig:unitary_k4}
\end{figure}
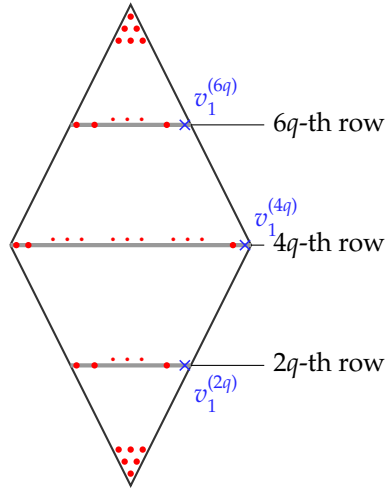

\begin{rmk}
We note that $\mathcal{V}^{\mathbb{U}}_{\left(k;q\right)} \subset \mathbb{R}^{k^2q^2-\left(k-1\right)}$ is full-dimensional.
\end{rmk}

\begin{rmk}
It is immediate that in the simplest possible case $(k,q)=(1,1)$ we have $\mathcal{V}^{\mathbb{U}}_{\left(1;1\right)}=[0,1]$. The next simplest case is $(k,q)=(2,1)$. After some simple computations it can be seen that $\mathcal{V}^{\mathbb{U}}_{\left(2;1\right)}$ is actually a pyramid, see Figure \ref{fig:pyramid} for an illustration. For higher values of the parameters $k,q$, $\mathcal{V}^{\mathbb{U}}_{\left(k;q\right)}$ is at least four-dimensional.
\end{rmk}

The following is readily seen to be true:

\begin{prop}\label{PropLatticePointsConnU}
Let $k,q \in \mathbb{N}$. Then for all $N\in \mathbb{N}$ we have:
$$\textnormal{MoM}_{\mathbb{U}(N)}\left(k;q\right) =  \#\left(\mathbb{Z}^{k^2q^2-\left(k-1\right)}\cap N\mathcal{V}_{\left(k;q\right)}^{\mathbb{U}}\right).$$
\end{prop}

\begin{proof}
By the very construction of $\mathcal{V}_{(k;q)}^{\mathbb{U}}$ and Proposition \ref{unitary_patterns}.
\end{proof}

The following is also immediate:
\begin{lem}
\label{unitary_convex_polytope}
Let $k,q \in \mathbb{N}$. $\mathcal{V}^{\mathbb{U}}_{\left(k;q\right)}$ is a rational convex polytope.
\end{lem}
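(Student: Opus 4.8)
The plan is to read off directly from Definition \ref{unitary_polytope} that $\mathcal{V}^{\mathbb{U}}_{(k;q)}$ is cut out of $\mathbb{R}^{k^2q^2-(k-1)}$ by finitely many linear inequalities with rational coefficients, and that it is bounded.

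First I would observe that the ambient coordinates are precisely the free variables $\left(v_i^{(j)}\right)_{(i,j)\in \mathcal{S}^{\mathbb{U}}_{(k;q)}}$, while each ``fixed'' variable $v_1^{(2ql)}$, for $l=1,\dots,k-1$, is by (\ref{unitary_fixed}) an affine function of these free variables whose non-constant coefficients are all $0$ or $-1$ and whose constant term $\frac{kq-|kq-2ql|}{2}$ is a (half-)integer, hence rational. Substituting these affine expressions into the constraints of Definition \ref{unitary_polytope}, I would check the three conditions in turn: Condition 1 gives the box inequalities $0\le v_i^{(j)}\le 1$ for $(i,j)\in\mathcal{S}^{\mathbb{U}}_{(k;q)}$, which have integer data; Condition 2 gives $0\le v_1^{(2ql)}\le 1$, which upon substitution becomes, for each $l$, a pair of linear inequalities with rational data; and Condition 3, the interlacing inequalities defining the continuous unitary Gelfand-Tsetlin patterns, is a finite list of inequalities of the form $v_a^{(b)}\ge v_c^{(d)}$ which, after the same substitution wherever a fixed variable occurs, again become linear inequalities with rational (in fact integer) coefficients. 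Collecting all of these, $\mathcal{V}^{\mathbb{U}}_{(k;q)}$ is exactly the solution set of a finite linear system $\boldsymbol\alpha\boldsymbol x\le \boldsymbol a$ with $\boldsymbol\alpha,\boldsymbol a$ rational. Moreover Condition 1 already confines each ambient coordinate to $[0,1]$, so $\mathcal{V}^{\mathbb{U}}_{(k;q)}\subseteq[0,1]^{k^2q^2-(k-1)}$ is bounded; hence it is a convex polytope in the sense recalled above.

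Finally, for rationality I would invoke the standard fact that a polytope cut out by rational inequalities has rational vertices: each vertex is the unique solution of the subsystem obtained by replacing by equalities the inequalities active at it, and solving a square rational linear system via Cramer's rule produces a point with rational coordinates. Hence every vertex of $\mathcal{V}^{\mathbb{U}}_{(k;q)}$ is rational, so $\mathcal{V}^{\mathbb{U}}_{(k;q)}$ is a rational convex polytope. There is no genuine obstacle here; the only point needing a little care is the bookkeeping around eliminating the $k-1$ sum constraints through (\ref{unitary_fixed}) — one must confirm that this elimination keeps all coefficients rational and preserves boundedness — and both are immediate since the elimination is affine with coefficients in $\{0,-1\}$ and the ambient variables remain trapped in the unit box.
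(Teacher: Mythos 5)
Your proposal is correct and is essentially the paper's own proof, just unwound in more detail: the paper likewise notes that $\mathcal{V}^{\mathbb{U}}_{(k;q)}$ is a bounded region cut out by rational linear inequalities (hence a rational polytope, with the Cramer-rule fact about rational vertices left implicit) and that it is convex as an intersection of convex sets. Your extra care about the affine elimination of the $k-1$ sum constraints is a nice check, but the substance is the same.
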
\begin{proof}
It is a rational polytope since it is a bounded region defined by linear inequalities with rational coefficients. It is convex since it is the intersection of convex sets.
\end{proof} 

As mentioned in the introduction the polytopes $\mathcal{V}^{\mathbb{U}}_{\left(k;q\right)}$ are in general not integral:

\begin{prop}\label{NonIntegral}
Let $k,q \in \mathbb{N}$ with $k\ge 2$. Then, $\mathcal{V}^{\mathbb{U}}_{\left(k;q\right)}$ is not an integral polytope.
\end{prop}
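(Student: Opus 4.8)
The plan is to exhibit an explicit non-integral vertex of $\mathcal{V}^{\mathbb{U}}_{(k;q)}$ by generalising the apex $(\tfrac12,\tfrac12,\tfrac12)$ of the pyramid $\mathcal{V}^{\mathbb{U}}_{(2;1)}$ visible in Figure \ref{fig:pyramid}. Recall that a vertex is a $0$-dimensional face, equivalently a point of $\mathcal{V}^{\mathbb{U}}_{(k;q)}$ at which $k^2q^2-(k-1)$ of the defining inequalities (counting the interlacing inequalities, the box inequalities $0\le v_i^{(j)}\le 1$, and the translated box inequalities $0\le v_1^{(2ql)}\le 1$ coming from (\ref{unitary_fixed})) become active and are linearly independent. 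The guess, suggested by the $(2;1)$ case, is to take the point all of whose free coordinates in the ``middle'' are equal to $\tfrac12$ on the rows with a sum constraint and $0$ or $1$ elsewhere, arranged so the interlacing is satisfied; concretely, on the rows $j=2q, 4q,\dots,2(k-1)q$ one forces each entry to be $\tfrac12$, which is consistent with the sum constraint in Condition 2 since that row of length $kq-|kq-2jq|$ must sum to half its length, and one propagates outward via the interlacing chains to the apex of each Gelfand--Tsetlin ``triangle''.

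First I would fix the simplest instance in each family to keep the bookkeeping light: take $q=1$ (the argument for general $q$ is identical with rescaled row lengths), so the pattern has rows of lengths $1,2,\dots,k,k-1,\dots,1$ and the sum constraints sit on rows $j=2,4,\dots,2(k-1)$. I would then write down a candidate point $\mathbf v^\star$ in which, on each constrained row of length $r$, all $r$ entries equal $\tfrac12$ (so the sum is $r/2$, matching the sum constraint), and on the unconstrained rows the entries are chosen to be the ``most slack'' interlacing extension — for $\mathcal{V}^{\mathbb{U}}_{(2;1)}$ this recovers exactly $(v_1^{(1)},v_1^{(2)},v_2^{(2)})=(\tfrac12,\tfrac12,\tfrac12)$ up to relabelling. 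Next I would verify $\mathbf v^\star\in\mathcal{V}^{\mathbb{U}}_{(k;q)}$ (all inequalities hold, with equality in many of them), and then count active linearly independent constraints: every coordinate forced to $\tfrac12$ lies in the open interval $(0,1)$, so its activeness must come from interlacing equalities $\lambda_i=\mu_i$ or $\mu_i=\lambda_{i+1}$, and I would argue that the chain of such equalities pins down $\mathbf v^\star$ uniquely, i.e. it is a vertex. Since at least one coordinate of $\mathbf v^\star$ equals $\tfrac12\notin\mathbb{Z}$, the polytope is not integral.

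The step I expect to be the main obstacle is the rigorous verification that $\mathbf v^\star$ is genuinely a vertex and not merely a point on a positive-dimensional face — that is, exhibiting $k^2q^2-(k-1)$ active inequalities that are linearly independent. This requires a careful description of which interlacing inequalities are tight at $\mathbf v^\star$ and a rank computation on the corresponding incidence structure of the Gelfand--Tsetlin graph; the translated box constraints (\ref{unitary_fixed}) on the rows $j=2ql$ interact with the sum constraints and must be handled with care. A clean way to sidestep a direct rank argument is to show uniqueness directly: assume $\mathbf w\in\mathcal{V}^{\mathbb{U}}_{(k;q)}$ satisfies all the same equalities as $\mathbf v^\star$ and deduce, by walking along the interlacing chains from the constrained rows outward and using that each forced coordinate is squeezed between two equal neighbours, that $\mathbf w=\mathbf v^\star$; this establishes that the active constraints cut out a single point, hence $\mathbf v^\star$ is a vertex, and completes the proof. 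One should also remark that for $k=1$ there are no sum constraints and indeed $\mathcal{V}^{\mathbb{U}}_{(1;q)}$ can be integral, consistent with the hypothesis $k\ge 2$.
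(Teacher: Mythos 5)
Your strategy is the same as the paper's — exhibit a non-integral vertex by exploiting the sum constraint — but the step you yourself flag as ``the main obstacle,'' namely certifying that the candidate really is a vertex, is exactly where all the work lives, and your proposal does not close it. The paper does so cleanly: take $\mathbf v^\star=(\tfrac12,\dots,\tfrac12)$ in \emph{every} coordinate (not only on the constrained rows), and take as the collection of faces all the diagonal interlacing equalities together with $v_1^{(2ql)}=v_2^{(2ql)}$ (the latter applied to the right-hand side of (\ref{unitary_fixed})). Since the Gelfand--Tsetlin diagram is connected under diagonal adjacency, intersecting these faces collapses every coordinate, free and fixed, to a single common value $v$; and because $k\ge 2$ there is at least one sum-constrained row of length $r=kq-|kq-2ql|$, on which the constraint then reads $rv=r/2$, forcing $v=\tfrac12$ uniquely. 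A non-empty intersection of faces is a face, and this one is a single point, hence a $0$-dimensional face, i.e.\ a vertex. This is precisely the ``walking along interlacing chains'' uniqueness argument you gesture at, but you neither pick the point nor the active faces concretely enough to carry it through.

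Two further concrete issues. First, your two descriptions of the candidate are inconsistent: in one place you say the unconstrained rows should be ``$0$ or $1$'' and in another that the $(2;1)$ case gives all coordinates equal to $\tfrac12$; these are not the same point. Second, the ``propagate outward from the constrained rows'' step does not work automatically once $k\ge 3$: if rows $2q$ and $4q$ are all $\tfrac12$ in the $(3;q)$ case, interlacing forces the \emph{inner} entries of the intervening rows to be $\tfrac12$, but the outer entries $v_1^{(2q+1)}, v_{2q+1}^{(2q+1)}$, etc.\ are only constrained to lie in $[\tfrac12,1]$ or $[0,\tfrac12]$ and are not pinned. One must make explicit additional choices of tight inequalities (diagonal equality vs.\ box constraint) on those outer entries to obtain a vertex; your proposal leaves this unspecified, and the ``$0$ or $1$ elsewhere'' reading, while it may yield a (different) non-integral vertex in some cases, is not verified. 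The paper's all-$\tfrac12$ choice avoids all of this bookkeeping.
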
 \begin{proof} We need to exhibit a single non-integral vertex. We claim that the point $\left(\frac{1}{2},\dots, \frac{1}{2}\right)$ is a vertex. For each pair of diagonally adjacent coordinates we have inequality in $\mathcal{V}_{(k;q)}^{\mathbb{U}}$ by the interlacing so we can define a face by setting them equal. For all $l = 1,\ldots,k-1$ consider the fixed element $v_1^{(2ql)}$, see Figure \ref{fig:unitary_k4}. By the interlacing $v_1^{(2ql)} \geq v_2^{(2ql)}$ in $\mathcal{V}_{(k;q)}^{\mathbb{U}}$ so we can define a face by setting $v_1^{(2ql)} = v_2^{(2ql)}$. Strictly speaking, it is the right hand side of (\ref{unitary_fixed}) that we set equal to $v_2^{(2ql)}$. Now consider the intersection of all of these faces. All coordinates as well as the additional (fixed) elements are set equal. Since $k \geq 2$ there is at least one sum constraint. One then sees that the sum constraints are satisfied if and only if the single coordinate is set to $\frac{1}{2}$. Non-empty intersections of faces are faces so this is a zero-dimensional face, namely a vertex.
\end{proof}

\begin{rmk}
In general, this is not the only non-integral vertex. For example, computer calculations show that $\mathcal{V}_{(2;2)}^{\mathbb{U}}$ has $36$ integral vertices and $504$ non-integral vertices.
\end{rmk}

As mentioned in Remark \ref{RemarkReflexive}, if one looks at $4\mathcal{V}^{\mathbb{U}}_{(2;1)}$, it might be tempting to think that $2kq\mathcal{V}^{\mathbb{U}}_{(k;q)}$ is an integral polytope. As the next proposition shows this is in general not true, at least for $k=2$.

\begin{prop}\label{NonIntegralU+}
Let $q\in \mathbb{N}$ with $q\ge 4$. Then, $4q\mathcal{V}^{\mathbb{U}}_{\left(2;q\right)}$ is not an integral polytope.
\end{prop}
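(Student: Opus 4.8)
The plan is to exhibit, for $q \geq 4$, a single vertex of $4q\mathcal{V}^{\mathbb{U}}_{(2;q)}$ with a non-integer coordinate. Recall that for $k=2$ the polytope $\mathcal{V}^{\mathbb{U}}_{(2;q)}$ lives in $\mathbb{R}^{4q^2-1}$ and is built from two continuous unitary Gelfand–Tsetlin patterns, $(\boldsymbol v^{(j)})_{j=1}^{2q}$ and $(\boldsymbol v^{(4q-j)})_{j=1}^{2q}$, glued along the bottom row $\boldsymbol v^{(2q)}$, together with the single sum constraint $|\boldsymbol v^{(2q)}| = q$ (since $l$ ranges over $1,\ldots,k-1 = 1$ and $kq - |kq - 2q| = 2q$, the constraint reads $v_1^{(2q)} + \cdots + v_{2q}^{(2q)} = q$), and all coordinates (and the fixed entry $v_1^{(2q)}$) constrained to $[0,1]$. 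Dilating by $4q$ scales all the unit-interval constraints to $[0,4q]$ and the sum constraint to $|\boldsymbol v^{(2q)}| = 4q^2$. A vertex of a polytope in $\mathbb{R}^d$ is a point where $d$ linearly independent facet inequalities become equalities, so I will choose a configuration where all but a few of the coordinates are pinned to the endpoints $0$ or $4q$, all but one interlacing/diagonal-equality relation is active, and the one sum constraint forces the remaining free coordinate to a non-integer value.

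Concretely, I would do the following. First, on the top triangular half $(\boldsymbol v^{(j)})_{j=1}^{2q}$, set as many coordinates as possible equal along the diagonals (the interlacing inequalities $v^{(j+1)}_i \geq v^{(j)}_i \geq v^{(j+1)}_{i+1}$ are strict in the interior, so each provides a facet when turned into an equality), collapsing the whole top half down to the values determined by row $\boldsymbol v^{(2q)}$; do the same for the bottom half $(\boldsymbol v^{(4q-j)})_{j=1}^{2q}$. This leaves the $2q$ coordinates of the middle row $\boldsymbol v^{(2q)}$ as the only genuinely free quantities, and among these pin $2q-1$ of them to $0$ or to $4q$ — say set $v_2^{(2q)} = \cdots = v_{j_0}^{(2q)} = 4q$ and $v_{j_0+2}^{(2q)} = \cdots = v_{2q}^{(2q)} = 0$ for a suitable cutoff $j_0$ — leaving one free coordinate $v_{j_0+1}^{(2q)} =: t$. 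The dilated sum constraint $|\boldsymbol v^{(2q)}| = 4q^2$ then reads $(j_0-1)\cdot 4q + t = 4q^2$ (recalling $v_1^{(2q)}$ is the fixed entry (\ref{unitary_fixed}), so if $j_0 \geq 2$ it equals $4q$ as well and contributes), i.e. $t = 4q^2 - 4q(j_0-1) = 4q(q - j_0 + 1)$. The point is to instead leave $v_1^{(2q)}$ and one other middle coordinate both free and equal to $t$, forcing $2t + (\text{integer multiple of }4q) = 4q^2$, which gives $t = 2q(q - m)$ for an integer $m$; this is still an integer, so I must be more careful — the non-integrality must come from making the interlacing force an averaging relation. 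The right setup: leave two adjacent diagonal coordinates tied so that one middle-row coordinate equals the average of two of its neighbours that are themselves pinned to $0$ and $4q$, which together with the sum constraint produces a coordinate equal to $2q$ but then a \emph{further} coordinate forced to $4q^2$ minus an odd multiple of $2q$ divided by the number of remaining free slots — and choosing the combinatorics (this is where $q \geq 4$ enters, guaranteeing enough rows to set up the needed pattern) so that this division is not exact.

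The main obstacle is precisely the bookkeeping in that last step: one must pick an explicit active set of $4q^2-1$ linearly independent facet equalities whose unique common solution has a coordinate with denominator not dividing $4q$, and then verify both that the chosen inequalities are genuine facets of $\mathcal{V}^{\mathbb{U}}_{(2;q)}$ (hence of its dilate) and that the solution actually lies in the polytope (all entries in $[0,4q]$, all interlacings respected). I expect the cleanest route is to lift the known non-integral vertex used in Proposition \ref{NonIntegral} — the all-$\tfrac12$ point, which in $4q\mathcal{V}^{\mathbb{U}}_{(2;q)}$ becomes the all-$2q$ point and is \emph{integral} — and perturb it: keep most coordinates at $2q$ but break the symmetry on a small block of rows near one corner so that the sum constraint $|\boldsymbol v^{(2q)}| = 4q^2$ must be balanced by a fractional value. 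Verifying linear independence of the resulting facet system and that no inequality is violated is routine linear algebra once the block is written down explicitly, but getting a block that simultaneously (i) has the right rank, (ii) stays inside $[0,4q]$, and (iii) yields a non-integer — and only for $q \geq 4$ — is the delicate combinatorial core of the argument.
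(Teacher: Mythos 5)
Your proposal does not reach a proof, and the step you never fill in is precisely the load-bearing one. The constructions you actually carry through --- pinning all but one middle-row coordinate to $0$ or $4q$, then leaving two coordinates free and equal --- produce integer values, as you observe; the subsequent plans (an ``averaging relation,'' a perturbation of the all-$2q$ point) are left as declared intentions with no explicit configuration, and nothing in the write-up explains why the hypothesis $q\geq 4$ is needed. You are also working under the tacit assumption that the result should be proved by exhibiting one cleverly chosen non-integral vertex, which is not how the paper's argument goes.

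The paper argues by contradiction using a whole \emph{family} of vertices. Fix any $t$ with $q\leq t\leq 2q$, set $v_{t+1}^{(2q)}=\cdots=v_{2q}^{(2q)}=0$, force $v_1^{(2q)}=v_2^{(2q)}=\cdots=v_t^{(2q)}$, and set every remaining coordinate (directly or via the interlacing) to $0$. This collection of facet equalities pins down a unique point, i.e.\ a vertex, of $4q\mathcal{V}^{\mathbb{U}}_{(2;q)}$, and the dilated sum constraint becomes $t\,v_1^{(2q)}=4q^2$. If $4q\mathcal{V}^{\mathbb{U}}_{(2;q)}$ were integral, then $t\mid 4q^2$ for every $t\in\{q,\dots,2q\}$, hence $\textnormal{lcm}(q,\dots,2q)\mid 4q^2$; but $\textnormal{lcm}(q,\dots,2q)=\textnormal{lcm}(1,\dots,2q)>2^{2q}>4q^2$ once $q\geq 4$, a contradiction. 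The move you did not make is to tie a block of middle-row entries \emph{equal to one another} without fixing their common value at an endpoint --- that is what produces the denominator $t$ in $4q^2/t$ --- and then to range over $t$ so that the conclusion rests on a divisibility/lcm obstruction rather than on a one-off arithmetic coincidence in a single vertex. That is also exactly where $q\geq 4$ enters.
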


\begin{proof}
Note that, since $k=2$ there is only a single sum constraint, which is imposed on row $2q$. Suppose for the sake of contradiction that $4q\mathcal{V}^{\mathbb{U}}_{\left(2;q\right)}$ is integral. We abuse notation and write $v_i^{(j)}$ for the integral coordinates of this dilated polytope. Fix any $t \in \mathbb{N}$ with $q \leq t \leq 2q$. Consider the faces of $4q\mathcal{V}^{\mathbb{U}}_{\left(2;q\right)}$ defined by setting all but the first $t$ entries of row $2q$ to zero. Namely, we set $v_{2q}^{(2q)} = 0$, $\ldots$, $v_{t+1}^{(2q)} = 0$. Then, we further set $v_{t}^{(2q)}= v_{1}^{(2q)}$ (note that due to the interlacing more coordinates are equal; including the entries $v_j^{(2q)}$ for $j=2,\dots,t-1$) and set all remaining entries $v_i^{(j)}$ not already fixed, directly or through the interlacing, to zero as well. This clearly ensures that we will always have a unique point in the intersection. That is, the described collection of faces defines a vertex. Considering the dilated sum constraint we must have: $$tv_1^{(2q)} = 4q^2,$$ so since by assumption $4q\mathcal{V}^{\mathbb{U}}_{\left(2;q\right)}$ is integral $t$ divides $4q^2$. From the fact that this holds for any such $t$ we deduce that: \begin{align}\label{dilation_contradiction}4q^2 \geq \textnormal{lcm}(q,\ldots,2q) = \textnormal{lcm}(1,\ldots,2q) > 2^{2q},\end{align} where the last inequality is true since $q \geq 4$. However, (\ref{dilation_contradiction}) is the sought contradiction.
\end{proof}

Although the polynomiality statement in Theorem \ref{PolyThm} is not an immediate consequence of Ehrhart theory (because of Proposition \ref{NonIntegral} above) it is still possible to give an alternative combinatorial proof to those given in \cite{AssiotisKeating,BaileyKeating}.

\begin{prop}
Let $k,q \in \mathbb{N}$. Then, the counting function
$N \mapsto \#\mathcal{UP}_{\left(k;q;N\right)}$
is a polynomial.
\end{prop}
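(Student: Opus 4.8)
The plan is to prove polynomiality of $N \mapsto \#\mathcal{UP}_{(k;q;N)}$ directly by counting, exploiting the layered structure of the Gelfand-Tsetlin patterns rather than appealing to Ehrhart theory. Recall from Proposition \ref{unitary_patterns} that $\#\mathcal{UP}_{(k;q;N)} = \textnormal{MoM}_{\mathbb{U}(N)}(k;q)$, and that an element of $\mathcal{UP}_{(k;q;N)}$ consists of two unitary Gelfand-Tsetlin patterns, $(\boldsymbol p^{(j)})_{j=1}^{kq}$ (going up) and $(\boldsymbol p^{(2kq-j)})_{j=1}^{kq}$ (going down), which share the common top row $\boldsymbol p^{(kq)} \in \mathcal{S}_{kq}^{+}$ with all entries in $\{0,1,\dots,N\}$, subject to the $k-1$ sum constraints fixing $|\boldsymbol p^{(2jq)}|$ for $j=1,\dots,k-1$.

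The key observation I would use is the classical fact that, for a fixed integer signature $\boldsymbol\mu \in \mathcal{S}_{m}^+$, the number of integer signatures $\boldsymbol\lambda \in \mathcal{S}_{m-1}^+$ interlacing with it (that is, $\boldsymbol\lambda \prec \boldsymbol\mu$) is $\prod_{i=1}^{m-1}(\mu_i - \mu_{i+1} + 1)$, and more generally the number of Gelfand-Tsetlin patterns with prescribed integer top row $\boldsymbol\mu$ of length $m$ is the dimension of the corresponding irreducible $\mathbb{U}(m)$ representation, which by the Weyl dimension formula is $\prod_{1 \le i < j \le m} \frac{\mu_i - \mu_j + j - i}{j - i}$ — a polynomial in the entries of $\boldsymbol\mu$. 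Now, however, the top row $\boldsymbol p^{(kq)}$ of our pattern is itself summed over (not fixed), and there are the sum constraints on intermediate rows, so the approach I would take is: condition on all the rows that carry sum constraints, namely $\boldsymbol p^{(2q)}, \boldsymbol p^{(4q)}, \dots, \boldsymbol p^{(2(k-1)q)}$, together with the apex row $\boldsymbol p^{(kq)}$ (note $kq$ may or may not coincide with one of these depending on parity, but in any case there are finitely many "pivot" rows). Between two consecutive pivot rows the pattern is an unconstrained Gelfand-Tsetlin sub-pattern, and the number of ways to fill it in is a polynomial in the entries of the two bounding rows by the Weyl-dimension-type count above (one needs the "skew" version, counting integer points interlacing-wise between two fixed signatures of lengths $m < m'$, which is again a polynomial in the entries — this is a standard computation via iterated summation of polynomials, since summing a polynomial over an integer interval $[a,b]$ yields a polynomial in $a,b$). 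Multiplying these polynomial counts over the finitely many inter-pivot blocks, we get that $\#\mathcal{UP}_{(k;q;N)}$ equals a sum, over all integer pivot-row configurations compatible with the box constraints $0 \le \cdot \le N$ and the sum constraints, of a fixed polynomial $Q$ in the pivot entries.

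The final step is then to show that this residual sum is a polynomial in $N$. The pivot entries range over a region of the form $\{0,1,\dots,N\}^{d} \cap \{\text{finitely many equalities } L_j(\text{entries}) = c_j N\}$ for integer linear forms $L_j$ and rational constants — after rescaling by $N$ this is exactly asking to sum a polynomial over the lattice points of (a dilate of) a rational polytope, i.e. an Ehrhart-type quasi-polynomial computation, so a priori we only get a quasi-polynomial. To upgrade to a genuine polynomial, I would invoke that $N \mapsto \#\mathcal{UP}_{(k;q;N)}$ is known to agree with $\textnormal{MoM}_{\mathbb{U}(N)}(k;q)$ and — wait, that is circular with Theorem \ref{PolyThm}. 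Instead, the cleanest route: observe that the sum constraints can be eliminated by a change of variables that makes the intermediate-row sums into fixed values $N \cdot c_j$ with $c_j \in \frac{1}{2}\mathbb{Z}$; after one further harmless doubling substitution (replacing the problematic half-integer sums, which only ever occur because of the $\frac{kq - |kq - 2jq|}{2}$ term) the constrained summation region becomes a dilate of an \emph{integral} polytope in the pivot variables, and summing a polynomial over lattice points in dilates of an integral polytope is a genuine polynomial in the dilation factor.

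The main obstacle I anticipate is precisely this last point — the half-integrality coming from the sum constraints (which is, after all, the same phenomenon responsible for $\mathcal{V}^{\mathbb{U}}_{(k;q)}$ being non-integral, per Proposition \ref{NonIntegral}). One has to check carefully that the parity obstructions organize themselves so that the summation region, while not coming from an integral polytope on the nose, still has a polynomial (not merely quasi-polynomial) point-summation function. I expect the honest argument is to show that for each residue class of $N$ the quasi-polynomial pieces agree: concretely, fix $N$ and $N+1$ of the same parity versus opposite parity, and use an explicit bijection between constrained pivot configurations at consecutive scales to force the quasi-polynomial's period to collapse. Alternatively — and this is probably what the authors do — one reorganizes the entire count so that the sum-constrained rows are handled by a \emph{single} summation identity (e.g. writing $v_1^{(2jq)}$ explicitly in terms of the others as in equation (\ref{unitary_fixed}) and absorbing it), after which every remaining summation is over a full box $\{0,\dots,N\}$ and each such summation of a polynomial is manifestly a polynomial in $N$ by induction on the number of summation variables. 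That reorganization is the crux, and getting the bookkeeping right across the two stacked patterns sharing an apex is where the real work lies.
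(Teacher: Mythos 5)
Your proposal identifies the right obstruction — the half-integer sum constraints that make $\mathcal{V}^{\mathbb{U}}_{(k;q)}$ non-integral and therefore threaten to produce a quasi-polynomial rather than a polynomial — but it does not actually resolve it, and the route you guess the authors take is not the one they use. The paper's argument is quite different: it first applies the bijection (from \cite{AssiotisKeating,CbetaE}) between $\mathcal{UP}_{(k;q;N)}$ and the set of ordinary Gelfand--Tsetlin patterns of length $2kq$ with \emph{prescribed} top row $(N,\dots,N,0,\dots,0)$ and with the sums of the rows $2q,4q,\dots,2kq$ pinned to the integer multiples $Nq, 2Nq, \dots, kNq$. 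After this change of model the top row is no longer a free summation variable; it is a fixed multiple $N\boldsymbol\lambda$ of a fixed $0$--$1$ signature, and all the row-sum constraints become integer linear functions of $N$ (no half-integers). At that point the paper invokes a single external lemma (Proposition 2.6 of \cite{LoeraMcAllister}): the function $N \mapsto \mathsf{f}_{\boldsymbol\lambda,\boldsymbol\mu}(N)$ counting GT patterns with top row $N\boldsymbol\lambda$ and prescribed row sums $\sum_{i\le j} N\mu_i$ is a genuine polynomial. Summing over the finitely many admissible $\boldsymbol\mu$ gives the result.

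Concretely, the gap in your argument is the step ``after which every remaining summation is over a full box $\{0,\dots,N\}$, and each such summation of a polynomial is manifestly a polynomial.'' Eliminating $v_1^{(2jq)}$ via equation (\ref{unitary_fixed}) does not leave a free box: the eliminated variable must still lie in $[0,N]$ and still participates in the interlacing inequalities with its diagonal neighbours, so the residual summation region is precisely the non-integral polytope $\mathcal{V}^{\mathbb{U}}_{(k;q)}$ that Proposition \ref{NonIntegral} shows is the source of the problem — you are back where you started. Your first ``cleanest route'' (the doubling substitution) is also unfinished: you correctly suspect it requires a period-collapse argument comparing residue classes of $N$, but you do not carry it out, and it is exactly the nontrivial content that the paper instead imports via the cited result of De Loera and McAllister. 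What you have is a reasonable reduction to a known open difficulty, together with an acknowledgement that the difficulty is open; that is not yet a proof.
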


\begin{proof} Let $M\in \mathbb{N}$. For any signature $\boldsymbol{\lambda} \in \mathcal{S}_M^+$ and non-negative integer vector $\boldsymbol{\mu}\in \mathbb{Z}_{\ge 0}^M$ consider the following function, for $N \in \mathbb{N}$:
\begin{align*}
\mathsf{f}_{\boldsymbol{\lambda},\boldsymbol{\mu}}(N)=\#\bigg\{ \textnormal{unitary Gelfand-Tsetlin patterns with top row } N\boldsymbol{\lambda} \textnormal{ and sum }\\ \textnormal{ of entries of the j-th row equal to } \sum_{i=1}^jN\mu_i   \bigg\}.
\end{align*}
Then, it is known that, see for example Proposition 2.6 in \cite{LoeraMcAllister}, $\mathsf{f}_{\boldsymbol{\lambda},\boldsymbol{\mu}}$ is a polynomial.

We now make the following observation, see \cite{AssiotisKeating,CbetaE} for a proof\footnote{There this observation was used in the opposite direction, in going from the set (\ref{AlternativeSet}) to $\mathcal{UP}_{\left(k;q;N\right)}$.}. For any $k,q,N\in \mathbb{N}$, the set $\mathcal{UP}_{\left(k;q;N\right)}$ is in bijection with the following set:
\begin{align}\label{AlternativeSet}
\bigg\{\textnormal{unitary Gelfand-Tsetlin patterns of length } 2kq \textnormal{ with top row } (N,\dots,N,0,\dots,0) \nonumber\\ 
\textnormal{with } N \textnormal { and } 0 \textnormal{ each appearing } kq \textnormal{ times and sum of entries in the 2jq-th row }\nonumber\\ \textnormal{ equal to } Njq, \textnormal{ for } j=1,\dots, k\bigg\}.
\end{align}
Hence, with $\boldsymbol{\lambda}=(1,\dots,1,0,\dots,0)$, where $1$'s and $0$'s each appear $kq$ times, we obtain:
\begin{align*}
  \#\mathcal{UP}_{\left(k;q;N\right)} = \sum_{\substack{\boldsymbol{\mu}\in \mathbb{Z}_{\ge 0}^{2kq}: \sum_{i=1}^{2jq}\mu_i=jq,\\  j=1,\dots,k}} \mathsf{f}_{\boldsymbol{\lambda},\boldsymbol{\mu}}(N), \ \ \forall N \in \mathbb{N},
\end{align*}
from which the conclusion follows.
\end{proof}

\subsection{Symplectic case preliminaries}
We proceed analogously to the previous section. For $k,q,N \in \mathbb{N}$ we express  $\textnormal{MoM}_{\mathbb{SP}(2N)}(k;q)$ in terms of two combinatorial problems. Firstly, one in terms of certain patterns and then one in terms of lattice points in the dilates of an appropriate polytope.

\begin{defn}\label{sym_def}
A symplectic Gelfand-Tsetlin pattern of length $2M$ is a sequence of $2M$ signatures $\left(\boldsymbol\lambda^{\left(j\right)}\right)_{j=1}^{2M}$ such that:
$$\boldsymbol\lambda^{\left(1\right)} \prec \boldsymbol\lambda^{\left(2\right)} \prec \boldsymbol\lambda^{\left(3\right)}\prec\cdots \prec \boldsymbol\lambda^{\left(2M\right)},$$
and $\boldsymbol\lambda^{\left(j\right)} \in \mathcal{S}_{\left\lceil \frac{j}{2} \right\rceil}^+$. If instead $\boldsymbol\lambda^{(j)} \in \mathcal{W}_{\left\lceil \frac{j}{2} \right\rceil}^+$ we say that $\left(\boldsymbol\lambda^{\left(j\right)}\right)_{j=1}^{2M}$ is a continuous symplectic Gelfand-Tsetlin pattern. 

\end{defn}

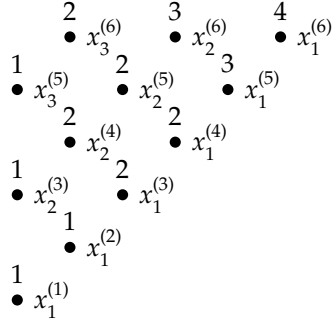
\begin{figure}[H]
\centering
\begin{tikzpicture}[scale = 0.7]
\fill (0,0) circle (3pt) node[right=0.1cm] {$x_1^{(1)}$} node[above=0.1cm] {1}; 
\fill (1,1) circle (3pt) node[right=0.1cm] {$x_1^{(2)}$} node[above=0.1cm] {1}; 
\fill (0,2) circle (3pt) node[right=0.1cm] {$x_2^{(3)}$} node[above=0.1cm] {1}; 
\fill (2,2) circle (3pt) node[right=0.1cm] {$x_1^{(3)}$} node[above=0.1cm] {2}; 
\fill (3,3) circle (3pt) node[right=0.1cm] {$x_1^{(4)}$} node[above=0.1cm] {2}; 
\fill (1,3) circle (3pt) node[right=0.1cm] {$x_2^{(4)}$} node[above=0.1cm] {2}; 
\fill (4,4) circle (3pt) node[right=0.1cm] {$x_1^{(5)}$} node[above=0.1cm] {3}; 
\fill (2,4) circle (3pt) node[right=0.1cm] {$x_2^{(5)}$} node[above=0.1cm] {2}; 
\fill (0,4) circle (3pt) node[right=0.1cm] {$x_3^{(5)}$} node[above=0.1cm] {1}; 
\fill (5,5) circle (3pt) node[right=0.1cm] {$x_1^{(6)}$} node[above=0.1cm] {4}; 
\fill (3,5) circle (3pt) node[right=0.1cm] {$x_2^{(6)}$} node[above=0.1cm] {3}; 
\fill (1,5) circle (3pt) node[right=0.1cm] {$x_3^{(6)}$} node[above=0.1cm] {2}; 
\end{tikzpicture}
\captionsetup{justification=raggedright,singlelinecheck=off}
\caption{An example of a symplectic Gelfand-Tsetlin pattern with $2M=6$ rows.}
\label{fig:symplecticGT_pattern}
\end{figure} 

We have the following analogue of Definition \ref{UnitaryPatterns}:

\begin{defn}
\label{sym_cha}
Let $k,q,N \in \mathbb{N}$. Then, let $\mathcal{SP}_{\left(k;q;N\right)} \subset \mathbb{Z}^{kq(2kq+1)}$, consisting of elements $\left(p_i^{(j)}\right)$, be defined as follows. \begin{enumerate}
    \item For all $(i,j) \in \left\{(m,n) \in \mathbb{N}^2: n \leq 4kq-1, m \leq  kq-\left\lfloor \frac{\left|2kq-n\right|}{2} \right\rfloor \right\}$ we have
    $0 \leq  p_i^{\left(j\right)} \leq N.$
    \item $\left(\boldsymbol p^{\left(j\right)}\right)_{j=1}^{2kq}$ and $\left(\boldsymbol p^{\left(4kq-j\right)}\right)_{j=1}^{2kq}$ form symplectic Gelfand-Tsetlin patterns.
    \item The following constraints, referred to as the sum constraints, are satisfied. If $k$ is even then for $i = 1,\ldots,\frac{k}{2}$ we require:
\begin{align}\label{symplectic_low_constraints}\sum_{j=\left(2i-2\right)q+1}^{\left(2i-1\right)q}\left[\left|\boldsymbol p^{\left(2j\right)}\right|-2\left|\boldsymbol p^{\left(2j-1\right)}\right|+\left|\boldsymbol p^{\left(2j-2\right)}\right|\right] = \sum_{j=\left(2i-1\right)q+1}^{2iq}\left[\left|\boldsymbol p^{\left(2j\right)}\right|-2\left|\boldsymbol p^{\left(2j-1\right)}\right|+\left|\boldsymbol p^{\left(2j-2\right)}\right|\right] \end{align}

\begin{align}\label{symplectic_high_constraints}
\sum_{j=\left(2i-2\right)q+1}^{\left(2i-1\right)q}\left[\left|\boldsymbol p^{\left(4kq-2j\right)}\right|-2\left|\boldsymbol p^{\left(4kq-2j+1\right)}\right|+\left|\boldsymbol p^{\left(4kq-2j+2\right)}\right|\right] &=\nonumber\\
&\hspace{-2.8cm} \sum_{j=\left(2i-1\right)q+1}^{2iq}\left[\left|\boldsymbol p^{\left(4kq-2j\right)}\right|-2\left|\boldsymbol p^{\left(4kq-2j+1\right)}\right|+\left|\boldsymbol p^{\left(4kq-2j+2\right)}\right|\right].
\end{align}

If $k$ is odd we have the same constraints for $i = 1,\ldots,\frac{k-1}{2}$ as well as
\begin{align}\label{symplectic_odd_constraint}\sum_{j=\left(k-1\right)q+1}^{kq}\left[\left|\boldsymbol p^{\left(2j\right)}\right|-2\left|\boldsymbol p^{\left(2j-1\right)}\right|+\left|\boldsymbol p^{\left(2j-2\right)}\right|\right] &=\nonumber \\ \sum_{j=\left(k-1\right)q+1}^{kq}&\left[\left|\boldsymbol p^{\left(4kq-2j\right)}\right|-2\left|\boldsymbol p^{\left(4kq-2j+1\right)}\right|+\left|\boldsymbol p^{\left(4kq-2j+2\right)}\right|\right].\end{align}

\end{enumerate}
In (\ref{symplectic_low_constraints}),(\ref{symplectic_high_constraints}) and (\ref{symplectic_odd_constraint}) above we take the sum of any row that doesn't exist to be zero. Namely, we always let $\left|\boldsymbol p^{\left(0\right)}\right| = \left|\boldsymbol p^{\left(4kq\right)}\right| = 0$. Note that we always have $k$ sum constraints. Finally, we denote the analogous set of arrays with entries strictly between $0$ and $N$ and strict interlacing by $\mathcal{SP}_{\left(k;q;N\right)}^{\neq}$.
\end{defn}

The following was proven in Section 4 of \cite{ABK}.

\begin{prop}\label{sym_pattern}
Let $k,q\in\mathbb{N}$. Then for all $N\in \mathbb{N}$ we have: $$\textnormal{MoM}_{\mathbb{SP}(2N)}\left(k;q\right) = \#\mathcal{SP}_{\left(k;q;N\right)}.$$
\end{prop}
We note that the $k$ sum constraints fix $k$ entries of the pattern in terms of the others. We then choose $k$ entries to fix and define the following index set to keep track of them.\begin{align*}\mathcal{D}_{(k;q)}^{\mathbb{SP}} = \left\{(1,4kq-1)\right\}&\cup\left\{\left(\frac{n}{2},n\right),  n=4q,8q,\ldots,4\left\lfloor\frac{k}{2}\right\rfloor q \right\}\\&\cup\left\{\left(\frac{4kq-n}{2},n\right),n=4\left(\left\lfloor\frac{k}{2}\right\rfloor+1\right)q,4\left(\left\lfloor\frac{k}{2}\right\rfloor+2\right)q,\ldots,4(k-1)q\right\}.\end{align*} See Figure \ref{fig:symplectick4} for a visualisation of $\mathcal{D}_{(k;q)}^{\mathbb{SP}}$. The free variables are then all the others, namely:
\begin{align*}\mathcal{S}_{(k;q)}^{\mathbb{SP}} = \left\{(i,j) \in \mathbb{N}^2:1\le j\leq 4kq-1,1 \le i \leq kq-\left\lfloor\frac{\left|2kq-j\right|}{2}\right\rfloor\right\} \Big\backslash \mathcal{D}^{\mathbb{SP}}_{(k;q)}.\end{align*} 
%\begin{align*}
%\mathcal{S}_{\left(k;q\right)}^{\mathbb{SP}} =& 
%\biggl\{\left(i,j\right) \in \mathbb{N}^2 : i \leq \left\lfloor %frac{j+1}{2} \right\rfloor \textnormal{ and } j \leq 2kq;\\ 
%&\hspace{2cm} \textnormal{or } 1 \leq i \leq \left\lfloor %\frac{4kq-j+1}{2}\right\rfloor \textnormal{ and } 2kq+1 \leq j %\leq 4kq-1; \\
%&\hspace{2cm} j \neq 4q,8q,\ldots,4(k-1)q \biggr\} \\
%&\cup \left\{\left(i,4jq\right) \in \mathbb{N}^2 \colon i \leq %2jq-1 \textnormal{ and } j \leq \left\lfloor %\frac{k}{2}\right\rfloor ; \right.\\
%&\hspace{2cm} \left. \textnormal{or } 1 \leq i \leq %2\left(k-j\right)q-1 \textnormal{ and } \left\lfloor %\frac{k}{2}\right\rfloor +1\leq j < k \right\}.
%\end{align*}
We then have the following analogue of Definition \ref{unitary_polytope}.

\begin{defn}\label{DefSymplPol}
Let $k,q\in \mathbb{N}$. We define $\mathcal{V}_{(k,q)}^{\mathbb{SP}} \subset \mathbb{R}^{kq(2kq+1)-k}$, consisting of elements $\left(v_i^{(j)}\right)$, by the following conditions:

\begin{enumerate}
    \item For $(i,j) \in \mathcal{S}_{(k;q)}^{\mathbb{SP}}$ we have $0 \leq v_i^{(j)} \leq 1$.
    \item We additionally define the following $v_i^{(j)}$, namely for $(i,j) \in \mathcal{D}_{(k;q)}^{\mathbb{SP}}$:\begin{align*}
& v_{\frac{n}{2}}^{\left(n\right)},  \textnormal{ for }n=4q,8q,\ldots,4\left\lfloor\frac{k}{2}\right\rfloor q,\\
 & v_{\frac{4kq-n}{2}}^{\left(n\right)},  \textnormal{ for } n = 4\left(\left\lfloor \frac{k}{2} \right\rfloor+1\right)q,  4\left(\left\lfloor \frac{k}{2} \right\rfloor+2\right)q,\ldots,4\left(k-1\right)q, \\  
 & v_1^{\left(4kq-1\right)},
\end{align*} as the solutions to (\ref{symplectic_low_constraints}), (\ref{symplectic_high_constraints}), (\ref{symplectic_odd_constraint}) and require that each of these also belongs in $[0,1]$.
\iffalse
\begin{align*}
& 0 \leq v_{\frac{n}{2}}^{\left(n\right)} \leq 1,  \textnormal{ for }n=4q,8q,\ldots,4\Bigg\lfloor\frac{k}{2}\Bigg \rfloor q,\\
 & 0 \leq v_{\frac{4kq-n}{2}}^{\left(n\right)} \leq 1,  \textnormal{ for } n = 4\left(\left\lfloor \frac{k}{2} \right\rfloor+1\right)q,\ldots,4\left(k-1\right)q, \\  
 & 0 \leq v_1^{\left(4kq-1\right)} \leq 1.
 \end{align*}
 \fi
 \item Both $\left(\boldsymbol{v}^{\left(j\right)}\right)_{j=1}^{2kq}$ and $\left(\boldsymbol v^{\left(4kq-j\right)}\right)_{j=1}^{2kq}$ form continuous symplectic Gelfand-Tsetlin patterns.
\end{enumerate}
\end{defn}

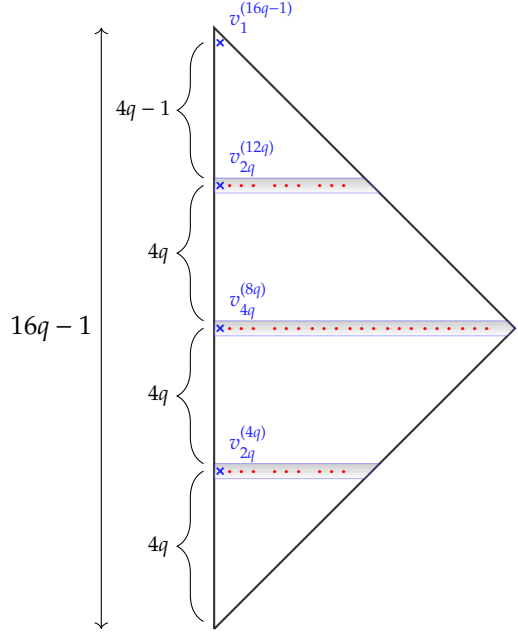
\begin{figure}[H]
\centering
\begin{tikzpicture}
\draw[color=black!80, thick] (0,-4)--(4,0)--(0,4)--(0,-4);

\shade[color=blue,draw,opacity = 0.3] (0,2)--(2,2)--(2.2,1.8)--(0,1.8)--cycle;
\shade[color=blue,draw,opacity = 0.3] (0,0.1)--(3.9,0.1)--(4,0)--(3.9,-0.1)--(0,-0.1)--cycle;
\shade[color=blue,draw,opacity = 0.3] (0,-2)--(2,-2)--(2.2,-1.8)--(0,-1.8)--cycle;
%node
\draw (0.08,3.8) node[line width=0.25mm, cross,blue!85] {} node[above right,blue!85] {\footnotesize $v_1^{(16q-1)}$};;
\draw (0.08,1.9) node[line width=0.25mm, cross,blue!85] {} node[above right,blue!85] {\footnotesize $v_{2q}^{(12q)}$};
\draw (0.08,0) node[line width=0.25mm, cross,blue!85] {} node[above right,blue!85]{\footnotesize$v_{4q}^{(8q)}$};
\draw (0.08,-1.9) node[line width=0.25mm, cross,blue!85] {} node[above right,blue!85] {\footnotesize $v_{2q}^{(4q)}$};

\node[red] (a) at (0.4,1.9) {\ldots};
\node[red] (b) at (1,1.9) {\ldots};
\node[red] (c) at (1.6,1.9) {\ldots};
\node[red] (d) at (.4,0) {\ldots};
\node[red] (e) at (1,0) {\ldots};
\node[red] (f) at (1.5,0) {\ldots};
\node[red] (g) at (2,0) {\ldots};
\node[red] (h) at (2.5,0) {\ldots};
\node[red] (i) at (3,0) {\ldots};
\node[red] (j) at (3.5,0) {\ldots};
\node[red] (k) at (0.4,-1.9) {\ldots};
\node[red] (l) at (1,-1.9) {\ldots};
\node[red] (m) at (1.6,-1.9) {\ldots};
%line
%\draw[color=blue!80,thin] (0,2)--(2,2);
%\draw[color=blue!80,thin] (0,1.8)--(2.2,1.8);
%\draw[color=blue!80,thin] (0,0.1)--(3.9,0.1);
%\draw[color=blue!80,thin] (0,-0.1)--(3.9,-0.1);
%\draw[color=blue!80,thin] (0,-2)--(2,-2);
%\draw[color=blue!80,thin] (0,-1.8)--(2.2,-1.8);
%notation
\draw [decorate,decoration={brace,amplitude=10pt},xshift=-4pt,yshift=0pt] (0,2)--(0,3.8) node [black,midway,xshift=-0.8cm] 
{\footnotesize $4q-1$};
\draw [decorate,decoration={brace,amplitude=10pt},xshift=-4pt,yshift=0pt] (0,0.1)--(0,1.9) node [black,midway,xshift=-0.6cm] 
{\footnotesize $4q$};
\draw [decorate,decoration={brace,amplitude=10pt},xshift=-4pt,yshift=0pt] (0,-1.8)--(0,0) node [black,midway,xshift=-0.6cm] 
{\footnotesize $4q$};
\draw [decorate,decoration={brace,amplitude=10pt},xshift=-4pt,yshift=0pt] (0,-3.9)--(0,-1.9) node [black,midway,xshift=-0.6cm] 
{\footnotesize $4q$};
\draw[black,<->] (-1.5,-4)--(-1.5,4) node[midway,left] {$16q-1$};

\end{tikzpicture}
\captionsetup{justification=raggedright,singlelinecheck=off}
\caption{A visualisation of $\mathcal{D}_{(4;q)}^{\mathbb{SP}}$, corresponding to the four fixed elements depicted as blue crosses that are fixed by solving for the sum constraints (\ref{symplectic_low_constraints}), (\ref{symplectic_high_constraints}), (\ref{symplectic_odd_constraint}). Note that the shaded rows appear in two consecutive sum constraints (\ref{symplectic_low_constraints}), (\ref{symplectic_high_constraints}), (\ref{symplectic_odd_constraint}).}
\label{fig:symplectick4}
\end{figure}

\iffalse
\begin{figure}[H]
\centering
\begin{tikzpicture}
\path[draw=blue!95!black,ultra thick,fill=blue!95!black,fill opacity=0.250000] (0,0)--(2,0)--(2,2)--cycle;
%node
\fill[green!70!black] (0,0) circle (3pt) node[below left] {(0,0)};
\fill[green!70!black] (2,0) circle (3pt) node[below right] {(1,0)};
\fill[green!70!black] (2,2) circle (3pt) node[above right] {(1,1)};
\end{tikzpicture}
\captionsetup{justification=raggedright,singlelinecheck=off}
\caption{Visualisation of $\mathcal{V}_{(1;1)}^{\mathbb{SP}}$}
\label{fig:polytope}
\end{figure}

\fi

\begin{rmk}
It follows from the results in Section 4 of \cite{ABK} that $\mathcal{V}_{\left(k;q\right)}^{\mathbb{SP}}$ is full-dimensional in $\mathbb{R}^{kq\left(2kq+1\right)-k}$. Due to the complexity of the sum constraints this is not as readily evident as in the unitary case.
\end{rmk}

\begin{rmk}
It is not hard to see that in the simplest possible case $(k,q)=(1,1)$ we have that $\mathcal{V}_{\left(1;1\right)}^{\mathbb{SP}}$ is simply a triangle with vertices at $(0,0), (1,0), (1,1)$. For higher values of the parameters $k,q$, $\mathcal{V}^{\mathbb{SP}}_{\left(k;q\right)}$ is at least eight-dimensional.
\end{rmk}

As in the unitary case the following is readily seen to be true:

\begin{prop}\label{PropLatticePointsConnSp}
Let $k,q\in \mathbb{N}$. Then for all $N\in \mathbb{N}$ we have:
$$\textnormal{MoM}_{\mathbb{SP}(2N)}\left(k;q\right) =  \#\left(\mathbb{Z}^{kq\left(2kq+1\right)-k}\cap N\mathcal{V}_{\left(k;q\right)}^{\mathbb{SP}}\right).$$
\end{prop}
\begin{proof}
Follows from Proposition \ref{sym_pattern} by the very construction of $\mathcal{V}_{\left(k;q\right)}^{\mathbb{SP}}$.
\end{proof}

Again the following is also immediate:

\begin{lem}
\label{SP_convex_polytope}
Let $k,q \in \mathbb{N}$. $\mathcal{V}^{\mathbb{SP}}_{\left(k;q\right)}$ is a rational convex polytope.
\end{lem}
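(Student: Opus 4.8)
The statement to prove is that $\mathcal{V}^{\mathbb{SP}}_{(k;q)}$ is a rational convex polytope. This should be entirely parallel to the unitary case (Lemma \ref{unitary_convex_polytope}), so the plan is to mimic that argument. First I would observe that the defining conditions of $\mathcal{V}^{\mathbb{SP}}_{(k;q)}$ in its Definition are all \emph{linear} (in)equalities with \emph{rational} coefficients: the box constraints $0 \le v_i^{(j)} \le 1$ for $(i,j) \in \mathcal{S}^{\mathbb{SP}}_{(k;q)}$ are trivially linear; the interlacing conditions in Condition 3 are, by Definition \ref{interlacing}, finite systems of linear inequalities of the form $v_a^{(b)} \ge v_c^{(d)}$; and the $k$ sum constraints (\ref{symplectic_low_constraints}), (\ref{symplectic_high_constraints}), (\ref{symplectic_odd_constraint}) are linear equations in the $v_i^{(j)}$ with integer coefficients. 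The $k$ ``dependent'' coordinates indexed by $\mathcal{D}^{\mathbb{SP}}_{(k;q)}$ are, by Condition 2, defined as the (unique) solutions of these linear equations in terms of the free variables, hence are themselves rational linear combinations of the free coordinates; substituting these expressions back into the box constraints on the $\mathcal{D}^{\mathbb{SP}}_{(k;q)}$ coordinates and into the interlacing inequalities that involve them again yields finitely many rational linear inequalities purely in the free variables.

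Next I would argue boundedness, so that the region is genuinely a polytope and not an unbounded polyhedron. This is immediate: every free coordinate satisfies $0 \le v_i^{(j)} \le 1$, so $\mathcal{V}^{\mathbb{SP}}_{(k;q)} \subseteq [0,1]^{kq(2kq+1)-k}$, a bounded set. Together with the previous paragraph, $\mathcal{V}^{\mathbb{SP}}_{(k;q)}$ is a bounded set cut out by finitely many linear inequalities with rational coefficients, which is precisely a rational convex polytope in the sense of the Definition recalled in Section \ref{SectionPrelim} (it is an intersection of finitely many closed half-spaces, hence convex, and bounded, hence a polytope; rationality of the vertices follows from rationality of the defining system by standard linear-algebra arguments, or one simply invokes that a bounded rational $H$-polyhedron is a rational polytope). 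For convexity one can alternatively note directly that $\mathcal{V}^{\mathbb{SP}}_{(k;q)}$ is an intersection of convex sets — half-spaces and hyperplanes — and is therefore convex.

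I do not expect any genuine obstacle here; the only mild subtlety, exactly as flagged in the remark preceding the lemma, is that unlike the unitary case the sum constraints (\ref{symplectic_low_constraints})--(\ref{symplectic_odd_constraint}) are more intricate, so one should be a little careful that Condition 2 really does determine the $\mathcal{D}^{\mathbb{SP}}_{(k;q)}$ coordinates uniquely and linearly — but this is guaranteed by the way $\mathcal{D}^{\mathbb{SP}}_{(k;q)}$ was chosen (one designated coordinate per sum constraint, appearing with coefficient $\pm 1$), as in the setup from Section 4 of \cite{ABK}. Hence the proof is a two-line remark: $\mathcal{V}^{\mathbb{SP}}_{(k;q)}$ is a bounded region defined by linear inequalities with rational coefficients, so it is a rational polytope, and it is an intersection of convex sets, so it is convex.
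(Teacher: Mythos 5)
Your proposal is correct and takes essentially the same approach as the paper, which simply refers back to Lemma \ref{unitary_convex_polytope} and its two-line argument: the region is bounded and cut out by finitely many linear inequalities with rational coefficients, hence a rational polytope, and is an intersection of convex sets, hence convex. The extra care you take about Condition 2 and the sum constraints is a reasonable sanity check but is not part of the paper's (terse) proof.
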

\begin{proof}
Same as Lemma \ref{unitary_convex_polytope}.
\end{proof}

Although $\mathcal{V}^{\mathbb{SP}}_{\left(1;1\right)}$ is an integral polytope this is not the case in general as we show next.

\begin{prop}\label{NonIntegralSymplectic}
Let $k,q \in \mathbb{N}$ with $k \geq 2$. Then, $\mathcal{V}^{\mathbb{SP}}_{\left(k;q\right)}$ is not an integral polytope.
\end{prop}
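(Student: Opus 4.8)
The plan is to exhibit a single non-integral vertex of $\mathcal{V}^{\mathbb{SP}}_{(k;q)}$, in direct analogy with the proof of Proposition \ref{NonIntegral}. Since $k\ge 2$, there is at least one of the sum constraints (\ref{symplectic_low_constraints}), (\ref{symplectic_high_constraints}) or (\ref{symplectic_odd_constraint}) in play, and I want to cut down the polytope by a collection of facets so that, after imposing all the interlacing equalities, a single fixed element is forced to take a value in $(0,1)$.

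First I would isolate the relevant sum constraint. The cleanest choice is the ``low'' constraint for $i=1$, namely (\ref{symplectic_low_constraints}) with $i=1$, which equates the alternating double-difference sums over rows $2j$ for $j=1,\dots,q$ with the same sums over $j=q+1,\dots,2q$. I would define a candidate vertex by the following collection of defining equalities: for every diagonally adjacent pair of coordinates appearing in the interlacing (Definition \ref{interlacing}) I set the two equal, and I also set the fixed element $v_{2q}^{(4q)}$ (the element of $\mathcal{D}^{\mathbb{SP}}_{(k;q)}$ attached to row $4q$, which exists because $k\ge 2$) equal to an adjacent coordinate, say $v_{2q}^{(4q)}=v_{2q-1}^{(4q)}$, understanding as in Proposition \ref{NonIntegral} that it is the expression solving the sum constraint that gets set equal. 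Intersecting all of these facets collapses every free coordinate of each of the two symplectic Gelfand-Tsetlin patterns $\left(\boldsymbol v^{(j)}\right)_{j=1}^{2kq}$ and $\left(\boldsymbol v^{(4kq-j)}\right)_{j=1}^{2kq}$ to a single common value, forcing the array to be essentially constant. Since a non-empty intersection of faces is a face, and the conditions pin down a unique point, this is a vertex.

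Next I would compute the value of that common coordinate. Plugging a constant array into the alternating double-difference $|\boldsymbol p^{(2j)}|-2|\boldsymbol p^{(2j-1)}|+|\boldsymbol p^{(2j-2)}|$ and summing, the telescoping structure together with the row-length pattern $\left\lceil j/2\right\rceil$ and the convention $|\boldsymbol p^{(0)}|=0$ leaves a nonzero multiple of the common value on each side of (\ref{symplectic_low_constraints}), but with different coefficients on the two sides because the row lengths in the ranges $j=1,\dots,q$ and $j=q+1,\dots,2q$ differ (one range straddles the ``growing'' part near the boundary of the pattern). Equating the two sides then yields a linear equation whose unique solution for the common value is a rational number strictly between $0$ and $1$ — one checks it is of the form $a/b$ with $0<a<b$ and $b>1$. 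Hence the vertex is non-integral, and $\mathcal{V}^{\mathbb{SP}}_{(k;q)}$ is not an integral polytope.

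The main obstacle I anticipate is purely bookkeeping: unlike the unitary case, where a single sum constraint on one row makes the telescoping immediate, here the sum constraints involve three consecutive rows under the summation and the row-length function $\left\lceil j/2\right\rceil$ changes parity-dependently, so one must be careful to identify exactly which entries of a constant array survive the telescoping and with what coefficients, and to confirm that the resulting rational value genuinely lies in the open interval $(0,1)$ (so that both the inequality constraints $0\le v\le 1$ are slack and the point is a legitimate vertex rather than lying on a coordinate facet). I would handle this by writing out the double-difference sum for a constant array explicitly in one small case, say $(k,q)=(2,1)$, to fix the pattern, and then give the general telescoping identity; this is the step that needs the most care but presents no conceptual difficulty.
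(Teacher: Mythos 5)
Your plan is sound in outline (exhibit a non-integral vertex, in analogy with Proposition \ref{NonIntegral}), but the specific construction has a genuine gap: the constant array does not determine a vertex. If you set every interlacing inequality to an equality, every coordinate (including the fixed one) collapses to a common value $c$. For a constant array, row $2j$ has $j$ entries, row $2j-1$ has $j$ entries, and row $2j-2$ has $j-1$ entries, so each double-difference term satisfies
\begin{align*}
\left|\boldsymbol p^{(2j)}\right|-2\left|\boldsymbol p^{(2j-1)}\right|+\left|\boldsymbol p^{(2j-2)}\right|
= jc-2jc+(j-1)c = -c ,
\end{align*}
with the same value $-c$ for $j=1$ as well (using $|\boldsymbol p^{(0)}|=0$). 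Therefore both sides of the $i=1$ sum constraint (\ref{symplectic_low_constraints}) equal $-qc$, and the equation is vacuously $0=0$; the same holds for every other sum constraint. Contrary to your expectation, there is no ``nonzero multiple of the common value on each side with different coefficients'': the telescoping makes every term identical. Consequently, the intersection of the faces you chose is the whole segment $\{(c,\dots,c): c\in[0,1]\}$ inside $\mathcal{V}_{(k;q)}^{\mathbb{SP}}$, a $1$-dimensional face rather than a vertex, and no non-integral vertex is produced.

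The paper avoids this by using a deliberately non-constant assignment: it sets $v_1^{(j)}=1$ for all interior rows $2\le j\le 4kq-2$, sets all other interior entries to $0$, and leaves only $v_1^{(1)}$ and $v_1^{(4kq-1)}$ free. Then $|\boldsymbol p^{(j)}|=1$ for $1<j<4kq-1$, so every double-difference with $j\ge 2$ vanishes, and the only surviving contribution on the left of (\ref{first i=1}) is $1-2v_1^{(1)}$ (and symmetrically $1-2v_1^{(4kq-1)}$ in (\ref{second i=1})). Equating the two sides forces $v_1^{(1)}=v_1^{(4kq-1)}=\tfrac{1}{2}$. This is where $k\ge 2$ is actually used: you need the constraints (\ref{symplectic_low_constraints})–(\ref{symplectic_high_constraints}) with $i=1$ to isolate the boundary double-difference, and for $k=1$ only the ``balanced'' constraint (\ref{symplectic_odd_constraint}) is available and the argument breaks down. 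If you rework your approach, you will want to follow this template of turning off all but one double-difference rather than collapsing the array to a constant.
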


\begin{proof}
We need to exhibit a single non-integral vertex. For $2 \leq j \leq 4kq-2$ define faces by setting $v_1^{(j)} = 1$. Additionally, for $3 \leq j \leq 4kq-3, 2 \leq i \leq  kq-\left\lfloor\frac{\left|2kq-j \right|}{2}\right\rfloor$ define faces by setting $v_i^{(j)} = 0$. Strictly speaking, what we mean by setting a fixed variable to zero is that we solve (\ref{symplectic_low_constraints}), (\ref{symplectic_high_constraints}), (\ref{symplectic_odd_constraint}) for it and then set the resulting equation to zero. Now, consider the intersection of all these faces. We will see that there is a unique point $\boldsymbol v$ in the intersection and that $\boldsymbol v$ is non-integral. Only $v_1^{(1)}$ and $v_1^{(4kq-1)}$ are not immediately fixed through the definitions of the faces as illustrated in Figure \ref{fig:non_integral_vertex}. Recalling the sum constraints (\ref{symplectic_low_constraints}),(\ref{symplectic_high_constraints}), consider the constraints with $i=1$, namely:

\begin{align}\label{first i=1}\sum_{j=1}^{q}\left[\left|\boldsymbol p^{\left(2j\right)}\right|-2\left|\boldsymbol p^{\left(2j-1\right)}\right|+\left|\boldsymbol p^{\left(2j-2\right)}\right|\right] = \sum_{j=q+1}^{2q}\left[\left|\boldsymbol p^{\left(2j\right)}\right|-2\left|\boldsymbol p^{\left(2j-1\right)}\right|+\left|\boldsymbol p^{\left(2j-2\right)}\right|\right],\end{align}

\begin{align}\label{second i=1}
\sum_{j=1}^{q}\left[\left|\boldsymbol p^{\left(4kq-2j\right)}\right|-2\left|\boldsymbol p^{\left(4kq-2j+1\right)}\right|+\left|\boldsymbol p^{\left(4kq-2j+2\right)}\right|\right] &=\nonumber\\
&\hspace{-2.8cm} \sum_{j=q+1}^{2q}\left[\left|\boldsymbol p^{\left(4kq-2j\right)}\right|-2\left|\boldsymbol p^{\left(4kq-2j+1\right)}\right|+\left|\boldsymbol p^{\left(4kq-2j+2\right)}\right|\right].
\end{align} Observe that, the constraints take this form since by assumption $k \neq 1$ \footnote{Note that, in the case $k=1$ there is only a single constraint, see (\ref{symplectic_odd_constraint}), which is not conducive to the remainder of the argument.}. Now, notice that for $1 < j < 4kq-1$ we have $\left|\boldsymbol p^{\left(j\right)}\right| = 1$ so that (\ref{first i=1}), (\ref{second i=1}) become:
\begin{align}
1-2v_1^{(1)}&=0,\label{first_symplectic_vertex_eq}\\
1-2v_1^{(4kq-1)}&=0.\label{second_symplectic_vertex_eq}
\end{align} From (\ref{first_symplectic_vertex_eq}), (\ref{second_symplectic_vertex_eq}) we then get $ v_1^{(1)} = v_1^{(4kq-1)} = \frac{1}{2}$. With this choice for $v_1^{(1)}$ and $v_1^{(4kq-1)}$ we note that every term, namely each quantity in square brackets, of (\ref{symplectic_low_constraints}), (\ref{symplectic_high_constraints}) and (\ref{symplectic_odd_constraint}) vanishes. This implies that the sum constraints are indeed satisfied. We conclude that the intersection of these faces define a vertex and since $v_1^{(1)} = \frac{1}{2}$ is a coordinate of the vertex, it is non-integral.
\end{proof}

\begin{figure}
\centering 
\resizebox{0.3\linewidth}{!}{
\begin{tikzpicture}[scale = 0.7]
\fill (1,0) circle (3pt) node[below right] {0};
\fill (1,2) circle (3pt) node[below right] {0};
\fill (1,4) circle (3pt) node[below right] {1};
\fill (1,-2) circle (3pt) node[below right] {0};
\fill (1,-4) circle (3pt) node[below right] {1};

\fill (2,3) circle (3pt) node[below right] {1};
\fill (2,-3) circle (3pt) node[below right] {1};

\fill (0,3) circle (3pt) node[below right] {0};
\fill (0,-3) circle (3pt) node[below right] {0};
\draw (0,5) node[line width=0.3mm, cross=3pt,blue!85] {} node[above right,blue!85] {$v_1^{(4kq-1)}$};
\draw (0,-5) node[line width=0.3mm, cross=3pt,blue!85] {} node[below right,blue!85] {$v_1^{(1)}$};

\fill (3,0) circle (3pt) node[below right] {0};
\fill (3,2) circle (3pt) node[below right] {1};
\fill (3,-2) circle (3pt) node[below right] {1};

\fill (5,0) circle (3pt) node[below right] {1};

\node (a) at (0,1) {$\vdots$};
\node (b) at (0,-1) {$\vdots$};
\node (c) at (2,1) {$\vdots$};
\node (e) at (2,-1) {$\vdots$};
\node (d) at (4,0) {$\ldots$};
\node[rotate=-45] (f) at (4,1) {$\ldots$};
\node[rotate=45] (g) at (4,-1) {$\ldots$};
\end{tikzpicture}}
\captionsetup{justification=raggedright,singlelinecheck=off}
\caption{The entries in black are immediately fixed by the definitions of the faces in the proof of Proposition \ref{NonIntegralSymplectic}. Assuming $k > 1$, ensures that the entries depicted as blue crosses are also fixed and equal to $\frac{1}{2}$.}
\label{fig:non_integral_vertex}
\end{figure}

\section{Proofs for the unitary group}

We begin by quickly establishing the equality of $\mathsf{P}_{\left(k;q\right)}^{\mathbb{U}}\left(\cdot\right)$ and $\mathcal{L}(\mathcal{V}_{(k;q)}^{\mathbb{U}},\cdot)$.

\begin{prop}\label{mainProU}
Let $k,q \in \mathbb{N}$. Then for all $N \in \mathbb{N}$ we have:
$$\mathsf{P}_{\left(k;q\right)}^{\mathbb{U}}\left(N\right) = \mathcal{L}\left(\mathcal{V}_{\left(k;q\right)}^{\mathbb{U}},N\right).$$
\end{prop}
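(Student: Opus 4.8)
The plan is to identify the polynomial $\mathsf{P}_{(k;q)}^{\mathbb{U}}$ supplied abstractly by Theorem \ref{PolyThm} with the quasi-polynomial extension of the Ehrhart counting function of $\mathcal{V}_{(k;q)}^{\mathbb{U}}$, using that the two functions already agree on all of $\mathbb{N}$ and that a quasi-polynomial vanishing on $\mathbb{N}$ vanishes on $\mathbb{Z}$.

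First I would recall that by Theorem \ref{PolyThm} the function $\mathsf{P}_{(k;q)}^{\mathbb{U}}$ is a genuine polynomial with $\mathsf{P}_{(k;q)}^{\mathbb{U}}(N) = \textnormal{MoM}_{\mathbb{U}(N)}(k;q)$ for every $N \in \mathbb{N}$. On the other hand, $\mathcal{V}_{(k;q)}^{\mathbb{U}}$ is a rational convex polytope by Lemma \ref{unitary_convex_polytope}, so by Theorem \ref{Ehrhart_polynomials} the function $N \mapsto \mathcal{L}(\mathcal{V}_{(k;q)}^{\mathbb{U}}, N)$ is a quasi-polynomial on $\mathbb{Z}_{\geq 0}$, and it extends uniquely to a quasi-polynomial on $\mathbb{Z}$ (as noted after Theorem \ref{Ehrhart_polynomials}, this uses Lemma \ref{vanishing_quasi}); we keep the same symbol for the extension. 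By Proposition \ref{PropLatticePointsConnU} we have $\mathcal{L}(\mathcal{V}_{(k;q)}^{\mathbb{U}}, N) = \textnormal{MoM}_{\mathbb{U}(N)}(k;q)$ for all $N \in \mathbb{N}$, hence $\mathsf{P}_{(k;q)}^{\mathbb{U}}$ and $\mathcal{L}(\mathcal{V}_{(k;q)}^{\mathbb{U}},\cdot)$ coincide on $\mathbb{N}$.

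Then I would set $g(N) = \mathsf{P}_{(k;q)}^{\mathbb{U}}(N) - \mathcal{L}(\mathcal{V}_{(k;q)}^{\mathbb{U}}, N)$ for $N \in \mathbb{Z}$. Since a polynomial is a quasi-polynomial of period $1$, and since the difference of two quasi-polynomials is a quasi-polynomial (on each residue class modulo the common period one simply subtracts the corresponding polynomial pieces), $g$ is a quasi-polynomial on $\mathbb{Z}$. By the previous step $g$ vanishes on all of $\mathbb{N}$, so Lemma \ref{vanishing_quasi} forces $g \equiv 0$ on $\mathbb{Z}$, which is precisely the asserted identity.

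There is no real obstacle in this argument; the only point deserving a line of care is the bookkeeping that $\mathsf{P}_{(k;q)}^{\mathbb{U}} - \mathcal{L}(\mathcal{V}_{(k;q)}^{\mathbb{U}},\cdot)$ is again a quasi-polynomial, which is what licenses the use of Lemma \ref{vanishing_quasi}. I would also emphasise that this proposition does \emph{not} by itself reassert that $\mathcal{L}(\mathcal{V}_{(k;q)}^{\mathbb{U}},\cdot)$ is a bona fide polynomial (this is period collapse, handled by the separate combinatorial argument in Section \ref{SectionPrelim}); rather, it records the identification needed so that Ehrhart--Macdonald reciprocity (Theorem \ref{ehrhart_macdonald}) can subsequently be applied to access the values of $\mathsf{P}_{(k;q)}^{\mathbb{U}}$ at negative integers.
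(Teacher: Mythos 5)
Your proposal is correct and follows essentially the same route as the paper: establish $\mathsf{P}_{(k;q)}^{\mathbb{U}} = \mathcal{L}(\mathcal{V}_{(k;q)}^{\mathbb{U}},\cdot)$ on $\mathbb{N}$ via Proposition \ref{PropLatticePointsConnU}, Lemma \ref{unitary_convex_polytope} and Theorem \ref{Ehrhart_polynomials}, then invoke Lemma \ref{vanishing_quasi} to extend the identity to $\mathbb{Z}$. Your version simply makes explicit the bookkeeping step that the difference is again a quasi-polynomial, which the paper leaves implicit.
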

\begin{proof} First, observe that by combining Proposition \ref{PropLatticePointsConnU}, Lemma \ref{unitary_convex_polytope} and Theorem \ref{Ehrhart_polynomials} we have for all $N \in \mathbb{N}$:
$$\textnormal{MoM}_{\mathbb{U}(N)}\left(k;q\right) = \mathcal{L}\left(\mathcal{V}_{\left(k;q\right)}^{\mathbb{U}},N\right).$$
Then, applying Lemma \ref{vanishing_quasi} gives the statement of the proposition.
\end{proof}

\subsection{Classification of the integer roots}

We will now leverage Theorem \ref{ehrhart_macdonald} to evaluate the moments of moments extension at negative integers, enabling us to find the roots.

\begin{proof}[Proof of the classification of integer roots in Theorem \ref{MainThmU}] Let $k,q \in \mathbb{N}$. By Proposition \ref{mainProU} above and the reciprocity Theorem \ref{ehrhart_macdonald}, we have that for any $N \in \mathbb{N}$:
\begin{align}\label{reciprocityunitary}
  \mathsf{P}_{\left(k;q\right)}^{\mathbb{U}}\left(-N\right) = \left(-1\right)^{k^2q^2-\left(k-1\right)}\overline{\mathcal{L}}\left(\mathcal{V}_{\left(k;q\right)}^{\mathbb{U}},N\right).  
\end{align}
Observe that the dilate of the interior of $\mathcal{V}_{\left(k;q\right)}^{\mathbb{U}}$, $N\textnormal{int}\left(\mathcal{V}_{\left(k;q\right)}^{\mathbb{U}}\right)$, is given by the conditions:
\begin{enumerate}
    \item For $\left(i,j\right) \in \mathcal{S}_{\left(k;q\right)}^{\mathbb{U}},$
$0 < v_i^{\left(j\right)} < N.$
\item For $l = 1,\ldots,k-1$ we define:
 $$v_1^{\left(2ql\right)} = \frac{kq-\left|kq-2ql\right|}{2}- v_2^{\left(2ql\right)}-\cdots- v_{kq-\left|kq-2ql\right|}^{\left(2ql\right)},$$
 and require that
$0 < v_1^{\left(2ql\right)} < N.$
\item $\left(\boldsymbol v^{\left(j\right)}\right)_{j=1}^{kq}$ and $\left(\boldsymbol v^{\left(2kq-j\right)}\right)_{j=1}^{kq}$ both form strict continuous unitary Gelfand-Tsetlin patterns.
\end{enumerate}

Let $N \in \{1,\ldots,2kq-1\}$ and suppose that $\boldsymbol v \in N\textnormal{int}\left(\mathcal{V}_{\left(k;q\right)}^{\mathbb{U}}\right) \cap \mathbb{Z}^{k^2q^2-(k-1)}$. Consider the corresponding strict pattern. The key observation is that, see Figure \ref{fig:unitary_root} for an illustration:

$$v_{kq}^{\left(kq\right)} < v_{kq-1}^{(kq-1)} < \cdots < v_1^{\left(1\right)} < v_{1}^{(2)} < \cdots < v_{1}^{(kq-1)} < v_{1}^{\left(kq\right)},$$
so in particular all $2kq-1$ of these elements are distinct. However, all of these must also lie in the set $\left\{1,\ldots,N-1\right\}$, which has only $N-1$ elements. Hence, for $N \in \left\{1,\ldots,2kq-1\right\}$ by the pigeonhole principle we obtain a contradiction. Thus, no lattice points exist in $N\textnormal{int}\left(\mathcal{V}_{\left(k;q\right)}^{\mathbb{U}}\right)$ for $N \in \left\{1,\ldots,2kq-1\right\}$.  Then, from (\ref{reciprocityunitary}) we obtain that for any such $N$, $\mathsf{P}_{\left(k;q\right)}^{\mathbb{U}}\left(-N\right) = 0.$

Finally, to see that there are no more integer roots we argue as follows. First, observe that for $N \in \mathbb{Z}_{\geq 0}$ the origin lies in $N\mathcal{V}_{(k;q)}^{\mathbb{U}}$ so that $\mathsf{P}_{(k;q)}^{\mathbb{U}}(N) \geq 1$ and then apply the unitary symmetry property (\ref{unitary_symmetry}) which is proven in the following subsection.
\end{proof}

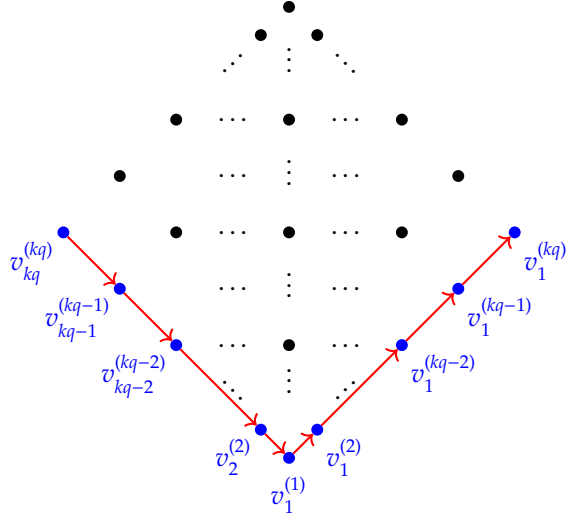
\begin{figure}[H]
\centering
\begin{tikzpicture}[scale = 0.75]
\fill[blue] (0,0) circle (3pt) node[below=0.13cm] {$v_1^{(1)}$};
%\fill (0,0.75) circle (2pt);
\fill[blue] (0.5,0.5) circle (3pt) node[below right] {$v_1^{(2)}$};
\fill[blue] (-0.5,0.5) circle (3pt) node[below left] {$v_2^{(2)}$};
\fill (0,2) circle (3pt);
\fill[blue] (-3,3) circle (3pt) node[below left] {$v_{kq-1}^{(kq-1)}$};
\fill[blue] (-4,4) circle (3pt) node[below left] {$v_{kq}^{(kq)}$};
\fill (0,4) circle (3pt);
\fill[blue] (4,4) circle (3pt) node[below right] {$v_{1}^{(kq)}$};
\fill (0,6) circle (3pt);
\fill (0,8) circle (3pt);
\fill[blue] (3,3) circle (3pt) node[below right] {$v_{1}^{(kq-1)}$};
\fill (3,5) circle (3pt);
\fill (-3,5) circle (3pt);
\fill[blue] (-2,2) circle (3pt) node[below left] {$v_{kq-2}^{(kq-2)}$};
\fill (-2,4) circle (3pt);
\fill (-2,6) circle (3pt);
\fill[blue] (2,2) circle (3pt) node[below right] {$v_{1}^{(kq-2)}$};
\fill (2,4) circle (3pt);
\fill (2,6) circle (3pt);
\fill (-0.5,7.5) circle (3pt);
\fill (0.5,7.5) circle (3pt);

\node (a) at (0,1.5) {$\vdots$};
\node (b) at (0,3.2) {$\vdots$};
\node (c) at (0,5.2) {$\vdots$};
\node (d) at (0,7.2) {$\vdots$};

\node (e) at (-1,2) {$\ldots$};
\node (f) at (-1,4) {$\ldots$};
\node (g) at (-1,6) {$\ldots$};
\node (h) at (1,2) {$\ldots$};
\node (i) at (1,4) {$\ldots$};
\node (j) at (1,6) {$\ldots$};
\node (k) at (1,5) {$\ldots$};
\node (l) at (-1,5) {$\ldots$};
\node (m) at (1,3) {$\ldots$};
\node (n) at (-1,3) {$\ldots$};

\node[rotate=-45] (aa) at (-1,1.2) {$\ldots$};
\node[rotate=-45] (ab) at (1,7) {$\ldots$};
\node[rotate=45] (ac) at (1,1.2) {$\ldots$};
\node[rotate=45] (ad) at (-1,7){$\ldots$};
%line
\draw[red,thick,->] (-3.925,3.925)--(-3.075,3.075); 
\draw[red,thick,->] (-2.925,2.925)--(-2.075,2.075);
\draw[red,thick,->] (-1.925,1.925)--(-0.575,0.575);
\draw[red,thick,->] (-0.425,0.425)--(-0.075,0.075);
\draw[red,thick,->] (0.075,0.075)--(0.425,0.425);
\draw[red,thick,->] (0.575,0.575)--(1.925,1.925);
\draw[red,thick,->] (2.075,2.075)--(2.925,2.925);
\draw[red,thick,->] (3.075,3.075)--(3.925,3.925);
\end{tikzpicture}
\captionsetup{justification=raggedright,singlelinecheck=off}
\caption{With $\boldsymbol{v} \in N\textnormal{int}\left(\mathcal{V}^{\mathbb{U}}_{(k;q)}\right)$ the arrows define a path along which elements are strictly increasing so we observe the labelled elements along the path to be distinct.}
\label{fig:unitary_root}
\end{figure}

\subsection{The unitary symmetry}
Next we prove the symmetry result for the unitary case. The proof proceeds by defining an explicit bijection between the lattice points in a dilation of the interior of $\mathcal{V}_{\left(k;q\right)}^{\mathbb{U}}$ and the lattice points in a less dilated version of the polytope $\mathcal{V}_{\left(k;q\right)}^{\mathbb{U}}$ itself.

\begin{prop} \label{unitary_bijection} Let $k,q,N \in \mathbb{N}$. Then, the map $\mathcal{B}^{\mathbb{U}}_{\left(k;q\right)}$ defined element-wise by:
\begin{align}
 \mathcal{B}^{\mathbb{U}}_{\left(k;q\right)}: \mathcal{UP}_{\left(k;q;N\right)} &\rightarrow \mathcal{UP}_{\left(k;q,N+2kq\right)}^{\neq},\nonumber \\
 \mathcal{B}_{\left(k;q\right)}^{\mathbb{U}}\left(\boldsymbol v\right)_{i}^{\left(j\right)} &= v_i^{\left(j\right)}+2kq+1-\left|kq-j\right|-2i,\label{UbijectionDef}  
\end{align}
is a bijection.
 \end{prop}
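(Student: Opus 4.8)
The plan is to recognise $\mathcal{B}^{\mathbb{U}}_{\left(k;q\right)}$ as the restriction of a fixed translation of $\mathbb{Z}^{k^2q^2}$. Write $c(i,j) = 2kq+1-\left|kq-j\right|-2i$ for the staircase shift, so that $\mathcal{B}_{\left(k;q\right)}^{\mathbb{U}}\left(\boldsymbol v\right)_{i}^{\left(j\right)} = v_i^{\left(j\right)}+c(i,j)$, and let $\mathcal{C}$ denote the opposite translation $w_i^{(j)}\mapsto w_i^{(j)}-c(i,j)$; then $\mathcal{B}^{\mathbb{U}}_{\left(k;q\right)}$ and $\mathcal{C}$ are mutually inverse bijections of $\mathbb{Z}^{k^2q^2}$. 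Consequently it suffices to establish two inclusions: (a) $\mathcal{B}^{\mathbb{U}}_{\left(k;q\right)}$ maps $\mathcal{UP}_{\left(k;q;N\right)}$ into $\mathcal{UP}^{\neq}_{\left(k;q;N+2kq\right)}$, and (b) $\mathcal{C}$ maps $\mathcal{UP}^{\neq}_{\left(k;q;N+2kq\right)}$ into $\mathcal{UP}_{\left(k;q;N\right)}$. Together these show that $\mathcal{B}^{\mathbb{U}}_{\left(k;q\right)}$ restricts to a bijection between the two sets, with inverse $\mathcal{C}$.

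For inclusion (a) there are three things to check. First, a direct computation gives, on the growing half $j\le kq-1$, the identities $c(i,j+1)-c(i,j)=1$ and $c(i,j)-c(i+1,j+1)=1$, together with the mirror identities on the shrinking half; hence every weak interlacing inequality in the definition of $\mathcal{UP}_{\left(k;q;N\right)}$ is converted into a strict one (with integer gap at least $1$), so $\mathcal{B}^{\mathbb{U}}_{\left(k;q\right)}\left(\boldsymbol v\right)$ has strict interlacing. Second, one checks $1\le c(i,j)\le 2kq-1$ for every admissible index $(i,j)$, so $0\le v_i^{(j)}\le N$ forces $0<\mathcal{B}^{\mathbb{U}}_{\left(k;q\right)}\left(\boldsymbol v\right)_i^{(j)}<N+2kq$. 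Third, writing $L_\ell = kq-\left|kq-2\ell q\right|$ for the length of row $2\ell q$, a one-line summation yields $\sum_{i=1}^{L_\ell}c(i,2\ell q)=kq\,L_\ell$, so the sum constraint $\left|\boldsymbol v^{(2\ell q)}\right|=NL_\ell/2$ becomes exactly $\left|\mathcal{B}^{\mathbb{U}}_{\left(k;q\right)}\left(\boldsymbol v\right)^{(2\ell q)}\right|=(N+2kq)L_\ell/2$, as required.

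For inclusion (b), the interlacing statement (strict interlacing of $\boldsymbol w$ implies weak interlacing of $\mathcal{C}(\boldsymbol w)$) and the sum constraints follow by reading the identities of the previous paragraph in reverse, uniformly over all admissible indices; these are routine. The substantive point — and the step I expect to be the main obstacle — is the pair of coordinate bounds $0\le\mathcal{C}(\boldsymbol w)_i^{(j)}\le N$, equivalently $c(i,j)\le w_i^{(j)}\le N+c(i,j)$. These do not follow from the crude bounds $1\le w_i^{(j)}\le N+2kq-1$; one must exploit the strict interlacing of $\boldsymbol w$. The plan is, for each admissible $(i,j)$, to exhibit an explicit strictly increasing chain of exactly $c(i,j)$ distinct entries of $\boldsymbol w$ terminating at $w_i^{(j)}$ — running along the interlacing from an outer corner of the diamond, up an edge, and then across to $(i,j)$ — whence $w_i^{(j)}\ge c(i,j)$ since all entries are positive integers; and, dually, a strictly increasing chain of exactly $\left|kq-j\right|+2i-2$ distinct entries, all strictly larger than $w_i^{(j)}$ and all bounded above by $N+2kq-1$, which forces $w_i^{(j)}\le N+2kq-1-\left(\left|kq-j\right|+2i-2\right)=N+c(i,j)$. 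By the reflection symmetry $j\leftrightarrow 2kq-j$ of the diamond, under which $c$ is invariant, it is enough to build these chains when $j\le kq$, in which case they lie entirely in the lower Gelfand--Tsetlin half of the pattern, where the required monotonicities are immediate from Definitions \ref{interlacing} and \ref{unitary_def}.

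Finally, combining (a) and (b): by (a), $\mathcal{B}^{\mathbb{U}}_{\left(k;q\right)}$ is a well-defined injection of $\mathcal{UP}_{\left(k;q;N\right)}$ into $\mathcal{UP}^{\neq}_{\left(k;q;N+2kq\right)}$, and it is onto because for any $\boldsymbol w$ in the target, (b) gives $\mathcal{C}(\boldsymbol w)\in\mathcal{UP}_{\left(k;q;N\right)}$ with $\mathcal{B}^{\mathbb{U}}_{\left(k;q\right)}(\mathcal{C}(\boldsymbol w))=\boldsymbol w$. Hence $\mathcal{B}^{\mathbb{U}}_{\left(k;q\right)}$ is a bijection.
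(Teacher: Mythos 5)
Your proposal is correct and takes essentially the same route as the paper: direction (a) matches verbatim, and the coordinate bounds you single out as the main obstacle in direction (b) are handled in the paper by the same observation, phrased as an induction on columns rather than as explicit chains --- strict interlacing forces every entry with shift value $c(i,j)=m$ to lie in $\{m,\ldots,N+m\}$, which is exactly your chain bound. The only cosmetic difference is that the paper does not isolate the translation $\mathcal{C}$ as a named map but simply writes the inverse formula (\ref{InverseElementDefU}) and verifies its image lies in $\mathcal{UP}_{(k;q;N)}$.
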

 
 \begin{proof}
We first check that $\mathcal{B}_{\left(k;q\right)}^{\mathbb{U}}$ is well-defined. Consider an arbitrary $\boldsymbol v \in \mathcal{UP}_{\left(k;q;N\right)}$. To ensure that $\mathcal{B}_{\left(k;q\right)}^{\mathbb{U}}$ is well-defined, namely that $\mathcal{B}_{\left(k;q\right)}^{\mathbb{U}}\left(\boldsymbol v\right)\in \mathcal{UP}_{\left(k;q;N+2kq\right)}^{\neq}$, we need to check the following conditions. For clarity we also include Figure \ref{fig:unitary_bijection_action} which captures the action of $\mathcal{B}_{\left(k;q\right)}^{\mathbb{U}}$ to each element.
\begin{enumerate}
\item $\mathcal{B}_{\left(k;q\right)}^{\mathbb{U}}\left(\boldsymbol v\right)$ has entries which are in $\{1,\ldots,N+2kq-1\}$. This is immediate since we are starting with entries in $\{0,\ldots,N\}$ and adding elements from $\left\{1,\ldots,2kq-1\right\}$.
\item $\mathcal{B}_{\left(k;q\right)}^{\mathbb{U}}\left(\boldsymbol v\right)$ satisfies strict versions of the interlacing inequalities. This is readily clear from Figure \ref{fig:unitary_bijection_action}, since as we move from one column of $\boldsymbol v$ to the next we add a successively larger positive integer.
\item $\mathcal{B}_{\left(k;q\right)}^{\mathbb{U}}\left(\boldsymbol v\right)$ satisfies the sum-constraints. We simply work through the algebra. The sum constraints satisfied by $\boldsymbol v \in \mathcal{UP}_{(k;q;N)}$ are:
\begin{align*}
 \left| \boldsymbol v^{\left(2ql\right)}\right| = \frac{N}{2}\left(kq-\left|kq-2ql\right|\right), \ \ \textnormal{ for }   l = 1,\ldots,k-1.
\end{align*}
 Then, we can easily compute using (\ref{UbijectionDef}), for $l = 1,\dots,k-1$:
\begin{align*}
   \left|\mathcal{B}_{\left(k;q\right)}^{\mathbb{U}}\left(\boldsymbol v\right)^{\left(2ql\right)}\right| &= \frac{N}{2}\left(kq-\left|kq-2ql\right|\right) + \sum_{i=1}^{kq-\left|kq-2ql\right|}\left(kq+1+kq-\left|kq-2ql\right|-2i\right)  \\
   &= \frac{\left(N+2kq\right)}{2}\left(kq-\left|kq-2ql\right|\right), 
\end{align*}
as required.
\end{enumerate}
Thus, $\mathcal{B}_{\left(k;q\right)}^{\mathbb{U}}$ is well-defined. 

Now, to see that it is injective just observe that it is injective in each element. It remains to show that it is surjective. Take an arbitrary $\boldsymbol u \in \mathcal{UP}_{\left(k;q,N+2kq\right)}^{\neq}$. Then, let $\boldsymbol{v} $ be defined element-wise by the formula:
\begin{align}\label{InverseElementDefU}
v_i^{(j)} = u_i^{\left(j\right)}-2kq-1+\left|kq-j\right|+2i.  
\end{align}
It is clear that $\mathcal{B}_{\left(k;q\right)}^{\mathbb{U}}\left(\boldsymbol v\right) = \boldsymbol u$ so we  only need to show that $\boldsymbol v \in \mathcal{UP}_{\left(k;q;N\right)}$. To do this we need to check the following. 

\begin{enumerate}
    \item $\boldsymbol v$ has entries which are in $\{0,\ldots,N\}$. This can be seen as follows. Due to the strict inequalities in the first condition in the definition of $\mathcal{UP}_{\left(k;q;N+2kq\right)}^{\neq}$ the leftmost entry of $\boldsymbol u$ is bounded from below by $1$. Then, inductively, by the strict interlacing, the entries in the $m$-th column of $\boldsymbol u$ are bounded below by $m$. From the definition (\ref{InverseElementDefU}), see Figure \ref{fig:unitary_bijection_action} for an illustration, this ensures that the entries of $\boldsymbol v$ are bounded from below by $0$. 
    
    Similarly, due to the strict inequalities in the first condition in the definition of $\mathcal{UP}_{\left(k;q;N+2kq\right)}^{\neq}$ the rightmost entry of $\boldsymbol u$ is bounded from above by $N+2kq-1$, Again, inductively, by the strict interlacing, the entries in the $m$-th column of $\boldsymbol u$ are bounded above by $N+m$. From the definition (\ref{InverseElementDefU}), see Figure \ref{fig:unitary_bijection_action} for an illustration, this ensures that the entries of $\boldsymbol v$ are bounded from above by $N$. 
    \item $\boldsymbol v$ satisfies  the (not necessarily strict) interlacing inequalities. This is a consequence of the definition (\ref{InverseElementDefU}), see Figure \ref{fig:unitary_bijection_action} for an illustration, along with the trivial fact that if $a,b \in \mathbb{Z}$ are such that $a > b,$ then for any $c \in \mathbb{N}$,
    $a-c-1 \geq b-c$ (while recalling that $\boldsymbol u$ satisfies strict interlacing).
    
    \item $\boldsymbol v$ satisfies the sum-constraints.  Since $\boldsymbol u$ satisfies the sum constraints we have:
    \begin{align*}
      \left|\boldsymbol u^{(2ql)}\right| = \frac{\left(N+2kq\right)}{2}\left(kq-\left|kq-2ql\right|\right), \ \ \textnormal{ for } l = 1,\dots,k-1.
    \end{align*}
Then, using (\ref{InverseElementDefU}), we can compute, for $l=1,\dots,k-1$:
\begin{align*}
    \left| \boldsymbol v^{\left(2ql\right)}\right| &= \frac{\left(N+2kq\right)}{2}\left(kq-\left|kq-2ql\right|\right) - \sum_{i=1}^{kq-\left|kq-2ql\right|}\left(kq+1+\left(kq-\left|kq-2ql\right|\right)-2i\right)\\ &= \frac{N}{2}\left(kq-\left|kq-2ql\right|\right),
\end{align*}
as required.

\end{enumerate}
\end{proof}

\begin{figure}
\centering
\begin{tikzpicture}[scale = 0.85]

\fill (0,0.5) circle (3pt) node[below right] {\tiny $v_1^{(1)}$} node[above right] {\tiny+kq};
\fill (0,7.5) circle (3pt) node[above right] {\tiny $v_1^{(2kq-1)}$} node[below right] {\tiny+kq};
\fill (0,2) circle (3pt) node[right] {\tiny+kq};
\fill (0,4) circle (3pt) node[right] {\tiny+kq};
\fill (0,6) circle (3pt) node[right] {\tiny+kq};

\fill (2,2) circle (3pt) node[below right] {\tiny$v_1^{(kq-2)}$} node[above left] {\tiny+2kq-3};
\fill (2,6) circle (3pt) node[above right] {\tiny$v_1^{(kq+2)}$} node[below left] {\tiny+2kq-3};
\fill (2,4) circle (3pt) node[above left] {\tiny+2kq-3}; 

\fill (3,3) circle (3pt) node[below right] {\tiny$v_1^{(kq-1)}$} node[below left] {\tiny+2kq-2};
\fill (3,5) circle (3pt) node[above right] {\tiny$v_1^{(kq+1)}$} node[above left] {\tiny+2kq-2};

\fill (4,4) circle (3pt) node[below=0.1cm] {\tiny$v_1^{(kq)}$} node[left] {\tiny+2kq-1};

\fill (-2,2) circle (3pt) node[below left] {\tiny$v_{kq-2}^{(kq-2)}$} node[right] {\tiny+3};
\fill (-2,6) circle (3pt) node[above left] {\tiny$v_{kq+2}^{(kq+2)}$} node[right] {\tiny+3};
\fill (-2,4) circle (3pt) node[right] {\tiny+3}; 

\fill (-3,3) circle (3pt) node[below left] {\tiny$v_{kq-1}^{(kq-1)}$} node[right] {\tiny+2};
\fill (-3,5) circle (3pt) node[above left] {\tiny$v_{kq+1}^{(kq+1)}$} node[right] {\tiny+2};

\fill (-4,4) circle (3pt) node[below=0.1cm] {\tiny$v_{kq}^{(kq)}$} node[right] {\tiny+1};
\node (a) at (0.2,1.4) {$\vdots$};
\node (b) at (0.2,3.1) {$\vdots$};
\node (c) at (0.2,5.1) {$\vdots$};
\node (d) at (0.2,6.8) {$\vdots$};
\node (e) at (-1,2) {$\ldots$};
\node (f) at (-1,4) {$\ldots$};
\node (g) at (-1,6) {$\ldots$};
\node (h) at (1,2) {$\ldots$};
\node (i) at (1,4) {$\ldots$};
\node (j) at (1,6) {$\ldots$};
\node (k) at (1,5) {$\ldots$};
\node (l) at (-1,5) {$\ldots$};
\node (m) at (1,3) {$\ldots$};
\node (n) at (-1,3) {$\ldots$};

\node[rotate=-37] (aa) at (-1,1.25) {$\ldots$};
\node[rotate=-37] (ab) at (1,6.75) {$\ldots$};
\node[rotate=37] (ac) at (1,1.25) {$\ldots$};
\node[rotate=37] (ad) at (-1,6.75) {$\ldots$};

%lines
\draw[blue,style=dashed] (-3,3)--(-3,5);
\draw[blue,style=dashed] (-2,2)--(-2,4)--(-2,6);
\draw[blue,style=dashed] (0,.5) -- (0,2) -- (0,4) -- (0,6) -- (0,7.5);
\draw[blue,style=dashed] (2,2)--(2,4)--(2,6);
\draw[blue,style=dashed] (3,3)--(3,5);

%text 
\node (ta) at (-5.5,4) {\small $1^{st}$ column}; 
\node (tb) at (-5.5,4.7) {\small $2^{nd}$ column}; 
\node (tc) at (-5.5,5.7) {\small $3^{rd}$ column}; 
\node (td) at (-5.5,7.3) {\small $kq^{th}$ column}; 
\node (te) at (5.8,5.7) {\small $(2kq-3)^{th}$ column};
\node (tf) at (5.8,4.7) {\small $(2kq-2)^{th}$ column};
\node (tg) at (5.8,4) {\small $(2kq-1)^{th}$ column};

\draw[red] (ta)--(-4,4);
\draw[red] (tb)--(-3,4.7);
\draw[red] (tc)--(-2,5.7);
\draw[red] (td)--(0,7.3);
\draw[red] (te)--(2,5.7);
\draw[red] (tf)--(3,4.7);
\draw[red] (tg)--(4,4);
\end{tikzpicture}
\captionsetup{justification=raggedright,singlelinecheck=off}
\caption{This illustrates how $ \mathcal{B}^{\mathbb{U}}_{\left(k;q\right)}$, defined element-wise in (\ref{UbijectionDef}), acts on each element. Observe how elements in the same column are all treated the same. When going in the backwards direction, namely in equation (\ref{InverseElementDefU}), each $+$ is replaced by a $-$ everywhere.}
\label{fig:unitary_bijection_action}
\end{figure}
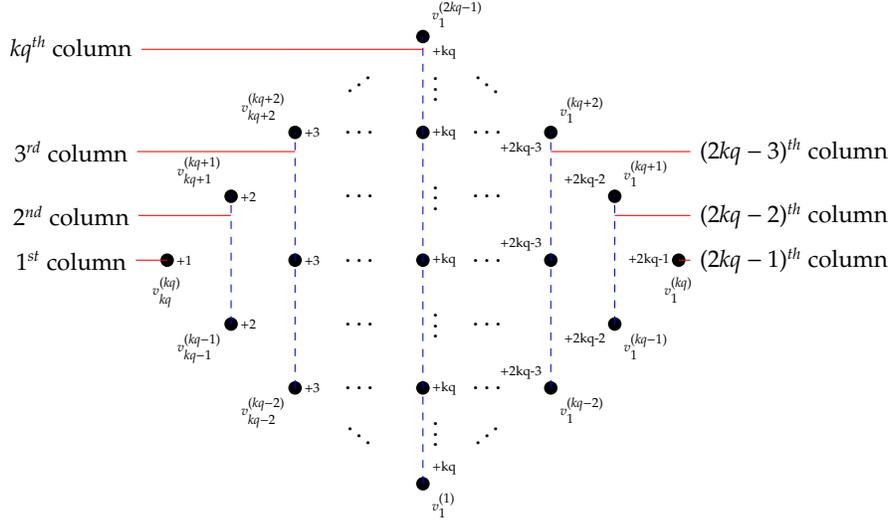

\begin{proof}[Proof of the symmetry property in Theorem \ref{MainThmU}]
We construct a bijection between the lattice points in $N\mathcal{V}_{\left(k;q\right)}^{\mathbb{U}}$ and those in $\left(N+2kq\right)\textnormal{int}\left(\mathcal{V}^{\mathbb{U}}_{\left(k;q\right)}\right)$. Since each lattice point encodes a pattern it is sufficient to construct a bijection between the sets of patterns $\mathcal{UP}_{\left(k;q;N\right)}$ and $\mathcal{UP}_{\left(k;q;N+2kq\right)}^{\neq}$ respectively. Recall that in Proposition \ref{unitary_bijection} we considered the following map defined element-wise by:
\begin{align*}
 \mathcal{B}^{\mathbb{U}}_{\left(k;q\right)}: \mathcal{UP}_{\left(k;q;N\right)} &\rightarrow \mathcal{UP}_{\left(k;q;N+2kq\right)}^{\neq}, \\
 \mathcal{B}_{\left(k;q\right)}^{\mathbb{U}}\left(\boldsymbol v\right)_{i}^{\left(j\right)} &= v_i^{\left(j\right)}+2kq+1-\left|kq-j\right|-2i,
\end{align*}
and we showed that it is a bijection. It being a bijection implies, recalling the reciprocity (\ref{reciprocityunitary}), that for any $N \in \mathbb{Z}_{>2kq}$:

$$\mathsf{P}_{\left(k;q\right)}^{\mathbb{U}}\left(-N\right) = \left(-1\right)^{k^2q^2-\left(k-1\right)}\overline{\mathcal{L}}\left(\mathcal{V}_{\left(k;q\right)}^{\mathbb{U}},N\right)= \left(-1\right)^{k^2q^2-\left(k-1\right)}\mathsf{P}_{\left(k;q\right)}^{\mathbb{U}}\left(N-2kq\right).$$
Writing $N = kq+s$ with $s \in \mathbb{Z}_{>kq}$ this gives:
\begin{align}\label{eqsymU}
\mathsf{P}_{\left(k;q\right)}^{\mathbb{U}}\left(-kq-s\right) = \left(-1\right)^{k^2q^2-\left(k-1\right)}\mathsf{P}_{\left(k;q\right)}^{\mathbb{U}}\left(kq+s-2kq\right)=\left(-1\right)^{k^2q^2-\left(k-1\right)}\mathsf{P}_{\left(k;q\right)}^{\mathbb{U}}\left(-kq+s\right).    
\end{align}
If $k^2q^2-\left(k-1\right)$ is even consider the following polynomial in $s$:
$$g_{\left(k;q\right)}^{\mathbb{U}}\left(s\right) = \mathsf{P}_{\left(k;q\right)}^{\mathbb{U}}\left(-kq-s\right)-\mathsf{P}_{\left(k;q\right)}^{\mathbb{U}}\left(-kq+s\right),$$
while if $k^2q^2-\left(k-1\right)$ is odd consider the polynomial:
$$g_{\left(k;q\right)}^{\mathbb{U}}\left(s\right) = \mathsf{P}_{\left(k;q\right)}^{\mathbb{U}}\left(-kq-s\right)+\mathsf{P}_{\left(k;q\right)}^{\mathbb{U}}\left(-kq+s\right).$$
In either case, notice that by (\ref{eqsymU}), $g_{\left(k;q\right)}^{\mathbb{U}}\left(s\right)$ is a polynomial that vanishes at infinitely many integer points and is thus identically zero. Hence, we have:

$$\mathsf{P}_{\left(k;q\right)}^{\mathbb{U}}\left(-kq-s\right) = \left(-1\right)^{k^2q^2-\left(k-1\right)}\mathsf{P}_{\left(k;q\right)}^{\mathbb{U}}\left(-kq+s\right), \ \ \forall s \in \mathbb{C},$$
and this concludes the proof.
\end{proof}

\section{Proofs for the symplectic group}

We begin by quickly establishing the equality of $\mathsf{P}_{\left(k;q\right)}^{\mathbb{SP}}\left(\cdot\right)$ and $\mathcal{L}\left(\mathcal{V}_{\left(k;q\right)}^{\mathbb{SP}},\cdot\right)$.

\begin{prop}
\label{symplectic_agreement}
Let $k,q\in \mathbb{N}$. Then for all $N \in \mathbb{N}$ we have:

$$\mathsf{P}_{\left(k;q\right)}^{\mathbb{SP}}\left(N\right) = \mathcal{L}\left(\mathcal{V}^{\mathbb{SP}}_{\left(k;q\right)},N\right).$$

\end{prop}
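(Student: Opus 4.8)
The plan is to follow verbatim the strategy used for the unitary analogue in the proof of Proposition \ref{mainProU}. The argument has two steps: first establish the identity on $\mathbb{N}$, then extend it to all of $\mathbb{Z}$ by a quasi-polynomiality argument.

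For the first step I would combine Proposition \ref{PropLatticePointsConnSp}, which gives $\textnormal{MoM}_{\mathbb{SP}(2N)}(k;q) = \#\bigl(\mathbb{Z}^{kq(2kq+1)-k} \cap N\mathcal{V}_{(k;q)}^{\mathbb{SP}}\bigr)$ for every $N \in \mathbb{N}$, with Lemma \ref{SP_convex_polytope}, which says that $\mathcal{V}_{(k;q)}^{\mathbb{SP}}$ is a rational convex polytope, and with Theorem \ref{Ehrhart_polynomials}. The latter identifies the right-hand side with $\mathcal{L}(\mathcal{V}_{(k;q)}^{\mathbb{SP}}, N)$ and moreover guarantees that $N \mapsto \mathcal{L}(\mathcal{V}_{(k;q)}^{\mathbb{SP}}, N)$ is a quasi-polynomial. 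Since $\mathsf{P}_{(k;q)}^{\mathbb{SP}}(N) = \textnormal{MoM}_{\mathbb{SP}(2N)}(k;q)$ for all $N \in \mathbb{N}$ by Theorem \ref{PolyThm}, this yields $\mathsf{P}_{(k;q)}^{\mathbb{SP}}(N) = \mathcal{L}(\mathcal{V}_{(k;q)}^{\mathbb{SP}}, N)$ for all $N \in \mathbb{N}$.

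For the second step I would pass to the unique quasi-polynomial extension of $\mathcal{L}(\mathcal{V}_{(k;q)}^{\mathbb{SP}}, \cdot)$ to $\mathbb{Z}$ furnished by Lemma \ref{vanishing_quasi} (this is exactly the extension discussed just after Theorem \ref{Ehrhart_polynomials}), and consider the function $h(N) = \mathsf{P}_{(k;q)}^{\mathbb{SP}}(N) - \mathcal{L}(\mathcal{V}_{(k;q)}^{\mathbb{SP}}, N)$ on $\mathbb{Z}$. As a difference of a polynomial and a quasi-polynomial, $h$ is a quasi-polynomial, and by the first step it vanishes on all of $\mathbb{N}$; hence Lemma \ref{vanishing_quasi} forces $h \equiv 0$ on $\mathbb{Z}$, which is the assertion.

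I do not anticipate any real obstacle: the proof is purely formal once the lattice-point representation (Proposition \ref{PropLatticePointsConnSp}) and the polynomiality of $\mathsf{P}_{(k;q)}^{\mathbb{SP}}$ (Theorem \ref{PolyThm}) are available. The only subtlety worth a remark is that $\mathcal{V}_{(k;q)}^{\mathbb{SP}}$ is in general not an integral polytope (Proposition \ref{NonIntegralSymplectic}), so Theorem \ref{Ehrhart_polynomials} only delivers a quasi-polynomial rather than a polynomial; but this is harmless here, because quasi-polynomiality is all that Lemma \ref{vanishing_quasi} requires, and the genuine polynomiality of the counting function is then recovered a posteriori from Theorem \ref{PolyThm}.
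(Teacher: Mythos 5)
Your proof is correct and follows essentially the same route as the paper: combine Proposition \ref{PropLatticePointsConnSp}, Lemma \ref{SP_convex_polytope}, Theorem \ref{Ehrhart_polynomials} and Theorem \ref{PolyThm} to get equality on $\mathbb{N}$, then invoke Lemma \ref{vanishing_quasi} to extend to $\mathbb{Z}$. The only cosmetic difference is that you spell out the application of Lemma \ref{vanishing_quasi} via the difference $h(N)=\mathsf{P}_{(k;q)}^{\mathbb{SP}}(N)-\mathcal{L}(\mathcal{V}_{(k;q)}^{\mathbb{SP}},N)$, which the paper leaves implicit.
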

\begin{proof}
First, observe that by combining Proposition \ref{PropLatticePointsConnSp}, Lemma \ref{SP_convex_polytope} and Theorem \ref{Ehrhart_polynomials} we have for all $N \in \mathbb{N}$:
$$\textnormal{MoM}_{\mathbb{SP}(2N)}\left(k;q\right) = \mathcal{L}\left(\mathcal{V}^{\mathbb{SP}}_{\left(k;q\right)},N\right).$$ Then, applying Lemma \ref{vanishing_quasi} gives the statement of the proposition. 
\end{proof}

\subsection{Classification of the integer roots}

We will now leverage Theorem \ref{ehrhart_macdonald} to evaluate the moments of moments extensions at negative integers, enabling us to find the roots.

\begin{proof}[Proof of the classification of integers roots in Theorem \ref{MainThmSp}]
Let $k,q\in \mathbb{N}$. By Proposition \ref{symplectic_agreement} and the reciprocity Theorem \ref{ehrhart_macdonald}, we have that for any $N \in \mathbb{N}$:
\begin{align}\label{reciprocitysymplectic}\mathsf{P}_{\left(k;q\right)}^{\mathbb{SP}}\left(-N\right) = \left(-1\right)^{kq\left(2kq+1\right)-k}\overline{\mathcal{L}}\left(\mathcal{V}_{\left(k;q\right)}^{\mathbb{SP}},N\right). \end{align} Observe that the dilate of the interior of $\mathcal{V}_{(k;q)}^{\mathbb{SP}}$, $N\textnormal{int}\left(\mathcal{V}_{(k;q)}^{\mathbb{SP}}\right)$,  is given by the conditions:

\begin{enumerate}
    \item For $(i,j) \in \mathcal{S}_{(k;q)}^{\mathbb{SP}}$ we have $0 < v_i^{(j)} < N$.
    \item We additionally define the following $v_i^{(j)}$, namely for $(i,j) \in \mathcal{D}_{(k;q)}^{\mathbb{SP}}$: \begin{align*}
& v_{\frac{n}{2}}^{\left(n\right)},  \textnormal{ for }n=4q,8q,\ldots,4\Bigg\lfloor\frac{k}{2}\Bigg \rfloor q,\\
 & v_{\frac{4kq-n}{2}}^{\left(n\right)},  \textnormal{ for } n = 4\left(\left\lfloor \frac{k}{2} \right\rfloor+1\right)q, 4\left(\left\lfloor \frac{k}{2} \right\rfloor+2\right)q, \ldots,4\left(k-1\right)q, \\  
 & v_1^{\left(4kq-1\right)},
\end{align*} as the solutions to (\ref{symplectic_low_constraints}), (\ref{symplectic_high_constraints}), (\ref{symplectic_odd_constraint}) and require that:
\begin{align*}
& 0 < v_{\frac{n}{2}}^{\left(n\right)} < N,  \textnormal{ for }n=4q,8q,\ldots,4\left\lfloor\frac{k}{2}\right \rfloor q,\\
 & 0 < v_{\frac{4kq-n}{2}}^{\left(n\right)} < N,  \textnormal{ for } n = 4\left(\left\lfloor \frac{k}{2} \right\rfloor+1\right)q,4\left(\left\lfloor \frac{k}{2} \right\rfloor+2\right)q,\ldots,4\left(k-1\right)q, \\  
 & 0 < v_1^{\left(4kq-1\right)} < N.
 \end{align*}

 \item Both $\left(\boldsymbol{v}^{\left(j\right)}\right)_{j=1}^{2kq}$ and $\left(\boldsymbol v^{\left(4kq-j\right)}\right)_{j=1}^{2kq}$ form strict continuous symplectic Gelfand-Tsetlin patterns.
\end{enumerate} 

Let $N \in \{1,\ldots,2kq\}$ and suppose $\boldsymbol v \in N\textnormal{int}\left(\mathcal{V}^{\mathbb{SP}}_{\left(k;q\right)}\right)\cap \mathbb{Z}^{kq(2kq+1)-k}$. Consider the corresponding strict pattern. The key observation, see Figure \ref{fig:sym_root} for an illustration, is that:

$$v_1^{\left(1\right)} < v_1^{\left(2\right)} < \cdots < v_1^{\left(2kq\right)},$$
so in particular, these $2kq$ entries are all distinct. However, because of the first point in the characterisation of the interior we also know that all entries are in $\{1,\ldots,N-1\}$ which has $N-1$ elements. When $N \in \left\{1,\dots,2kq\right\}$ observe that this gives a contradiction by the pigeonhole principle and thus there is no such lattice point. Then, from (\ref{reciprocitysymplectic}) we obtain that for such $N$, $\mathsf{P}^{\mathbb{SP}}\left(-N\right) = 0.$

Finally, to see that there are no other integer roots we argue as follows. First observe that for $N \in \mathbb{Z}_{\geq 0}$ the origin lies in $N\mathcal{V}_{(k;q)}^{\mathbb{SP}}$ so that $\mathsf{P}_{(k;q)}^{\mathbb{SP}}(N) \geq 1$ and then apply the symplectic symmetry result (\ref{symplectic_symmetry}) which is proven in the following subsection.
\end{proof}

\begin{figure}[H]
\centering
\begin{tikzpicture}[scale =0.75]
\fill (0,2.4) circle (3pt);
\fill (0,4) circle (3pt);
\fill (0,-2.4) circle (3pt);
\fill[blue] (0,-4) circle (3pt) node[below right] {$v_1^{(1)}$};
%\fill (0.8,0) circle (2pt); 
\fill (0.8,1.6) circle (3pt); 
\fill (0.8,3.2) circle (3pt);
%\fill (0.8,-1.6) circle (3pt);
\fill[blue] (0.8,-3.2) circle (3pt) node[below right] {$v_1^{(2)}$};
\fill (1.6,2.4) circle (3pt);
\fill[blue] (1.6,-2.4) circle (3pt) node[below right] {$v_1^{(3)}$};
%\fill (2.4,1.6) circle (2pt);
\fill (2.4,0) circle (3pt);
\fill (3.6,0.4) circle (3pt);
\fill (0.8,-1.6) circle (3pt);
\fill[blue] (3.6,-0.4) circle (3pt) node[below right] {$v_1^{(2kq-1)}$};
\fill[blue] (4,0) circle (3pt) node[right=0.2cm] {$v_1^{(2kq)}$};
%\node (a) at (0,0.96) {\vdots};
%\node (b) at (0,-0.96) {\vdots};
%\node (c) at (1.6,0.96) {\vdots};
%\node (d) at (1.6,-0.96) {\vdots};
\node (h) at (0.8,1) {$\vdots$};
\node (h) at (0.8,-1) {$\vdots$};
\node (a) at (0,1.5) {$\vdots$};
\node (b) at (0,-1.5) {$\vdots$};
\node (b1) at (0,0) {$\vdots$};
\node (c) at (1.6,0) {$\vdots$};
\node (b2) at (0.8,0) {$\vdots$};
\node[rotate=-45] (e) at (3.2,0.8) {$\ldots$};

\node[rotate=-45] (e) at (2.4,1.6) {$\ldots$};
\node[rotate=45] (l) at (2.4,-1.4) {$\ldots$};
%\node (g) at (3.2,0) {\ldots};
\node[rotate=-45] (aa) at (1.6,0.8) {$\ldots$};
\node[rotate=45] (ab) at (1.6,-0.8) {$\ldots$};
%line
\draw[red,thick,->] (0.063,-3.937)--(0.737,-3.263);
\draw[red,thick,->] (0.863,-3.137)--(1.537,-2.463);
\draw[red,thick,->] (1.663,-2.337)--(3.537,-0.463);
\draw[red,thick,->] (3.663,-0.337)--(3.937,-.063);

\end{tikzpicture}
\captionsetup{justification=raggedright,singlelinecheck=off}
\caption{With $\boldsymbol{v} \in N\textnormal{int}\left(\mathcal{V}^{\mathbb{SP}}_{(k;q)}\right)$ the arrows define a path along which elements are strictly increasing. Thus, we can observe the labelled elements along the path to be distinct.}
\label{fig:sym_root}
\end{figure}
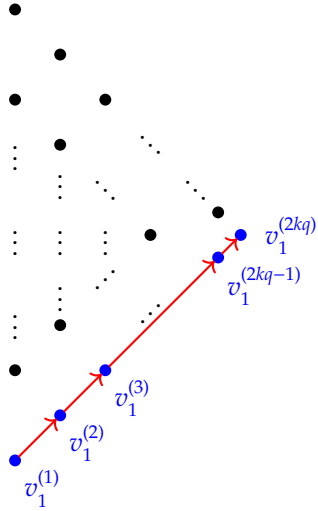

\subsection{The symplectic symmetry}

Next we prove the symmetry result for the symplectic case. As in the unitary case, the proof proceeds by defining an explicit bijection between the lattice points in a dilation of the interior of $\mathcal{V}_{\left(k;q\right)}^{\mathbb{SP}}$ and the lattice points in a less dilated version of the polytope $\mathcal{V}_{\left(k;q\right)}^{\mathbb{SP}}$ itself.

\begin{prop}
\label{symplectic_bijection}
Let $k,q,N \in \mathbb{N}$. Then, the map $ \mathcal{B}^{\mathbb{SP}}_{\left(k;q\right)}$ defined element-wise by:
\begin{align}
 \mathcal{B}^{\mathbb{SP}}_{\left(k;q\right)}: \mathcal{SP}_{\left(k;q;N\right)} &\rightarrow \mathcal{SP}_{\left(k;q,N+2kq+1\right)}^{\neq},\nonumber \\
 \mathcal{B}_{\left(k;q\right)}^{\mathbb{SP}}\left(\boldsymbol v\right)_{i}^{\left(j\right)} &= v_i^{\left(j\right)}+2kq+2-2i-\left|kq-j\right|, \label{SymplecticBijDef}
\end{align}
 is a bijection.
\end{prop}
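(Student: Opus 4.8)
The strategy mirrors the unitary case in Proposition \ref{unitary_bijection} exactly: show that the affine shift $\mathcal{B}^{\mathbb{SP}}_{(k;q)}$ defined in (\ref{SymplecticBijDef}) maps $\mathcal{SP}_{(k;q;N)}$ into $\mathcal{SP}^{\neq}_{(k;q;N+2kq+1)}$, that it is injective because it is injective coordinate-by-coordinate, and that it is surjective by exhibiting the explicit inverse
\begin{align}\label{SymplecticInverseDef}
v_i^{(j)} = u_i^{(j)} - 2kq - 2 + 2i + \left|kq-j\right|.
\end{align}
So the real content is the well-definedness check in both directions, and the only genuinely new difficulty compared with the unitary case is verifying the $k$ sum constraints (\ref{symplectic_low_constraints}), (\ref{symplectic_high_constraints}), (\ref{symplectic_odd_constraint}), since these are second-difference expressions $|\boldsymbol p^{(2j)}| - 2|\boldsymbol p^{(2j-1)}| + |\boldsymbol p^{(2j-2)}|$ rather than the single row-sums of the unitary setting.

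First I would check the range and interlacing conditions. Entries of $\boldsymbol v$ lie in $\{0,\dots,N\}$; the shift $2kq+2-2i-|kq-j|$ added at position $(i,j)$ lies in $\{1,\dots,2kq\}$ (the extremes being attained at the two ends of the outermost row and at the apex), so $\mathcal{B}^{\mathbb{SP}}_{(k;q)}(\boldsymbol v)$ has entries in $\{1,\dots,N+2kq\} \subset \{1,\dots,N+2kq\}$, and strict interlacing holds because moving one step along any interlacing chain (from row $j$ to row $j\pm1$, changing the column index appropriately) the added constant strictly increases — this is the exact analogue of Figure \ref{fig:unitary_bijection_action} and I would draw the corresponding picture. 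For the inverse map (\ref{SymplecticInverseDef}) the argument runs the other way: strict interlacing of $\boldsymbol u$ forces the $m$-th ``column'' entries of $\boldsymbol u$ to be at least $m$ and at most $N+2kq+1-(\text{something})$, and subtracting the appropriate constant lands the entries of $\boldsymbol v$ in $\{0,\dots,N\}$; ordinary (non-strict) interlacing of $\boldsymbol v$ then follows from the elementary fact that $a > b$, $a,b \in \mathbb{Z}$, $c \in \mathbb{N}$ implies $a - c - 1 \ge b - c$, used as in the unitary proof.

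The main obstacle — and the step I would spend the most care on — is the sum constraints. Here I would use the crucial cancellation already visible in the proof of Proposition \ref{NonIntegralSymplectic}: if $\phi_i^{(j)}$ denotes the constant added to entry $(i,j)$, namely $\phi_i^{(j)} = 2kq+2-2i-|kq-j|$, then for each row $j$ the increment to the row-sum $|\boldsymbol p^{(j)}|$ is a fixed number $c_j$ depending only on $j$ (and on $k,q$), explicitly $c_j = \sum_{i=1}^{r(j)} \phi_i^{(j)}$ where $r(j) = kq - \lfloor |2kq-j|/2 \rfloor$ is the length of row $j$. Then each bracketed second difference $|\boldsymbol p^{(2j)}| - 2|\boldsymbol p^{(2j-1)}| + |\boldsymbol p^{(2j-2)}|$ is shifted by $c_{2j} - 2c_{2j-1} + c_{2j-2}$, and one computes this explicitly. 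I expect — as in the non-integrality proof — that for the indices $j$ appearing in the interior of each constraint block these second differences of the $c_j$'s either vanish or telescope, so that both sides of each sum constraint (\ref{symplectic_low_constraints})--(\ref{symplectic_odd_constraint}) receive the same total shift; the left- and right-hand blocks of each constraint have matching structure (each a sum of $q$ consecutive second-difference terms, just offset by $q$), so the net effect on the difference of the two sides is zero. I would verify this first for $k$ even (constraints (\ref{symplectic_low_constraints}), (\ref{symplectic_high_constraints})) and then note the odd-$k$ constraint (\ref{symplectic_odd_constraint}) is handled identically since it has the same block structure pairing a low-index sum against a high-index sum. The inverse direction is the same computation with every $+\phi_i^{(j)}$ replaced by $-\phi_i^{(j)}$, so the constant shifts simply negate and the constraints are again preserved. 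Once well-definedness in both directions is established, $\mathcal{B}^{\mathbb{SP}}_{(k;q)}(\boldsymbol v) = \boldsymbol u$ is immediate from comparing (\ref{SymplecticBijDef}) and (\ref{SymplecticInverseDef}), completing the proof that the map is a bijection.
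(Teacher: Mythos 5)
Your proposal follows the same overall strategy as the paper: the same affine shift, the same explicit inverse, injectivity coordinate-by-coordinate, and reduction of the sum-constraint verification to the claim that each bracketed second difference $|\boldsymbol p^{(2j)}|-2|\boldsymbol p^{(2j-1)}|+|\boldsymbol p^{(2j-2)}|$ changes by a constant depending only on $j$. The one place you hedge — ``I expect \dots\ these second differences of the $c_j$'s either vanish or telescope'' — is precisely the calculation the paper actually carries out, and the answer is the cleaner of your two alternatives: each bracket is \emph{exactly} invariant, with no telescoping and hence no boundary-term bookkeeping needed. Concretely, reading off the shifts column-by-column (as in Figure~\ref{fig:symplectic_bijection_action}), row $2j$ of length $j$ receives $\{2,4,\dots,2j\}$, row $2j-1$ of length $j$ receives $\{1,3,\dots,2j-1\}$, and row $2j-2$ of length $j-1$ receives $\{2,4,\dots,2j-2\}$, so
\[
c_{2j}-2c_{2j-1}+c_{2j-2}=\sum_{l=1}^{j}2l-2\sum_{l=1}^{j}(2l-1)+\sum_{l=1}^{j-1}2l = j(j+1)-2j^{2}+j(j-1)=0,
\]
and by the same computation the rows $4kq-2j$, $4kq-2j+1$, $4kq-2j+2$ receive identical shift multisets, so the upper-half brackets in (\ref{symplectic_high_constraints}) and (\ref{symplectic_odd_constraint}) are invariant as well. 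You should replace the ``I expect'' with this verification; everything else in your argument matches the paper's proof.
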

\begin{proof} We first check that $\mathcal{B}^{\mathbb{SP}}_{(k;q)}$ is well-defined. Consider an arbitrary $\boldsymbol v \in \mathcal{SP}_{(k;q;N)}$. We need to check the following conditions to ensure that $\mathcal{B}^{\mathbb{SP}}_{(k;q;N)}$ is well-defined, namely that $\mathcal{B}^{\mathbb{SP}}_{(k;q)}(\boldsymbol v) \in \mathcal{SP}^{\neq}_{(k;q;N+2kq+1)}$. For clarity we also include Figure \ref{fig:symplectic_bijection_action} which captures the action of $\mathcal{B}^{\mathbb{SP}}_{(k;q)}$ to each element.

\begin{enumerate}
    \item $\mathcal{B}^{\mathbb{SP}}\left(\boldsymbol v\right)$ has entries which are in $\{1,\ldots,N+2kq\}$. This is immediate since we are starting with entries in $\{0,\ldots,N\}$ and adding elements from $\{1,\ldots,2kq\}$.
    \item $\mathcal{B}^\mathbb{SP}\left(\boldsymbol v\right)$ satisfies strict versions of the interlacing inequalities. This is readily clear from Figure \ref{fig:symplectic_bijection_action}, since as we move from one column of $\boldsymbol v$ to the next we add a successively larger positive integer. 
    \item $\mathcal{B}^{\mathbb{SP}}\left(\boldsymbol v\right)$ satisfies the sum-constraints. Consider an arbitrary term in one of the sum-constraints imposed on $\boldsymbol v$, namely one of the quantities in square brackets in either (\ref{symplectic_low_constraints}),(\ref{symplectic_high_constraints}) or (\ref{symplectic_odd_constraint}), for a fixed $j$. Denote it by $[\cdot]_{\boldsymbol v}$ and denote the corresponding term in the constraints on $\mathcal{B}^{\mathbb{SP}}\left(\boldsymbol{v}\right)$ by $[\cdot]_{\mathcal{B}^{\mathbb{SP}}\left( \boldsymbol{v}\right)}$. Then, by studying the action of $\mathcal{B}^{\mathbb{SP}}$ , see Figure \ref{fig:symplectic_bijection_action} for an illustration, we see that:
    $$[\cdot]_{\mathcal{B}^{\mathbb{SP}}\left(v\right)}-[\cdot]_{v} = \sum_{l=1}^j2l-2\sum_{l=1}^j(2l-1)+\sum_{l=1}^{j-1}2l = 0.$$
    Hence, each term is invariant under $\mathcal{B}^{\mathbb{SP}}$ and so $\mathcal{B}^{\mathbb{SP}}\left(\boldsymbol v\right)$ satisfies the sum-constraints.
\end{enumerate}
Thus, $\mathcal{B}^{\mathbb{SP}}$ is well-defined.

To see that it is injective just observe that it is injective in each element. It remains to show that it is surjective. Take an arbitrary $\boldsymbol u \in \mathcal{SP}_{\left(k;q,N+2kq+1\right)}^{\neq}$ and let $\boldsymbol v$ be defined elementwise by:
\begin{align}\label{InverseSymplecticBijDef}
    v_i^{\left(j\right)} = u_i^{\left(j\right)} -2kq-2+2i+\left|kq-j\right|.
\end{align}
Clearly $\mathcal{B}_{(k;q)}^{\mathbb{SP}}\left(\boldsymbol v\right) = \boldsymbol u$ so all that remains to be verified is that $\boldsymbol v \in \mathcal{SP}_{\left(k;q,N\right)}^{\mathbb{SP}}$. Thus, we need to check the following conditions.

\begin{enumerate}
    \item $\boldsymbol v$ has entries which are in $\{0,\ldots,N\}$. The entries in the leftmost column of $\boldsymbol u$ are bounded from below by $1$ and $1$ is subtracted from them in the construction of $\boldsymbol v$ so the leftmost entries of $\boldsymbol v$ are bounded from below by $0$. More generally, an inductive argument using the strict interlacing as in the unitary case gives that all elements are bounded from below by $0$. Finally, an analogous argument starting from the rightmost entry shows that all the entries of $\boldsymbol v$ are bounded from above by $N$.
    \item The appropriate interlacing is satisfied. This is a direct consequence, in completely analogous fashion to the unitary case, of the strict interlacing of $\boldsymbol u$ .
    \item The terms in the sum constraints are invariant under the construction of $\boldsymbol v$ from $\boldsymbol u$ by a completely analogous argument to that for the forward direction.
\end{enumerate}

Thus, $\boldsymbol v \in \mathcal{SP}_{\left(k;q;N\right)}^{\mathbb{SP}}$ and so $\mathcal{B}^{\mathbb{SP}}$ is bijective which concludes the proof.
\end{proof}

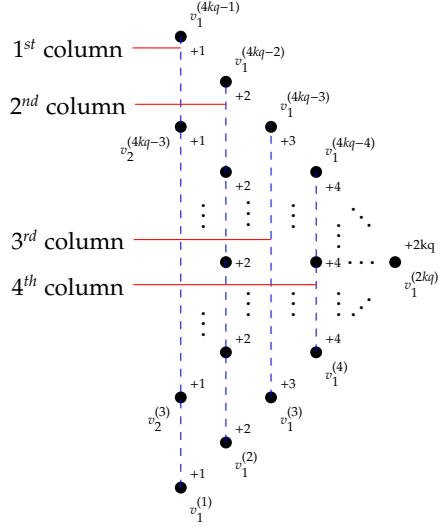
\begin{figure}
\centering
\begin{tikzpicture}[scale =0.75]
\fill (0,2.4) circle (3pt) node[below right] {\tiny +1} node[below left]  {\tiny $v_2^{(4kq-3)}$};
\fill (0,4) circle (3pt) node[below right] {\tiny +1} node[above right] {\tiny$v_1^{(4kq-1)}$};
\fill (0,-2.4) circle (3pt) node[above right] {\tiny+1} node[below left] {\tiny$v_2^{(3)}$};
\fill (0,-4) circle (3pt) node[above right] {\tiny+1} node[below right] {\tiny$v_1^{(1)}$};
\fill (0.8,0) circle (3pt) node[right] {\tiny+2};
\fill (0.8,1.6) circle (3pt) node[below right] {\tiny+2};
\fill (0.8,3.2) circle (3pt) node[below right] {\tiny+2} node[above right] {\tiny$v_1^{(4kq-2)}$};
\fill (0.8,-1.6) circle (3pt) node[above right] {\tiny+2};
\fill (0.8,-3.2) circle (3pt) node[above right] {\tiny+2} node[below right] {\tiny$v_1^{(2)}$};
\fill (1.6,2.4) circle (3pt) node[below right] {\tiny +3} node[above right] {\tiny$v_1^{(4kq-3)}$};
\fill (1.6,-2.4) circle (3pt) node[above right] {\tiny+3} node[below right] {\tiny$v_1^{(3)}$};
\fill (2.4,0) circle (3pt) node[right] {\tiny+4};
\fill (2.4,1.6) circle (3pt) node[below right] {\tiny+4} node[above right] {\tiny$v_1^{(4kq-4)}$};
\fill (2.4,-1.6) circle (3pt) node[above right] {\tiny+4} node[below right] {\tiny$v_1^{(4)}$};
\fill (3.8,0) circle (3pt) node[above right] {\tiny+2kq} node[below right] {\tiny$v_1^{(2kq)}$};
\node (a) at (0.4,0.96) {$\vdots$};
\node (b) at (0.4,-0.96) {$\vdots$};
\node (c) at (2,0.96) {$\vdots$};
\node (d) at (2,-0.6) {$\vdots$};
\node (h) at (1.2,1) {$\vdots$};
\node (h) at (1.2,-0.6) {$\vdots$};
\node (h) at (2.8,0.6) {$\vdots$};
\node (h) at (2.8,-0.6) {$\vdots$};
\node[rotate=-45] (e) at (3.2,0.8) {$\ldots$};
\node[rotate=45] (f) at (3.2,-0.8) {$\ldots$};
\node (g) at (3.2,0) {$\ldots$};
%line
\draw[blue,style=dashed] (0,-4)--(0,-3)--(0,3)--(0,4);
\draw[blue,style=dashed] (0.8,-3.2)--(0.8,-2)--(0.8,2)--(0.8,3.2);
\draw[blue,style=dashed] (1.6,-2.4)--(1.6,2.4);
\draw[blue,style=dashed] (2.4,-1.6)--(2.4,0)--(2.4,1.6);
%text and link
\node (aa) at (-2,3.8) {\small $1^{st}$ column}; 
\node (ab) at (-2,2.8) {\small $2^{nd}$ column}; 
\node (ac) at (-2,0.4) {\small $3^{rd}$ column}; 
\node (ad) at (-2,-0.4) {\small $4^{th}$ column}; 
\draw[red] (aa)--(0,3.8);
\draw[red] (ab)--(0.8,2.8);
\draw[red] (ac)--(1.6,.4);
\draw[red] (ad)--(2.4,-.4);
\end{tikzpicture}
\captionsetup{justification=raggedright,singlelinecheck=off}
\caption{This illustrates how $ \mathcal{B}^{\mathbb{SP}}_{\left(k;q\right)}$, defined element-wise in (\ref{SymplecticBijDef}), acts on each element. Observe how elements in the same column are all treated the same. When going in the backwards direction, namely in equation (\ref{InverseSymplecticBijDef}), each $+$ is replaced by a $-$ everywhere.}
\label{fig:symplectic_bijection_action}
\end{figure}

\begin{proof}[Proof of the symmetry property in Theorem \ref{MainThmSp}] We construct a bijection between the lattice points in  $N\mathcal{V}^{\mathbb{SP}}_{(k;q)}$ and those in $(N+2kq+1)\textnormal{int}\left(\mathcal{V}^{\mathbb{SP}}_{(k;q)}\right)$. Since each lattice point encodes a pattern it is sufficient to construct a bijection between the corresponding sets of patterns $\mathcal{SP}_{\left(k;q;N\right)}$ and $\mathcal{SP}_{\left(k;q;N+2kq+1\right)}^{\neq}$ respectively.
Recall that in Proposition \ref{symplectic_bijection} we considered the following map defined element-wise by:
\begin{align*}
 \mathcal{B}^{\mathbb{SP}}_{\left(k;q\right)}: \mathcal{SP}_{\left(k;q;N\right)} &\rightarrow \mathcal{SP}_{\left(k;q,N+2kq+1\right)}^{\neq}, \\
 \mathcal{B}_{\left(k;q\right)}^{\mathbb{SP}}\left(\boldsymbol v\right)_{i}^{\left(j\right)} &= v_i^{\left(j\right)}+2kq+2-2i-\left|kq-j\right|,
\end{align*}
and we showed that it is a bijection. This gives, also recalling the reciprocity (\ref{reciprocitysymplectic}), that for any $N \in \mathbb{Z}_{>2kq+1}$,
$$\mathsf{P}_{\left(k;q\right)}^{\mathbb{SP}}\left(-N\right) = \left(-1\right)^{kq\left(2kq+1\right)-k}\overline{\mathcal{L}}\left(\mathcal{V}_{\left(k;q\right)},N\right) = \left(-1\right)^{kq\left(2kq+1\right)-k}\mathsf{P}_{\left(k;q\right)}^{\mathbb{SP}}\left(N-2kq-1\right).$$
Writing $N = kq + \frac{1}{2}+s$ where $s\in \mathbb{Z}_{>0}+\frac{1}{2}$ this gives:

$$\mathsf{P}_{\left(k;q\right)}^{\mathbb{SP}}\left(-kq-\frac{1}{2}-s\right) = \left(-1\right)^{kq\left(2kq+1\right)-k}\mathsf{P}_{\left(k;q\right)}^{\mathbb{SP}}\left(-kq-\frac{1}{2}+s\right).$$
Arguing analogously to the unitary case, because $\mathsf{P}_{\left(k;q\right)}^{\mathbb{SP}}$ is a polynomial, this equality extends to all $s\in \mathbb{C}$.

\end{proof}

\bibliographystyle{acm}
\bibliography{referencesMoM}

\bigskip

\noindent{\sc School of Mathematics, University of Edinburgh, James Clerk Maxwell Building, Peter Guthrie Tait Rd, Edinburgh EH9 3FD, U.K.}\newline
\href{mailto:theo.assiotis@ed.ac.uk}{\small theo.assiotis@ed.ac.uk}

\bigskip
\noindent
{\sc School of Mathematics, University of Edinburgh, James Clerk Maxwell Building, Peter Guthrie Tait Rd, Edinburgh EH9 3FD, U.K.}\newline
\href{mailto: e.eriksson-1@sms.ed.ac.uk}{\small  e.eriksson-1@sms.ed.ac.uk}

\bigskip
\noindent
{\sc School of Mathematics, University of Edinburgh, James Clerk Maxwell Building, Peter Guthrie Tait Rd, Edinburgh EH9 3FD, U.K.}\newline
\href{mailto:w.ni-2@sms.ed.ac.uk}{\small w.ni-2@sms.ed.ac.uk}

\end{document}